\newtheorem{theorem}{Theorem}[section]
\newtheorem{corollary}{Corollary}[theorem]
\newtheorem{lemma}[theorem]{Lemma}
\newcommand{\PKU}{Center on Frontiers of Computing Studies, Peking University, Beijing 100871, China}
\newcommand{\PKUCS}{School of Computer Science, Peking University, Beijing 100871, China}
\newcommand{\bytedance}{ByteDance Research, Fangheng Fashion Center, No.~27, North 3rd Ring West Road, Haidian District, Beijing 100098, China}
\newcommand{\HKU}{QICI Quantum Information and Computation Initiative, Department of Computer Science,
The University of Hong Kong, Pokfulam Road, Hong Kong}
\newcommand{\PKUMath}{Beijing International Center for Mathematical Research, Peking University, Beijing, China}
\begin{document}


\title{Digital adiabatic evolution is universally accurate}

\author{Yangyu Lu}
\affiliation{\PKUCS}
\affiliation{\bytedance}
\affiliation{\PKU}

\author{Yifei Huang}
\affiliation{\bytedance}

\author{Dong An}
\email{dongan@pku.edu.cn}
\affiliation{\PKUMath}

\author{Qi Zhao}
\email{zhaoqi@cs.hku.hk}
\affiliation{\HKU}

\author{Dingshun Lv}
\email{lvdingshun@bytedance.com}
\affiliation{\bytedance}

\author{Xiao Yuan}
\email{xiaoyuan@pku.edu.cn}
\affiliation{\PKU}
\affiliation{\PKUCS}

\begin{abstract}

Adiabatic evolution is a central paradigm in quantum physics. Digital simulations of adiabatic processes are generally viewed as costly, since algorithmic errors typically accumulate over the long evolution time, requiring exceptionally deep circuits to maintain accuracy. This work demonstrates that digital adiabatic evolution is intrinsically accurate and robust to simulation errors. We analyze two Hamiltonian simulation methods---Trotterization and generalized quantum signal processing---and prove that the simulation error does not increase with time.
We further show that accurate time-dependent adiabatic evolution can be achieved using only time-independent Hamiltonian-simulation algorithms.
Numerical simulations of molecular systems and linear equations confirm the theory, revealing that digital adiabatic evolution is substantially more efficient than previously assumed. 
Remarkably, our estimation for the first-order Trotterization error  can be $10^6$ times tighter than previous analyses for the transverse field Ising model even with less than 6 qubits.
The findings establish fundamental robustness of digital adiabatic evolution and provide a basis for accurate, efficient implementations on fault-tolerant—and potentially near-term—quantum platforms.

\end{abstract}

\date{\today}


\maketitle



\begin{figure*}
\includegraphics[width=0.9\textwidth]{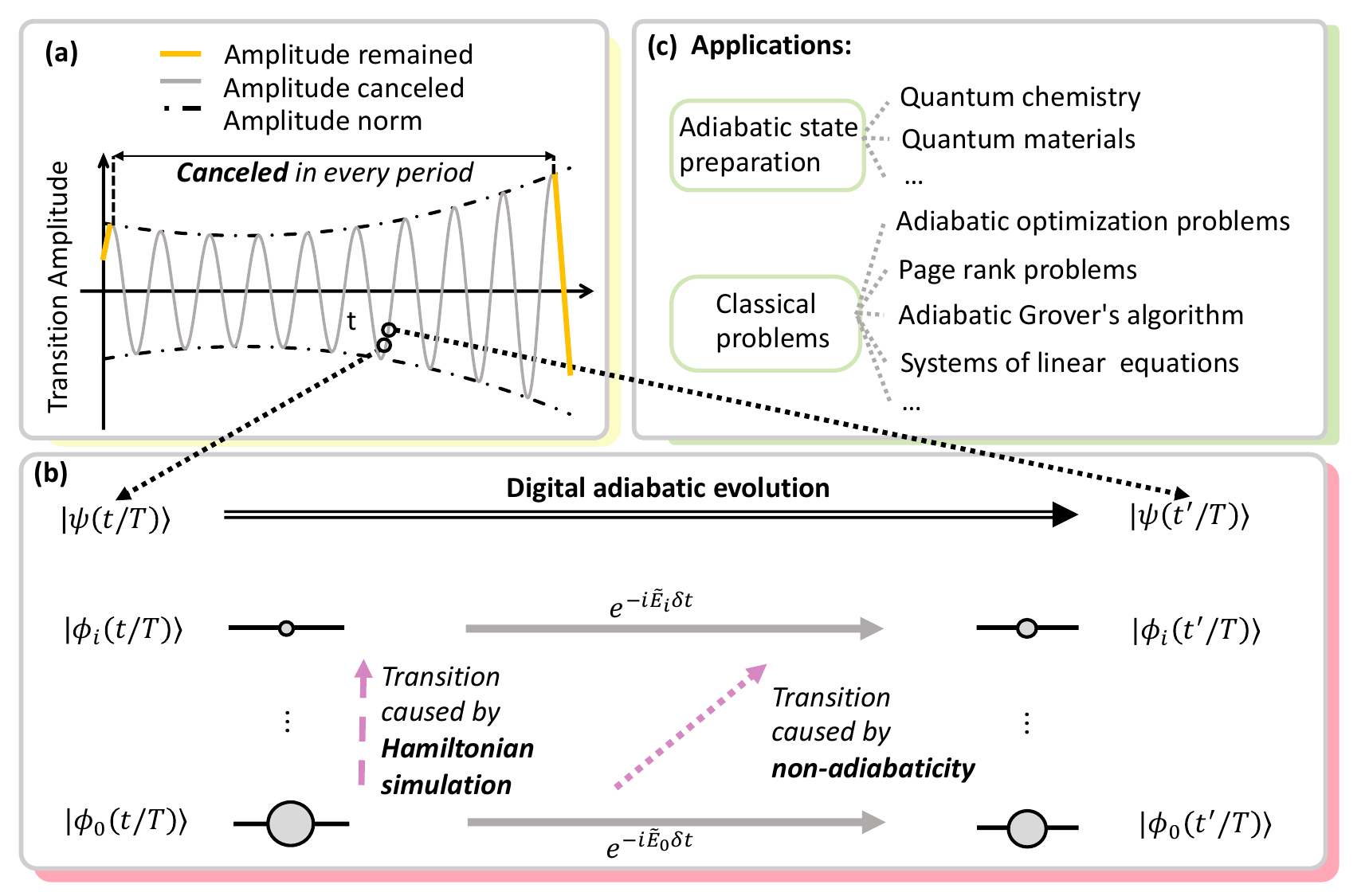}
\caption{
Schematic of the error-cancellation mechanism and relevant applications. (\textbf{a}) Diagram of error cancellation in digital adiabatic evolution. The transition amplitude is bounded by a smooth envelope (black dash-dotted line) and oscillates frequently. As $T \to +\infty$, the integral of the transition amplitude cancels out within each period (grey line), leaving a residual part (orange line) independent of $T$. (\textbf{b}) Adiabatic time evolution from time $t$ to $t' = t + \delta t$. Errors from Hamiltonian simulation (pink dashed arrow) and non-adiabaticity (pink dotted arrow) excite the ground state $|\phi_0(t/T)\rangle$ to higher energy states $|\phi_i(t'/T)\rangle$. {Here $|\psi(t/T)\rangle$ is the state at time $t$, and $|\phi_i(t/T)\rangle$ is the $i$-th instantaneous eigenstate of the Hamiltonian $H(t/T)$ with eigenvalue $E_i(t/T)$. $\widetilde{E}_i(t/T)$ is the effective energy of the $i$-th eigenstate under the Hamiltonian simulation algorithm.} (\textbf{c}) Applications of our theory to adiabatic state preparation and classical problems.
}\label{fig:Fig1} 
\end{figure*}

\section{Introduction}
Adiabatic evolution has profound implications across various domains such as quantum state preparation~\cite{Abrams99, aspuru2005simulated, lee2023evaluating}, adiabatic quantum computation (AQC)~\cite{farhi2000quantum}, quantum annealing~\cite{kadowaki1998quantum,santoro2006optimization}, and quantum control~\cite{vandersypen2004nmr,zhou2017accelerated}. It relies on the principle of adiabaticity, where a quantum system remains in its ground state if the Hamiltonian changes slowly enough. For instance, in AQC, the ground state of a slowly varying Hamiltonian encodes the solution to a computational problem, offering a potential path to solving classically intractable problems \cite{van2001powerful,roland2002quantum,somma2012quantum,garnerone2012adiabatic}. Similarly, adiabatic quantum state preparation relies on gradually evolving a quantum system to prepare specific quantum states, which are crucial for tasks like quantum chemistry and materials science~\cite{RevModPhys.92.015003, cao2019quantum, Bauer_2020}. Remarkably, it has been proven that adiabatic evolution is equivalent to universal quantum computing~\cite{kempe2006complexity,RevModPhys.90.015002}, highlighting its critical role in the advancement of quantum technologies.

Digital quantum computers offer a powerful platform for simulating the adiabatic evolution of arbitrary quantum systems. By discretizing the evolution time, adiabatic processes can be decomposed into controllable sequences of single- and two-qubit gates~\cite{nielsen2010quantum}. However, this decomposition is inherently approximate, with algorithmic errors that decrease as circuit depth increases~\cite{childs2021theory}. 
Moreover, in long-time evolution processes, these algorithmic errors tend to accumulate over time~\cite{childs2018toward}. 
The adiabaticity condition, which requires long evolution times $T$ that scale at least polynomially with the inverse of the minimum energy gap in the system's Hamiltonian~\cite{jansen2007bounds,elgart2012note}, exacerbates this issue. Consequently, the resulting algorithmic error in digital adiabatic evolution also scales linearly with $T$, necessitating deeper quantum circuits to suppress the errors, thereby increasing computational costs. {Due to these challenges, which entail substantial computational overheads, digital simulations of adiabatic evolution are often regarded as impractical. As a result, they are typically treated as theoretical tools rather than viable algorithms for implementation on realistic quantum devices.
}

In this paper, we challenge this conventional view by showing the universal robustness of digital adiabatic evolution. Remarkably, based on a novel and systematic error analysis framework, we prove that the algorithmic errors in digital adiabatic simulations exhibit self-cancellation behavior, which does not necessarily scale with time. 
We show that for two Hamiltonian simulation algorithms—Trotterization and GQSP~\cite{motlagh2024generalized}—the errors can be bounded as $O(T^{-2} + \delta t^{2k})$ for $k$th-order Trotterization (with time step $\delta t$) and $O(T^{-2})$, respectively, which both decease rather than increase with $T$. 
Furthermore, our theory reveals that simulating time-dependent adiabatic evolution can be effectively achieved using only simpler, well-understood time-independent Hamiltonian simulation algorithms, eliminating the dependence of auxiliary qubit overhead on the total evolution time or precision introduced by previous methods like truncated Dyson series~\cite{kieferova2019simulating} and qubitization~\cite{watkins2024time}.



Our findings have immediate applications in various quantum computing tasks, including adiabatic quantum state preparation and adiabatic-based linear system solvers~\cite{subacsi2019quantum}. The error analysis not only aids in understanding error behavior in digital adiabatic evolution but also in optimizing algorithmic parameters and performing more accurate resource assessments. We further validate our theoretical predictions through numerical simulations of molecular systems and linear equations. We demonstrate a $10^6$-fold improvement in the first-order Trotterization error bound when preparing the ground state of a transverse-field Ising model.
This work paves the way for accurate and efficient digital adiabatic evolution, offering a promising route to harness quantum advantages with fault-tolerant quantum devices~\cite{preskill1998reliable}. By refining the computational overhead associated with adiabatic evolution, our results may even open new possibilities for the practical implementation of quantum algorithms in the era of noisy intermediate-scale quantum devices~\cite{preskill2018quantum}.

\section{Main results}

We begin by reviewing the overall framework of digital adiabatic evolution, focusing on the errors arising from non-adiabatic transitions and imperfect Hamiltonian simulation. Next, we introduce a general method that bounds the infidelity of a digital adiabatic process for general Hamiltonian simulation algorithms. Finally, we apply our results to two specific Hamiltonian simulation methods—Trotterization and GQSP—providing concrete bounds for each.

\bigskip
\noindent\textbf{\emph{Background}}\\
We focus on the linear adiabatic path from an initial Hamiltonian $H_i$ to a final Hamiltonian $H_f$
\begin{equation}
H({t}/{T})=[1-u({t}/{T})]H_i + u({t}/{T})H_f,
\end{equation}
where $u(x)$ is a scheduling function satisfying $\dot{u}(x) \geq 0$, $u(0) = 0$, and $u(1) = 1$. 
Note that the analysis applies to arbitrary adiabatic evolution. 
The evolution time $T$ must be sufficiently large, specifically $T \gg g_{\text{min}}^{-2}$ in the general case, to ensure that the process remains adiabatic. Here, $g_{\text{min}}$ denotes the minimum eigenvalue gap between the ground state and the first excited state of $H(t/T)$. In AQC, the initial Hamiltonian $H_i$ is typically chosen to have a simple, analytically solvable ground state, while the problem to be solved is encoded in the final Hamiltonian $H_f$, allowing the system to adiabatically evolve from the simple initial ground state to the complex final ground state.

We denote the state at time $t$ as $
|\psi(t/T)\rangle = \mathcal{T}e^{-i\int_0^tH(t'/T)dt'}|\psi(0)\rangle$, 
where $\mathcal{T}$ is the time-ordering operator. The target solution is then given by $|\psi(1)\rangle$ at $t = T$.
To implement the adiabatic evolution on a quantum computer, we can approximate the evolution using a sequence of unitary operators as follows~\cite{van2001powerful}:
\begin{equation}\label{eq:slice}
\begin{aligned}
|\psi(1)\rangle &= \prod_{m=1}^{r} U\left(\frac{m}{r}\right) |\psi(0)\rangle,
\end{aligned}
\end{equation}
where $U(m/r) := \exp[-iH(m/r)\delta t]$, $\delta t$ is the time step size, and $r = T/\delta t$ is the number of time steps. Here, we have used a time-independent approximation of the time-dependent Hamiltonian evolution. However, we prove below that this approximation yields negligible errors.

Consider a general Hamiltonian simulation method, which approximates the time evolution operator $U(m/r)$ at step $m$ with a quantum circuit described by operator $\widetilde{U}(m/r)$ as:
\begin{equation}
\widetilde{U}(m/r) = U(m/r) + U_{\text{res}}(m/r)\delta t^{k+1} + O(\delta t^{k+2}),
\end{equation}
where $U_{\text{res}}$ is a residual term and $k\ge 1$ is a certain integer. It is important to note that the circuit operator $\widetilde{U}$ may not be unitary in general, such as in the case of GQSP. For this reason, we introduce a normalization factor at time step $m$ defined as 
\begin{equation}\label{Eq:amr}
a\left(\frac{m}{r}\right) = \sqrt{\langle \phi_0\left(\frac{m}{r}\right)|\widetilde{U}^{\dagger}\left(\frac{m}{r}\right)\widetilde{U}\left(\frac{m}{r}\right)|\phi_0\left(\frac{m}{r}\right)\rangle}.
\end{equation}
Here $|\phi_i(x)\rangle$ is the instantaneous eigenstate of the time-dependent Hamiltonian $H(x)$ with the $i$-th lowest eigenvalue $E_i(x)$. 

Considering the quantum circuit operator $\widetilde{U}$, if it is unitary, we can define its effective Hamiltonian $\widetilde{H}$ and the corresponding effective energy $\widetilde{E}_i$. Otherwise, we can still generalize this to the non-unitary case,
by defining the effective energy at step $m$ as $\widetilde{E}_i(m/r):= -\arg(\langle \phi_i|\widetilde{U}|\phi_i\rangle)/\delta t$ and the average effective energy as $\bar{E}_i(m/r) = m^{-1}\sum_{m'=1}^m \widetilde{E}_i(m'/r)$. It is worth noting that in our case, $\delta t$ is taken to be sufficiently small, which allows us to avoid the winding number issue. The effective gap and average effective gap are then defined as $\widetilde{\Delta}_i(m/r) = \widetilde{E}_i(m/r) - \widetilde{E}_0(m/r)$ and $\bar{\Delta}_i(m/r) = m^{-1}\sum_{m'=1}^m \widetilde{\Delta}_i(m'/r)$, respectively. 
We note that since the time-dependent Hamiltonian $H(x)$ is a function defined at the interval $x\in [0,1]$, the variables  $U(x)$, $\widetilde{U}(x)$ and $\widetilde{\Delta}_i(x)$ could be accordingly defined continuously at the interval $x\in [0,1]$ with the limit of $T\rightarrow \infty$. Sequentially, we define a continuous version of $\bar{\Delta}_i(x)$ as $\bar{\Delta}_i(x) = x^{-1}\int_0^x \widetilde{\Delta}_i(x')dx'$.

\bigskip

\noindent\textbf{\emph{General results}}\\
Now, we are ready to show our general result for characterizing the errors in digital adiabatic evolution. We focus on the infidelity between the digitally evolved state $|\psi(m/r)\rangle$ and the exact ground state $|{\phi_0(m/r)}\rangle$ of the time-dependent Hamiltonian $H(m/r)$, defined as
\begin{equation}
\begin{aligned}
\mathcal{I}\left(\frac{m}{r}\right) := 1 - \left|\langle\psi\left(\frac{m}{r}\right)|\phi_0\left(\frac{m}{r}\right)\rangle\right|^2 \\
= \sum_{i \neq 0} \left|\langle\psi\left(\frac{m}{r}\right)|\phi_i\left(\frac{m}{r}\right)\rangle\right|^2.
\end{aligned}
\end{equation}
While we focus on the ground state in this work, the analysis can be extended to general excited states.

In digital adiabatic evolution at step $m$, the infidelity arises from non-adiabatic transitions, algorithmic errors in Hamiltonian simulation, and amplitude rescaling, as illustrated in Fig.~\ref{fig:Fig1}(b). Specifically, the transition amplitude $A_i(m/r)$ at step $m$ is defined as:
\begin{equation}
\begin{aligned}
    A_i(x)&:=\langle e^{-i\bar{E}_i m\delta t}\phi_i(x+1/{r})|\widetilde{U}(x)|e^{-i\bar{E}_0  m\delta t}\phi_0(x)\rangle \\
    &=R_i(x)\exp(i\bar{\Delta}_i(x)m\delta t).
\end{aligned}
\end{equation}
Here $x=m/r$, $\bar{\Delta}_i(m/r)m\delta t$ represents the phase difference between the two eigenstates, and $R_i(m/r)$ is the phaseless transition amplitude, given by:
\begin{equation}
\begin{aligned}
R_i(m/r):&=\langle \phi_i((m+1)/r))|\phi_0(m/r)\rangle \\&
+ \langle \phi_i(m/r)| U_{\text{res}}|\phi_0(m/r)\rangle\delta t^{k+1},
\end{aligned}
\end{equation}
where the first term corresponds to the non-adiabatic transition, and the second term accounts for the Hamiltonian simulation algorithmic error.
In addition to eigenstate transitions, the amplitude on the excited state $|\phi_i(m/r)\rangle$ is rescaled by the factor $|\widetilde{U}_{ii}(m/r)|/a(m/r) = \exp(-\Lambda_i(m/r)\delta t)$, where $\widetilde{U}_{ii}(m/r) := \langle \phi_i(m/r)| \widetilde{U}(m/r)|\phi_i(m/r)\rangle$. This rescaling is due to the non-unitarity of $\widetilde{U}$. The decay rate $\Lambda_i(x)$ can be calculated by:
\begin{equation}
\begin{aligned}
\Lambda_i(x) := \frac{1}{\delta t} \ln\left(\frac{\sqrt{|\langle \phi_0(x)|\widetilde{U}^{\dagger}(x)\widetilde{U}(x)|\phi_0(x)\rangle|}}{|\langle \phi_i(x)|\widetilde{U}(x)|\phi_i(x)\rangle|}\right).
\end{aligned}
\end{equation}
We also define the average decay rate $\bar{\Lambda}_i(m/r) = m^{-1}\sum_{m'=1}^m \Lambda_i(m'/r)$ and its continuous version $\bar{\Lambda}_i(x)=x^{-1}\int_0^x \Lambda_i(x')dx'$.

The infidelity of the state at time $t = T$ can be expressed as
\begin{equation} 
\begin{aligned}
\mathcal{I}\left(1\right) &\approx
\sum_{i \neq 0} \left|\sum_{m =1}^r R_i(m/r)e^{i\bar{\Delta}_i(m/r)m\delta t} \right|^2,
\end{aligned}
\end{equation}
where we have neglected higher-order errors of $O(\delta t^{k+2})$ in each summation. Intuitively, the frequency of $R_i(m/r)$ is proportional to $T^{-1}$, while the frequency of $\exp(i\bar{\Delta}_i(m/r)m\delta t)$ is independent of $T$. In the large $T$ limit, the transition amplitude $A_i(m/r)$ is modulated by a smooth envelope but oscillates rapidly, as illustrated in Fig.~\ref{fig:Fig1}(a). The integrals over each oscillation period tend to cancel out, leaving the sums of the amplitudes near the ends as the dominant terms. Such cancellations can be rigorously proven with the help of oscillatory integrals~\cite{degani2006rcms,iserles2005efficient} and are the origin of the robustness of digital adiabatic evolution. We encapsulate it in the following theorem.

\smallskip
\noindent \textbf{{Theorem 1 (General error cancellation in digital adiabatic evolution)}.} 
\emph{Considering small $\delta t$ and large $T$, if the following condition
\begin{equation}\label{eq:condition}
    \left\{
    \begin{aligned}
        &\exp{(-\overline{\Lambda}_i(x)T)}=O(1),~~\forall x\in [0,1]\\
         &\widetilde{\Delta}_i(x)>0, ~~\forall x\in [0,1],
    \end{aligned}
    \right.
\end{equation}
is satisfied, then the infidelity $\mathcal{I}(1)$ in a digital adiabatic evolution scales as:
\begin{equation}\label{eq:bound1}
    \mathcal{I}(1)=O\left(\beta_{\text{ad}}^2T^{-2}+ \beta_{\text{sim}}^2\delta t^{2k}\right),
\end{equation}
where $\beta_{\text{ad}}$ and $\beta_{\text{sim}}$ are coefficients for non-adiabatic error and Hamiltonian simulation error, which are independent of $T$ and $\delta t$.}
 
To interpret the conditions in Eq.~(\ref{eq:condition}), the first condition ensures that the average decay rate $\bar{\Lambda}_i(x)$ is non-negative or its magnitude is sufficiently small compared to $T^{-1}$, which prevents the amplitude on the excited state from increasing uncontrollably. The second condition requires a non-zero effective gap $\widetilde{\Delta}_i(x)$. This condition is satisfied by many Hamiltonian simulation algorithms. For example, in Trotterization, it is easy to validate $\Lambda_i(x) \geq 0$ and $\widetilde{\Delta}_i(x) > 0$ for sufficiently small $\delta t$. We also prove in the Appendix \ref{sec:condition_of_GQSP} that these conditions hold for GQSP.
The theorem serves as the basis for analyzing the universal robustness of digital adiabatic evolution with arbitrary Hamiltonian simulation techniques. In the following, we consider two examples of Trotterization and GQSP, which respectively demonstrate weak and strong error cancellation, and discuss surprising exponential error cancellation with optimized adiabatic path. 

\smallskip


\bigskip

\noindent\textbf{\emph{Trotterization}}\\
Now, let’s consider Trotterization~\cite{suzukiGeneralTheoryFractal1991} as the Hamiltonian simulation method and derive the weak error cancellation in digital adiabatic evolution. Here, ``weak'' refers to the fact that while the infidelity does not accumulate with $T$, the Hamiltonian simulation error still depends on the time step $\delta t$. In contrast, a stronger version of error cancellation, such as with the GQSP method, would be independent of $\delta t$.

We break down the Trotterization process into two stages: \textit{primary Trotterization} and \textit{sub-Trotterization}. In primary Trotterization, the unitary operator $U(m/r)=\exp(-i\{[1-u(m/r)]H_i+u(m/r)H_f\}\delta t)$ is approximately decomposed into two components $\exp(-i[1-u(m/r)]H_i\delta t)$ and $\exp(-iu(m/r)H_f\delta t)$ as follows:

\begin{equation}
\begin{aligned}
U(m/r)&\approx \widetilde U(m/r)\\
&= e^{-i[1-u(m/r)]H_i\delta t}e^{-iu(m/r)H_f\delta t}.
\end{aligned}
\end{equation}
Each part is then assumed to evolve jointly. This decomposition is particularly effective when  $H_f$ is classical, where its evolution incurs no sub-Trotterization error, or in analog quantum simulations where the joint evolution of $H_f$ is naturally feasible~\cite{blatt2012quantum,gross2017quantum,aspuru2012photonic,daley2022practical}. Assuming primary Trotterization, the ``self-healing'' phenomenon has been discovered, where the infidelity scales as  $O(\delta t^2T^{-2}+T^{-2})$~\cite{kovalsky2023self}. Compared to our weak error cancellation result, the Trotterization error part is also suppressed by $T^{-2}$, which becomes negligible as long as the ratio $\delta t/T$ is small. We will show that such a stronger error cancellation would vanish when sub-Trotterization of $H_f$ is considered.

Specifically, for general $H_f$, we need to consider its Trotterization decomposition, referred to as sub-Trotterization. Without loss of generality, it is assumed that $H_i$ and $H_f$ can be decomposed into the form as $H=\sum_{j}c_jP_j$, where $P_j$s are $n$-qubit Pauli operators. In this case, we can approximate the joint evolution $e^{-iH\delta t}$ by using $\prod_j\exp(-ic_jP_j\delta t)$, which allows for efficient realization of each term as a sequence of single-Pauli rotations on a digital quantum computer. We note that higher-order Trotter-Suzuki formulas~\cite{suzukiGeneralTheoryFractal1991} can also be applied here.
We show that when both primary Trotterization and sub-Trotterization are considered, the overall infidelity no longer decreases with  $T$, but crucially, it does not accumulate with  $T$  either. We summarize this result as follows.




\smallskip

\noindent
\textbf{{Theorem 2 (Weak error cancellation with Trotterization).}}
\emph{For large $T$ and small $\delta t$, the infidelity $\mathcal I(1)$ in digital adiabatic evolution with the $k$-th order Trotterization formula scales as $O(\beta_{\text{ad}}^2T^{-2}+\beta_{\text{tro}}^2\delta t^{2k})$, where $\beta_{\text{ad}}$ and $\beta_{\text{tro}}$ are coefficients for the non-adiabatic error and Trotter error, which are irrelevant to $T$ and $\delta t$.}

\smallskip

\noindent Theorem 2 could be directly obtained from the more general result Theorem 1. Specifically, since $\widetilde{U}$ is unitary in this case, the condition $\exp{(-\overline{\Lambda}_i(x)T)}=O(1)$ is satisfied naturally as $\Lambda_i(x)\geq0$. To ensure $\widetilde{\Delta}_i(x)>0$, we need $\delta t^{k+1}$ to be sufficiently small compared to $g_{\text{min}}$. Otherwise, one may encounter the ``gap closure'' phenomenon~\cite{yi2021success}, which will lead to the break down of the adiabaticity condition.


While only considering primary Trotterization, our result reduces to the self-healing case~\cite{kovalsky2023self}. We observe that for a great number of classical problems, such as the glued trees problem~\cite{somma2012quantum} and adiabatic optimization problems~\cite{van2001powerful}, the primary Trotterization is sufficient, and the theory of ``self-healing'' applies.
However, for more complex quantum many-body problems with intricate interactions, sub-Trotterization must be considered, leading us to Theorem 2. Interestingly, the nature of error cancellation differs between classical and quantum problems. In classical problems, where Hamiltonians have commuting terms and are easier to handle, a stronger self-healing phenomenon occurs. In contrast, for more challenging quantum problems, only weaker error cancellation is observed. Nevertheless, this still outperforms the conventional Trotter error bound  $O(T^2\delta t^{2k})$~\cite{layden2022first}, which generally increases with the total simulation time $T$. 

\smallskip

Theorem 2 not only reveals the fundamental phenomenon of error robustness in digital adiabatic evolution, but also can be applied to optimize the algorithm parameters $\delta t, T$ given a restricted circuit depth $d$. We summarize the result as follows. 


\smallskip
\noindent\textbf{{Corollary 2.1 (Optimization of digital adiabatic evolution under fixed circuit depth, informal).}}
\emph{
For a large fixed circuit depth $d$, we can pick a set of optimal parameters $\delta t, T$, such that the infidelity of the state prepared by the digital adiabatic process with the $k$-th order sub-Trotterization is minimized. The optimal parameters scale as
\begin{equation}
\begin{aligned}
&T_{opt}=O(d^{\frac{k}{k+1}})~{\rm and}~
\delta t_{opt}=O(d^{-\frac{1}{k+1}}),
\end{aligned}
\end{equation}
and the optimal infidelity scales as
\begin{equation}
    \mathcal I(1) = O\left(d^{-\frac{2k}{k+1}}\right),
\end{equation}
where $k=1$ or $k\geq 2$ and $k$ is an even number.
}
\smallskip


\noindent In conventional analysis, we generally need to choose a sufficiently small $\delta t$ and increase $d$ and $T$ simultaneously to control the errors~\cite{sugisaki2022adiabatic}. What we find here indicates that there is an optimal choice of the time step $\delta t$ and the total evolution time $T$ for adiabatic evolution with fixed circuit depth $d$. The optimal parameters are also different when only considering the error from  primary Trotterization. However, it is optimal only for problems where the initial Hamiltonian and the final Hamiltonian have commuting terms. 



\bigskip

\noindent 
\textbf{\emph{GQSP}}\\
Next, we consider a more advanced Hamiltonian simulation method based on the GQSP. Unlike Trotterization, the GQSP approach eliminates eigenstate transitions due to algorithmic errors. As a result, it only experiences non-adiabatic errors. Consequently, as we demonstrate below, digital adiabatic evolution using the GQSP method exhibits strong error cancellation, with the infidelity monotonically decreasing as  $T^{-2}$.


Assume that the Hamiltonian can be decomposed as $H(x)=\sum_1^L \alpha_l(x)U_l$ for Hermitian unitaries $U_l$ and real-valued $\alpha_l(t)$. To use the GQSP approach with a $K$-th order truncated polynomial for Hamiltonian simulation, 
we can choose $\widetilde{U}$ as: 

\begin{equation}
    \widetilde{U}(m/r)=P_K(e^{i\arccos(H(m/r)/\alpha)}).
\end{equation}
Here we define:
\begin{equation}
    P_K(x):=\sum_{k=-K}^K (-i)^kJ_k(\alpha \delta t)x^k,
\end{equation}
and $J_k$ is the $k$-th order Bessel function. Here $\alpha=\max_{x\in[0,1]}\sum_l^L|\alpha_l(x)|$ is a factor introduced to make sure $|E_i(m/r)/\alpha|\leq 1$. Notice that $\widetilde{U}$ here is not unitary, so we have to introduce auxiliary qubits and carry out post selections. Details of GQSP are given in Appendix \ref{sec:details_of_GQSP}.

To satisfy the condition Eq.~(\ref{eq:condition}) in Theorem 1 and keep the possibility of failure within $\delta$, we need to ensure that the order $K$ scales as $\widetilde{O}(\alpha \delta t + \log(r/\delta))$. Moreover, since $\widetilde{U}$ here is functions of $H$, it is diagonal under the basis $\{\phi_i(m/r)\}$ and the algorithmic error of GQSP will not give rise to transition between each eigenstate. For this reason, the infidelity will scale as $O(T^{-2})$, just like the case where we assume Hamiltonian simulation can be carried out with no error. The analysis above can be summarized by the following theorem:

\smallskip
\noindent
\textbf{{Theorem 3 (Strong error cancellation with GQSP).}}
\emph{
For large $T$, small $\delta t$, and $K\sim O(\log(r/\delta)/\log\log(r/\delta))$, GQSP can simulate digital adiabatic evolution using $O(\log (L))$ anxiliary qubits while the infidelity $\mathcal I(1)$ scales as $O(\beta_{\text{ad}}^2T^{-2})$ and the probability of failure will be less than $\delta$. Here $\beta_{\text{ad}}$ is independent of $T$ and $\delta t$.}

\smallskip


We would like to emphasize that the result of strong error cancellation is very intriguing. First, GQSP is originally designed for time-independent Hamiltonians, and {it works here because we discretize the time and apply GQSP to simulate the  Hamiltonian corresponding to the end point of each time step}. Intuitively, this would lead to an approximation error of $\delta t^2$ as in the first-order Trotterization because we neglect the time dependence of the Hamiltonian. This explains why we generally need much more sophisticated methods such as truncated Dyson series~\cite{kieferova2019simulating} to handle time-dependent Hamiltonians. However, our result indicates that time-independent GQSP is sufficient for accurate simulation of time-dependent adiabatic evolution, making its simulation much simpler in theory and easier in practice. Second, the truncation order $K$ in conventional GQSP generally increases with the simulation accuracy $\epsilon$ and total simulation time $T$. Our result further shows that the strong error cancellation effect allows us to {consider $K$ being independent of $\epsilon$.} These surprising effects thus broadens our understanding of time-dependent Hamiltonian simulation and greatly simplifies the quantum algorithm design and cost for practical adiabatic evolution.



\begin{table*}[htbp]
\centering
\caption{A summary of our results and a comparison with the leading time-dependent Hamiltonian simulation methods for adiabatic evolution. Here $\epsilon$ and $\delta$ are upper bounds for infidelity and the possibility of faliure respectively. $k$ and $K$ are the order of Trotterization and truncated polynomial used in GQSP respectively. We assume that $H = \sum_{l=1}^L \alpha_l(t) U_l$ for Hermitian unitaries $U_l$ and real-valued $\alpha_l(t)$. {``STDA'' is short for ``Specialized Time-Dependent Algorithms'', which means the algorithm is tailored for time-dependent Hamiltonian simulation to eliminate the error from time-dependence.} }
\label{tab:quantum_sim_summary}
\begin{tabular}{lllll} 
\toprule
\textbf{Method} & \textbf{Error Bound} & \textbf{Query Complexity} & \textbf{Auxiliary Qubits} & \textbf{STDA} \\
\midrule
Trotter \cite{wiebe2010higher} & $O(T^{-2}+T^2\delta t^{2k})$ & $O(T/\epsilon^{1/k})$ & 0 & Yes \\
\addlinespace
QDrift \cite{berry2020time} & $O(T^{-2}+T^2\delta t^{2})$ & $O(T/\epsilon)$ & 0 & Yes \\
\addlinespace
Trotter (This work) & $O(T^{-2}+\delta t^{2k})$ & $O(T/\epsilon^{1/(2k)})$ & 0 & No \\
\addlinespace
\hline
\addlinespace
Dyson \cite{kieferova2019simulating} & $O(T^{-2}+T^2\delta t^{2K}/((K+1)!)^2)$ & $\widetilde{O}(T\log(\max{(r/\delta,r/\epsilon)}))$ & $\widetilde{O}(\log(T\delta t/\epsilon) + \log(L))$ & Yes \\
\addlinespace
Qubitization \cite{watkins2024time} & $O(T^{-2}+T^2\delta t^{2K}/((K+1)!)^2)$ & $\widetilde{O}(T\log(\max{(r/\delta,r/\epsilon)}))$ & $\widetilde{O}(\log(T\delta t/\epsilon)+\log(L))$ & Yes \\
\addlinespace
GQSP (This work) & $O(T^{-2}) $ & $\widetilde{O}(T\log(r/\delta))$ & $O(\log(L))$ & No \\
\bottomrule
\end{tabular}
\end{table*}

Using Theorem 3, we can further derive the query complexity and auxiliary qubit requirements for simulating adiabatic evolution with GQSP to achieve a desired infidelity $\epsilon$ and failure probability $\delta$:

\smallskip
\noindent\textbf{{Corollary 3.1}}
\emph{
For a digitally simulated adiabatic evolution with large total time $T$ and small step size $\delta t$, the query complexity of our GQSP method scales as $\widetilde{O}(T\log(r/\delta))$ to achieve an infidelity under $\epsilon$ and failure probability within $\delta$ using $O(\log(L))$ auxiliary qubits.
}
\smallskip

\noindent Notice that the number of auxiliary qubits required is reduced from $O(\log(T\delta t/\epsilon)+\log(L))$ in the truncated Dyson series~\cite{kieferova2019simulating} and qubitization~\cite{watkins2024time} to $O(\log(L))$ in our GQSP method as we stop encoding Hamiltonian at different times together as in the truncated Dyson series and qubitization, which requires an additional $\log(T\delta t/\epsilon)$ auxiliary qubits to store the time information.\\

In Table~\ref{tab:quantum_sim_summary}, we present a summary of our findings and compare them to leading quantum simulation methods for time-dependent Hamiltonians: Trotterization, QDrift, the truncated Dyson series (Dyson) and Qubitization. Among the methods without auxiliary qubits, Trotterization in our work achieves the best query complexity. For methods requiring auxiliary qubits, our GQSP method outperforms Dyson and Qubitization.

\bigskip

\begin{figure*}
\includegraphics[width=1\textwidth]{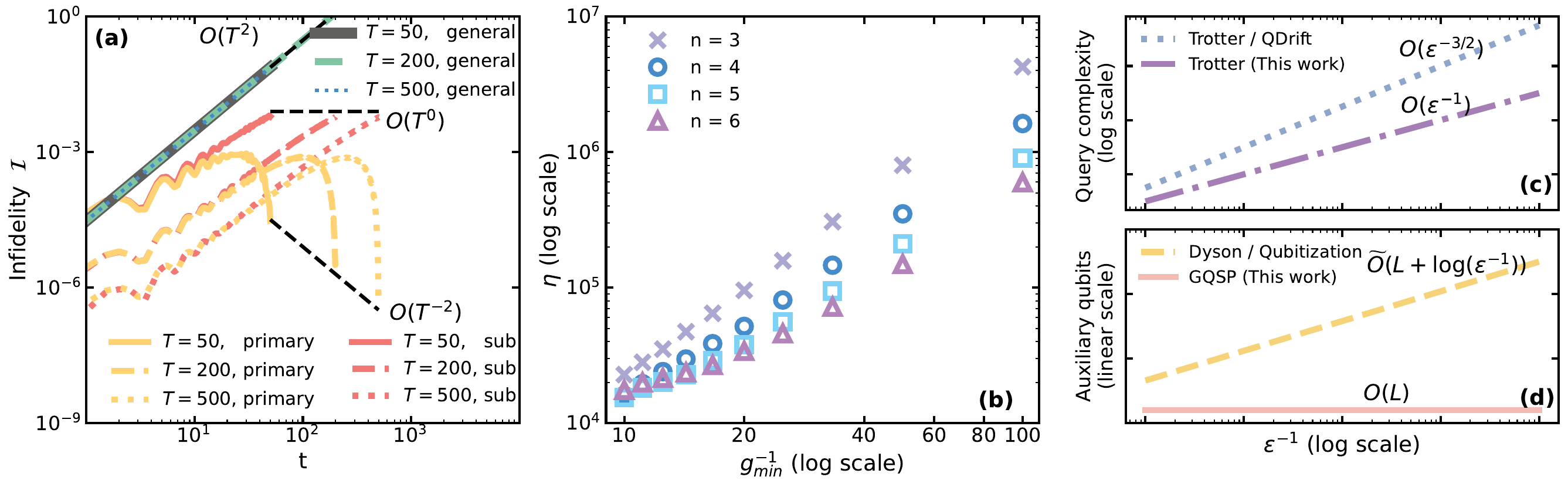}
\caption{
Comparison of different infidelities, query complexity, and auxiliary qubits scaling. Here, first-order Trotterization is studied. (\textbf{a}) Infidelity of digital adiabatic evolution with primary Trotterization (yellow), sub-Trotterization (red), and  Trotter error for a general time-dependent Hamiltonian evolution, with $T=50$ (solid line), $T=200$ (dashed line), and $T=500$ (dotted line). {(\textbf{b}) The improvement factor $\eta$, defined as the ratio of the former to our proposed Trotterization error bound, as a function of $g_{\text{min}}^{-1}$ for systems with $n=3$ to $6$ qubits. Here we set $H_i = -\sum_{i=1}^n X_i$ and $H_f = \sum_{i=1}^{n-1} (-2Z_iZ_{i+1} + g\cdot Z_n+X_i)$. By varying $g$ from $0.01$ to $0.1$ we can change the minimum gap $g_{\text{min}}$ within the range [0.02, 0.31]. $T$ and $\delta t$ are chosen as the critical values required to ensure the adiabatic infidelity $\mathcal{I}_{\text{ad}}$ and simulation infidelity $\mathcal{I}_{\text{sim}}$ (seriously defined in Eq.~(\ref{eq:infi})) are both below the threhold of $0.001$. Here, we consider first-order Trotterization.}(\textbf{c}) Comparison of query complexity scaling with respect to $\epsilon$ between previous works and our work. Here we illustrate the performance of Trotterization (previous work), QDrift and Trotterization (this work). (\textbf{d}) Comparison of auxiliary qubits scaling with respect to $\epsilon$ between previous works and our work. Here we illustrate the performance of truncated Dyson series, Qubitization and GQSP (This work). Here we use the relationship $T=O(\epsilon^{-1/2})$ to plot the curves.
}\label{fig:Fig2} 
\end{figure*}

\noindent \textbf{\emph{Exponential error cancellation}}\\
Our analysis so far has not assumed a specific form for the scheduling function $u(x)$. Actually, a thoughtful choice of $u(x)$ can significantly reduce non-adiabatic errors by spending more time in regions with small energy gaps while passing quickly through regions with large energy gaps.  An intriguing question is whether optimizing the choice of the scheduling function $u(x)$ could lead to even better performance if we consider the algorithmic errors as well. Here, we provide an affirmative answer to this question.


Taking the GQSP method as an example, since it shows a greater improvement compared to the Trotterization method, we demonstrate that the infidelity can, in principle, be exponentially suppressed with $T$, in contrast to the $T^{-2}$ scaling in Theorem 3.

We first introduce the concept of the $Q$-th order path studied in~Ref.~\cite{hu2016optimizing}. 

\smallskip
\noindent\textbf{{Definition 1 ($Q$-th order path).}}
\emph{A scheduling function $u(x)$ is called as a $Q$-th order path as long as $u^{(q)}(0)=u^{(q)}(1)=0$ for every $1\leq q\leq Q$. If $u^{(1)}(0)\neq 0$ or $u^{(1)}(1)\neq0$, We call it a zeroth order path. Moreover, we define $\infty$ order path which satisfies $\lim_{T\to+\infty} u^{(q)}(0),u^{(q)}(1)=0$ for every $q\in\mathbb{N}^*$.}

\smallskip

\noindent We can then show that the infidelity has an improved asymptotic scaling with GQSP and the $Q$-th path.

\smallskip
\noindent
\textbf{{Theorem 4 (Exponential error cancellation with $Q$-th order path).}}
 \emph{For large $T$ and small $\delta t$, the infidelity $\mathcal I(1)$ in digital adiabatic evolution using GQSP scales as $O(\beta^2_{\text{GQSP}}T^{-2Q-2})$ if we choose the scheduling function $u(x)$ as the $Q$-th order path. Moreover, if we choose an $\infty$-order path, the infidelity scales as $O(\beta^2_{\text{GQSP}}e^{-T})$. Here $\beta_{\text{GQSP}}$ is a coefficient that depends on $K$, the path, and the Hamiltonian, but independent of $T$. }
 
 \smallskip

\noindent We note that while the infidelity is suppressed as $T^{-2Q-2}$ for large $Q$, the coefficient $\beta_{\text{GQSP}}$ introduces a factor that increases exponentially with $Q$. Consequently, there is generally a trade-off between these two terms, and $Q$ should be optimized accordingly. Additionally, $\beta_{\text{GQSP}}$ still depends on the inverse of the minimum energy gap $g_{\min}$ along the adiabatic path, meaning the adiabatic condition $T \gg g_{\text{min}}^{-2}$ cannot be bypassed even with a $Q$-th order path. However, for sufficiently large $T$ that satisfies $T \gg g_{\text{min}}^{-2}$, $\beta_{ \text{GQSP}}$ can be treated as a constant, thereby improving the infidelity scaling with the $Q$-th order path.
It is worth noting that similar conclusions have been drawn in the analysis of non-adiabatic errors~\cite{rezakhani2010accuracy, lin2020near}. Our results extend these findings by considering errors arising from Hamiltonian simulation algorithms.


\section{Applications}
Now we discuss the applications of our results. Generally, our findings can be applied to adiabatic quantum computation and adiabatic quantum state preparation. The proposed quantum algorithms make time-dependent simulation as easy as time-independent simulation. Furthermore, our infidelity bounds provide much tighter error analysis for the performance of digital adiabatic evolution, which can be leveraged to offer more precise resource estimates when solving practical problems.
Our theory suggests that for digital adiabatic evolution, the Trotter error does not accumulate with the total time $T$, and the GQSP error decreases as $T$ increases. This marks a significant improvement over the error bounds for Trotterization (as shown in Fig.~\ref{fig:Fig2}(\textbf{a})) and GQSP in general time-dependent Hamiltonian evolution processes.

More specifically, since the non-adiabaticity error scales as $O(T^{-2})$, achieving an infidelity within an error $\epsilon$ requires the total time $T$ to be at least $O(\epsilon^{-1/2})$. For the general Trotter error bound, the first-order Trotter error scales as $O(T^2\delta t^2)$, so $\delta t$ must be chosen to be at most $O(\epsilon)$, resulting in a query complexity that scales as $T/\delta t = O(\epsilon^{-3/2})$. Our theory, however, demonstrates that the first-order Trotter error scales as $O(\delta t^2)$, meaning that $\delta t = O(\epsilon^{1/2})$ is sufficient to achieve the same error $\epsilon$, reducing the query complexity to $O(\epsilon^{-1})$ (as shown in Fig.~\ref{fig:Fig2}(\textbf{c})). 

{To evaluate the improvement of our Trotter error bound over the previous one, we define an improvement factor $\eta$ as the ratio of the previous Trotter error bound to our proposed Trotter error bound and choose to study TFIM with a slight longitudinal field. By varying the longitudinal field strength $g$ from $0.01$ to $0.1$ we can change the minimum gap $g_{\text{min}}$ within the range [0.02, 0.31]. We plot $\eta$ as a function of $g_{\min}^{-1}$ for systems with $n=3$ to $6$ qubits in Fig.~\ref{fig:Fig2}(\textbf{b}). $T$ and $\delta t$ are chosen as the critical values required to ensure the adiabatic infidelity $\mathcal{I}_{\text{ad}}$ and simulation infidelity $\mathcal{I}_{\text{sim}}$ (rigorously defined in Eq.~(\ref{eq:infi})) are both below the threhold of $0.001$. The results indicate that the improvement factor increases with $g_{\min}^{-1}$ and reaches up to $10^6$ for $g_{\min}\approx 0.02$. This suggests that our Trotter error bound is significantly tighter than the previous one, especially for systems with small energy gaps.}

As for the GQSP method, the infidelity scales as $O(T^{-2})$ while query complexity scales as $\widetilde{O}(T\log(r/\delta))$. Although the query complexity is a slight improvement over the truncated Dyson series or Qubitization only when $\delta$ is much larger than $\epsilon$, the auxiliary qubits required are reduced from $\widetilde{O}(\log(T\delta t/\epsilon)$ $+$ $\log(L))$ to $O(\log(L))$ (as shown in Fig.~\ref{fig:Fig2}(\textbf{d})). 
Furthermore, the infidelity estimates can also serve as a target for optimizing parameter settings, thereby guiding the quantum circuit implementation under constrained circuit depth. For Trotterization, the optimal evolution time and time step are provided in Corollary 2.1.\\

Our results are applicable to practical problems, such as adiabatic Grover's algorithm~\cite{roland2002quantum}, the glued trees problem~\cite{somma2012quantum}, adiabatic state preparation for quantum chemistry problems~\cite{babbush2014adiabatic,sugisaki2022adiabatic}, and systems of linear  equations~\cite{subacsi2019quantum,lin2020near,costa2022optimal}. Here, we mainly focus on the last two applications to corroborate our theory. 
Here, we review the background of these two applications and show the application of our results in detail in the next section. 

In quantum chemistry, the second-quantized Hamiltonian of a molecular system is expressed as follows:
\begin{equation}
\hat{H}=\sum_{p,q=1}^N h_{pq} \hat{a}_p^{\dagger} \hat{a}_q+\frac{1}{2} \sum_{p,q,r,s=1}^N h_{psqr} \hat{a}_p^{\dagger} \hat{a}_q^{\dagger} \hat{a}_r \hat{a}_s,
\end{equation}
where $h_{pq}$ and $h_{pqrs}$ are the 1-electron and 2-electron repulsion integrals, and $\hat{a}_p^\dag$ and $\hat{a}_p$ are the creation and annihilation operators, respectively. This Hamiltonian represents the electronic Hamiltonian under the Born-Oppenheimer approximation~\cite{RevModPhys.92.015003, cao2019quantum}. To translate this Hamiltonian into the qubit representation, one can use the Jordan-Wigner or Bravyi-Kitaev transformations~\cite{RevModPhys.92.015003, cao2019quantum}.
Our goal is to determine the ground state of the Hamiltonian, which corresponds to the electronic structure of the molecule. While directly solving the Hamiltonian is generally a complex task, one potential approach is to utilize adiabatic evolution. In this method, $H_i$ is the Fock operator~\cite{hu2016optimizing} defined as
\begin{equation}
F=\sum_{i=1}^N h_{ii} \hat{a}_i^{\dagger} \hat{a}_i+\frac{1}{2} \sum_{i,j=1}^N (h_{ijji} \hat{a}_i^{\dagger} \hat{a}_j^{\dagger} \hat{a}_j \hat{a}_i + h_{ijij} \hat{a}_i^{\dagger} \hat{a}_j^{\dagger} \hat{a}_i \hat{a}_j)
\end{equation}
and $H_f = \hat{H}$. The process involves preparing the Hartree-Fock state and then adiabatically evolving from $H_i$ to $H_f$ to find the target ground state.


Another application is to use adiabatic evolution to solve systems of linear equations~\cite{subacsi2019quantum}. The goal in this case is to prepare a quantum state
\begin{equation}
    |x\rangle :=\frac{\sum_{j=1}^Nx_j|j\rangle}{\sqrt{\sum_{j=1}^N|x_j|^2}}=\frac{A^{-1}|b\rangle}{\|A^{-1}|b\rangle\|},
\end{equation}
where $\vec{x}=(x_1,\cdots,x_N)^T$ is the solution to the linear system $A\vec{x}=\vec{b}$. Here $A$ is an $N\times N$ Hermitian matrix and $\vec{b}$ is an $N$-dimensional normalized vector. For the linear system with a non-Hermitian matrix, we can use the dilation trick to reduce it to the Hermitian case~\cite{harrow2009quantum}. 
$|b\rangle \propto \sum_{j=1}^N b_j|j\rangle$ is the quantum state encoding of $\vec{b}$.
To solve this problem, we can consider the time-dependent Hamiltonian 
\begin{equation}\label{eq:ham_amplified}
H(s) = \sigma^+\otimes p(s)P_{\bar{b}}^{\perp}+\sigma^-\otimes P_{\bar{b}}^{\perp} p(s),
\end{equation}
where $\sigma^{\pm}=(X\pm iY)/2$ are single-qubit (raising and lowering) operators and $p(s) := (1-s)Z\otimes \mathbb{I}+sX\otimes A$, $P_{\bar{b}}^{\perp} := \mathbb{I}-|\bar{b}\rangle\langle \bar{b}|$,  $|\bar{b}\rangle := |+,b\rangle$.
One can easily verify that the eigenvalues of $H(s)$ are: 
\begin{equation*}
\{0, 0, \pm \sqrt{\lambda_1(s)},\cdots,\pm\sqrt{\lambda_{2N-1}(s)}\}.
\end{equation*}
Hence, the subspace of $H(s)$ corresponding to the zero eigenvalue is spanned by $|1,x(s)\rangle$ and $|0,\bar{b}\rangle$. Where we define:
\begin{equation}
    |x(s)\rangle :=\frac{A^{-1}(s)|\bar{b}\rangle}{\|A^{-1}(s)|\bar{b}\rangle\|}.
\end{equation}
If we start from the initial state $|1,x(0)\rangle=|1,\bar{b}\rangle$, the solution is encoded in the final state $|{1,x(1)}\rangle$. Since $\langle 0,\bar{b}|H(s)|1,x(s)\rangle=0$, there is no transition from $|1,x(s)\rangle$ to $|0,\bar{b}\rangle$. Therefore, degeneration in the ground state will not affect the performance of AQC.


\section{Numerical Simulation} \label{sec:simulation_results}

In this section, we show numerical results for digital adiabatic evolution in solving the ground state preparation problem of the $N_2$ molecule and the problem of systems of linear equations. 
 
For the former case, we make a side-by-side comparison of the error behavior for Trotterization and GQSP in digital adiabatic evolution with previous research findings to show that the Trotter error is robust against increasing $T$ as given in Theorem 2, and the GQSP error can be suppressed by large $T$ as we prove in Theorem 3. We also show the relationship between the optimal $T$ and the circuit depth $d$ to validate Corollary 2.1.
For the latter case, we investigate the relationship between errors from GQSP with different orders $K$ and $\delta t$, demonstrating that Eq.~(\ref{eq:slice}) will not introduce the error term $\delta t$ as mentioned by \cite{van2001powerful}, further strengthening our conclusion that simulating time-dependent adiabatic evolution can be effectively achieved using only time-independent Hamiltonian simulation algorithms. We also choose scheduling functions with different order $Q$ to validate Theorem 4. These numerical results are in excellent agreement with our theory.

In our numerical test, we focus mainly on the overlap between $|\phi_0(1)\rangle$, $|\psi(1)\rangle$ and $|\Psi\rangle$, where $|\phi_0(1)\rangle$ is the ground state of $H_f$, $|\psi(1)\rangle$ is the state prepared by digital adiabatic evolution with Hamiltonian simulation errors, and $|\Psi\rangle$ is the state prepared by exact digital adiabatic evolution without Hamiltonian simulation errors. Then, we define three infidelities as follows.

\begin{equation}\label{eq:infi}
\begin{aligned}
&\mathcal{I}:= 1-|\langle \psi(1)| \phi_0(1) \rangle|^2,\\
&\mathcal{I}_{\text{ad}}:= 1-|\langle \phi_0(1)| \Psi \rangle|^2,\\
&\mathcal{I}_{\text{sim}}:= 1-|\langle \psi(1)| \Psi \rangle|^2,\\
\end{aligned}
\end{equation}
where 
$\mathcal{I}$, $\mathcal{I}_{\text{ad}}$ and $\mathcal{I}_{\text{sim}}$ correspond to the total infidelity, infidelity caused by non-adiabaticity, and infidelity caused by quantum simulation methods such as Trotterization and GQSP, respectively. 

Throughout this section, if not stated otherwise, we use the linear scheduling function $u(x)=x$.

\begin{table}[b]
\caption{\label{tab:schedule}%
Scheduling functions as different order paths used in this work.}
\begin{ruledtabular}
\begin{tabular}{cl}
Order $Q$ of the path & Scheduling function $u(x)$\\
\hline
0&$x$\\
1&$3x^2-2x^3$~\cite{hu2016optimizing}\\
2&$6x^5-15x^4+10x^3$~\cite{hu2016optimizing}\\
$\infty$&$C^{-1}\int_0^x \exp(-\frac{1}{x'(1-x')})dx'$\footnote{Here $C\int_0^1 \exp[-1/(x'-x'^2)]dx'$ is a normalization constant such that $u(1)=1$.}~\cite{lin2020near}\\
\end{tabular}
\end{ruledtabular}
\end{table}

\begin{figure*}
\includegraphics[width=1\textwidth]{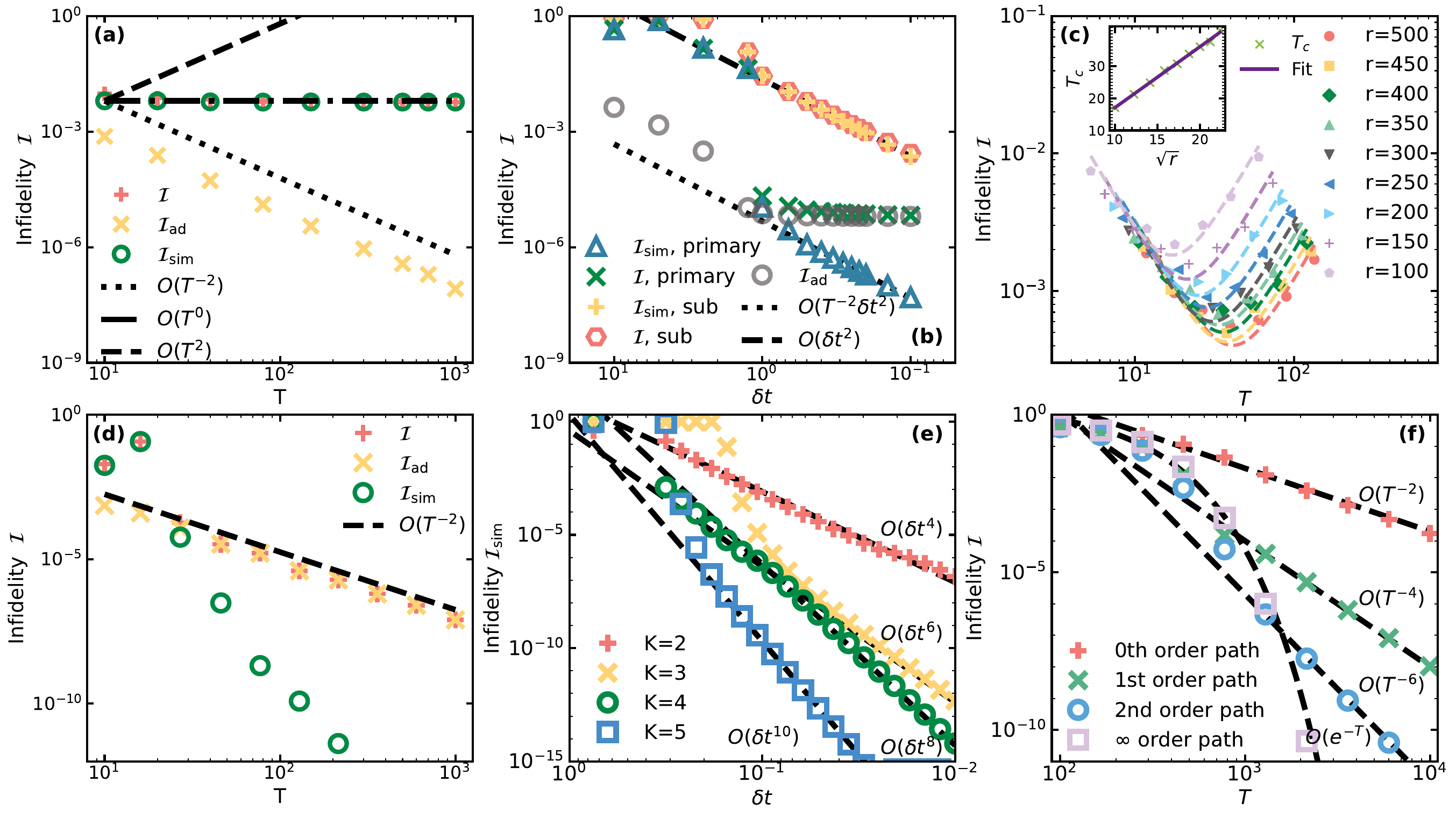}
\caption{ 
Numerical results for the $N_2$ molecule and systems of linear equations. (\textbf{a}) Infidelity of adiabatic state preparation (ASP) for the $N_2$ system with first-order sub-Trotterization at a fixed $\delta t=0.5$. The plot shows total infidelity $\mathcal{I}$, infidelity from non-adiabaticity $\mathcal{I}_{\text{ad}}$, and infidelity from Hamiltonian simulation $\mathcal{I}_{\text{sim}}$. (\textbf{b}) Infidelity of ASP for the $N_2$ system with primary Trotterization or first-order sub-Trotterization at a fixed $T=100$. (\textbf{c}) Parameter optimization in ASP for the $N_2$ system with first-order sub-Trotterization. The dashed curve represents a fit of the data using the model $\mathcal{I}=p(T/r)^2+bT^{-2}$. The minimum points $T_c$ of each fitting curve are shown with green crosses. The purple solid line is the linear fit of $T_c$ corresponding to $\sqrt{r}$. (\textbf{d}) Infidelity of ASP for the $N_2$ system at a fixed $\delta t=0.2$. Here we choose $K=\lfloor 4\log T \rfloor$. (\textbf{e}) Infidelity of adiabatic quantum computation (AQC) using GQSP for a linear equation system with orders $K=2$, $3$, $4$ and $5$ at a fixed $T=1000$. (\textbf{f}) Asymptotic behavior of AQC infidelity $\mathcal{I}$ using GQSP for a linear equation system at a fixed $\delta t=0.2$. Here we choose $K= 2\lfloor \log T \rfloor$. The scheduling function $u(x)$ is chosen as zeroth, first, second, and $\infty$ order paths, respectively. A scheduling function $u(x)$ is called a $Q$-th order path if $u^{(q)}(0)=u^{(q)}(1)=0$ for every $1\leq q\leq Q$. If $u^{(1)}(0)\neq 0$ or $u^{(1)}(1)\neq0$, it is a zeroth order path. An $\infty$ order path satisfies $\lim_{T\to+\infty} u^{(q)}(0),u^{(q)}(1)=0$ for every $q\in\mathbb{N}^*$.
}\label{Numerics}
\end{figure*}

\bigskip

\noindent \textbf{\emph{Results for the $N_2$ molecule}}\\
For the $N_2$ molecule, we consider the STO-3g basis set and restrict the Hilbert space into a complete active space~(CAS) of 6 electrons on 12 spin orbitals. As mentioned above, we use the Fock operator as the initial Hamiltonian $H_i$. So our evolution will start from $|\psi(0)\rangle=|\psi_{\text{HF}}\rangle$, where $|\psi_{\text{HF}}\rangle$ is the Hartree-Fock state. The final Hamiltonian $H_f$ is the Hamiltonian of the $N_2$ system at equilibrium bond length $1.134$ \AA.%

First, we verify the $T$ scalings of our main theorems. In Fig.~\ref{Numerics}(a), we demonstrate that when applying sub-Trotterization at the first order, the total infidelity scales as $O(T^0)$ with $\delta t=0.5$, rather than accumulating or self-healing as suggested in Ref.~\cite{kovalsky2023self}. Both error contributions, the non-adiabatic error, and the Trotter error, are plotted. The non-adiabatic error shows an $O(T^{-2})$ scaling, while the Trotter error remains at the same magnitude. With $\delta t=0.5$, the Trotter error dominates the total error, leading to a breakdown of the self-healing effect.

In Fig.~\ref{Numerics}(b), we show that even with $\delta t=0.1$ and a relatively small $T=100$, the infidelity is still dominated by the Trotter error. Since the Trotter error scales as $O(\delta t^2)$, the total error decreases quadratically as $\delta t$ decreases. In the same range of $\delta t$, the error would be dominated by the non-adiabatic error if only primary Trotterization is considered, as also shown in the figure. This leads to different behaviors in the total error. Together with Fig.~\ref{Numerics}(a), this verifies the scaling in terms of $\delta t$ and $T$ as described in our Theorem 2.

As illustrated in Fig.~\ref{Numerics}(c), the infidelity $\mathcal{I}$ initially decreases but then increases as $T$ continues to increase. This behavior suggests that an optimal setting can be found, as we present in Corollary 2.1. We also fit the data using the model $\mathcal{I}=p(T/r)^2+bT^{-2}$ to identify the minimum points $T_c$ for different values of $r$. The linear fit of $T_c$ relative to $\sqrt{r}$ supports that $T_c=O(\sqrt{r})$.

For the GQSP method, numerically we choose $\alpha=12$, $\delta t=0.2$, and $K=\lfloor 4\log T \rfloor$. In Fig.~\ref{Numerics}(d), we observe that $I$ and $I_{\text{ad}}$ scale as $O(T^{-2})$ with increasing $T$. Besides, $\mathcal{I}_{\text{sim}}$ can also be bounded by $O(T^{-2})$, as shown in Fig.~\ref{Numerics}(d).  All this confirms the scaling behavior predicted in Theorem 3. 
\bigskip

\noindent \textbf{\emph{Results for systems of linear equations}}\\
Next, we turn our attention to the adiabatic solution of systems of linear equations to illustrate our findings in Theorem 3 and Theorem 4, particularly concerning the GQSP error and the boundary conditions for the scheduling function. By applying these theorems, we explore how the GQSP method behaves under different conditions, demonstrating the impact of the truncation order $K$ on the accuracy and efficiency of the adiabatic process, as well as the role of optimal scheduling functions in minimizing errors and enhancing performance. For this purpose, we generated a 4-sparse Hermitian matrix $A$ of dimension $N=16$ with the condition number $ \kappa \approx 10 $ (with an absolute error of $10^{-3}$) satisfying $\|A\|=1$.
We also generated a 4-sparse vector for $ \vec{b} $. According to Eq.(S37) in \cite{subacsi2019quantum}, we set $\alpha=10$ in GQSP.

In Fig.~\ref{Numerics}(e), we analyze how the simulation error $I_{\text{sim}}$ varies with respect to $\delta t$ for different truncation orders at a fixed $T=1000$, specifically $K=2, 3, 4, 5$. $I_{\text{sim}}$ exhibits an $O(\delta t^{2K})$ scaling, which is consistent with the truncation error of the GQSP method. This confirms that Eq.~(\ref{eq:slice}) does not introduce an additional error term $\delta t$, as previously suggested by \cite{van2001powerful}. 

In Fig.~\ref{Numerics}(f), we examine the asymptotic behavior of AQC infidelity $\mathcal{I}$ for a system of linear equations, using truncation order $K=2\lfloor \log T \rfloor$  and different orders of scheduling functions at a fixed $\delta t=0.2$. The scheduling functions tested are listed in TABLE \ref{tab:schedule}. The figure demonstrates that the infidelity scales as $O(T^{-2Q-2})$ for $Q=0, 1, 2$ and follows an $O(e^{-T})$ scaling for $Q=\infty$. These numerical results are consistent with the conclusions drawn in Theorem 4. It is worth noting that while $T$ is between 100-1000, amonge most of the cases shown in Fig.~\ref{Numerics}(f), 2nd
order path achieves the best performance, which could be attributed to the trade-off between the coefficient $\beta_{\text{GQSP}}$ and the infidelity scaling $O(T^{-2Q-2})$. The $\infty$ order path, while achieving the best infidelity scaling, has a larger coefficient $\beta_{\text{GQSP}}$, leading to a worse performance in this range of $T$.

\section{Discussion}

In this work, we have developed a comprehensive theory of error analysis in digital adiabatic evolution for general Hamiltonian simulation methods. Our analysis reveals that time-dependent adiabatic evolution can be realized via time-independent Hamiltonian simulation methods. Furthermore, in the context of Trotterization, there is a form of weak error cancellation, where the infidelity does not accumulate with the total evolution time $T$. In contrast, for advanced Hamiltonian simulation method GQSP, we observe strong error cancellation, with the infidelity consistently decreasing as the evolution time $T$ increases. Furthermore, we demonstrate that the total infidelity can decrease exponentially with $T$ when specific adiabatic paths are chosen, particularly by imposing certain boundary conditions on the derivatives of the scheduling function at the start and end of the evolution.
These findings present a promising pathway towards the development of more efficient and high-fidelity state preparation protocols using adiabatic evolution. Our results suggest that it is possible to significantly improve the performance of digital adiabatic evolution by carefully selecting and optimizing both the Hamiltonian simulation method and the scheduling function.

Future work could explore whether our findings can be combined with other remedial techniques for AQC, like counter-diabatic driving~\cite{finvzgar2025counterdiabatic}, fast quasi-adiabatic dynamics~\cite{wan2020fast} or adding a catalyst Hamiltonian~\cite{hormozi2017nonstoquastic}. These techniques are designed to mitigate non-adiabatic transitions during the evolution process. By integrating these methods with our error analysis framework, we could potentially develop more efficient adiabatic quantum algorithms that further reduce errors and improve overall performance. Additionally, integrating the state preparation strategies discussed in this work with other fault-tolerant quantum computation components, such as quantum phase estimation~\cite{preskill1998reliable}, could lead to a more robust framework for achieving practical quantum advantage. This integration would be particularly relevant in the context of the recent discussions on evaluating quantum advantage~\cite{lee2023evaluating}.
Moreover, future research could study the relation between error cancellation and error interference observed in Refs.~\cite{tranDestructiveErrorInterference2020, laydenFirstOrderTrotterError2022, yiSpectralAnalysisProduct2021} and study error-cancellation in more general Hamiltonian simulation tasks.
Lastly, our work focused on applying our results to solving the electronic structure problem and systems of linear equations. However, extending these results to other general quantum tasks presents intriguing opportunities for future research. Exploring how error-cancellation techniques can be utilized in diverse quantum algorithms, such as quantum materials, quantum machine learning, and quantum optimization, could significantly broaden the impact of our findings. Additionally, investigating the implications of our error-cancellation strategies in near-term quantum devices and their potential to improve the efficiency of quantum simulations in various scientific fields would be valuable avenues for further study.

\section{Acknowledgement}

The authors thank Lucas Kocia Kovalsky for insightful discussions on the self-healing effect. YL and XY are supported by the National Natural Science Foundation of China Grant (Grant No.~12361161602), NSAF (Grant No.~U2330201), and the Innovation Program for Quantum Science and Technology (Grant No.~2023ZD0300200). 
DA acknowledges the support by the Fundamental Research Funds for the Central Universities, Peking University. Q.Z. acknowledges funding from Innovation Program for Quantum Science and Technology via Project 2024ZD0301900, National Natural Science Foundation of China (NSFC) via Project No.~12347104 and No.~12305030, Guangdong Basic and Applied Basic Research Foundation via Project 2023A1515012185, Hong Kong Research Grant Council (RGC) via No.~27300823, N\_HKU718/23, and R6010-23, Guangdong Provincial Quantum Science Strategic Initiative No.~GDZX2303007, HKU Seed Fund for Basic Research for New Staff via Project 2201100596. The numerics is supported by the High-performance Computing Platform of Peking University.

\section{Code availability}
The Python source code for the the numerical simulation part is available via the github repository: \url{https://github.com/QuantyyLu/AQC_Simulation}.

\bibliography{main}

\begin{thebibliography}{53}%
\makeatletter
\providecommand \@ifxundefined [1]{%
 \@ifx{#1\undefined}
}%
\providecommand \@ifnum [1]{%
 \ifnum #1\expandafter \@firstoftwo
 \else \expandafter \@secondoftwo
 \fi
}%
\providecommand \@ifx [1]{%
 \ifx #1\expandafter \@firstoftwo
 \else \expandafter \@secondoftwo
 \fi
}%
\providecommand \natexlab [1]{#1}%
\providecommand \enquote  [1]{``#1''}%
\providecommand \bibnamefont  [1]{#1}%
\providecommand \bibfnamefont [1]{#1}%
\providecommand \citenamefont [1]{#1}%
\providecommand \href@noop [0]{\@secondoftwo}%
\providecommand \href [0]{\begingroup \@sanitize@url \@href}%
\providecommand \@href[1]{\@@startlink{#1}\@@href}%
\providecommand \@@href[1]{\endgroup#1\@@endlink}%
\providecommand \@sanitize@url [0]{\catcode `\\12\catcode `\$12\catcode `\&12\catcode `\#12\catcode `\^12\catcode `\_12\catcode `\%12\relax}%
\providecommand \@@startlink[1]{}%
\providecommand \@@endlink[0]{}%
\providecommand \url  [0]{\begingroup\@sanitize@url \@url }%
\providecommand \@url [1]{\endgroup\@href {#1}{\urlprefix }}%
\providecommand \urlprefix  [0]{URL }%
\providecommand \Eprint [0]{\href }%
\providecommand \doibase [0]{https://doi.org/}%
\providecommand \selectlanguage [0]{\@gobble}%
\providecommand \bibinfo  [0]{\@secondoftwo}%
\providecommand \bibfield  [0]{\@secondoftwo}%
\providecommand \translation [1]{[#1]}%
\providecommand \BibitemOpen [0]{}%
\providecommand \bibitemStop [0]{}%
\providecommand \bibitemNoStop [0]{.\EOS\space}%
\providecommand \EOS [0]{\spacefactor3000\relax}%
\providecommand \BibitemShut  [1]{\csname bibitem#1\endcsname}%
\let\auto@bib@innerbib\@empty
\bibitem [{\citenamefont {Abrams}\ and\ \citenamefont {Lloyd}(1999)}]{Abrams99}%
  \BibitemOpen
  \bibfield  {author} {\bibinfo {author} {\bibfnamefont {D.~S.}\ \bibnamefont {Abrams}}\ and\ \bibinfo {author} {\bibfnamefont {S.}~\bibnamefont {Lloyd}},\ }\bibfield  {title} {\bibinfo {title} {Quantum algorithm providing exponential speed increase for finding eigenvalues and eigenvectors},\ }\href {https://doi.org/10.1103/PhysRevLett.83.5162} {\bibfield  {journal} {\bibinfo  {journal} {Phys. Rev. Lett.}\ }\textbf {\bibinfo {volume} {83}},\ \bibinfo {pages} {5162} (\bibinfo {year} {1999})}\BibitemShut {NoStop}%
\bibitem [{\citenamefont {Aspuru-Guzik}\ \emph {et~al.}(2005)\citenamefont {Aspuru-Guzik}, \citenamefont {Dutoi}, \citenamefont {Love},\ and\ \citenamefont {Head-Gordon}}]{aspuru2005simulated}%
  \BibitemOpen
  \bibfield  {author} {\bibinfo {author} {\bibfnamefont {A.}~\bibnamefont {Aspuru-Guzik}}, \bibinfo {author} {\bibfnamefont {A.~D.}\ \bibnamefont {Dutoi}}, \bibinfo {author} {\bibfnamefont {P.~J.}\ \bibnamefont {Love}},\ and\ \bibinfo {author} {\bibfnamefont {M.}~\bibnamefont {Head-Gordon}},\ }\bibfield  {title} {\bibinfo {title} {Simulated quantum computation of molecular energies},\ }\href {https://doi.org/10.1126/science.1113479} {\bibfield  {journal} {\bibinfo  {journal} {Science}\ }\textbf {\bibinfo {volume} {309}},\ \bibinfo {pages} {1704} (\bibinfo {year} {2005})}\BibitemShut {NoStop}%
\bibitem [{\citenamefont {Lee}\ \emph {et~al.}(2023)\citenamefont {Lee}, \citenamefont {Lee}, \citenamefont {Zhai}, \citenamefont {Tong}, \citenamefont {Dalzell}, \citenamefont {Kumar}, \citenamefont {Helms}, \citenamefont {Gray}, \citenamefont {Cui}, \citenamefont {Liu} \emph {et~al.}}]{lee2023evaluating}%
  \BibitemOpen
  \bibfield  {author} {\bibinfo {author} {\bibfnamefont {S.}~\bibnamefont {Lee}}, \bibinfo {author} {\bibfnamefont {J.}~\bibnamefont {Lee}}, \bibinfo {author} {\bibfnamefont {H.}~\bibnamefont {Zhai}}, \bibinfo {author} {\bibfnamefont {Y.}~\bibnamefont {Tong}}, \bibinfo {author} {\bibfnamefont {A.~M.}\ \bibnamefont {Dalzell}}, \bibinfo {author} {\bibfnamefont {A.}~\bibnamefont {Kumar}}, \bibinfo {author} {\bibfnamefont {P.}~\bibnamefont {Helms}}, \bibinfo {author} {\bibfnamefont {J.}~\bibnamefont {Gray}}, \bibinfo {author} {\bibfnamefont {Z.-H.}\ \bibnamefont {Cui}}, \bibinfo {author} {\bibfnamefont {W.}~\bibnamefont {Liu}}, \emph {et~al.},\ }\bibfield  {title} {\bibinfo {title} {Evaluating the evidence for exponential quantum advantage in ground-state quantum chemistry},\ }\href@noop {} {\bibfield  {journal} {\bibinfo  {journal} {Nature communications}\ }\textbf {\bibinfo {volume} {14}},\ \bibinfo {pages} {1952} (\bibinfo {year} {2023})}\BibitemShut {NoStop}%
\bibitem [{\citenamefont {Farhi}\ \emph {et~al.}(2000)\citenamefont {Farhi}, \citenamefont {Goldstone}, \citenamefont {Gutmann},\ and\ \citenamefont {Sipser}}]{farhi2000quantum}%
  \BibitemOpen
  \bibfield  {author} {\bibinfo {author} {\bibfnamefont {E.}~\bibnamefont {Farhi}}, \bibinfo {author} {\bibfnamefont {J.}~\bibnamefont {Goldstone}}, \bibinfo {author} {\bibfnamefont {S.}~\bibnamefont {Gutmann}},\ and\ \bibinfo {author} {\bibfnamefont {M.}~\bibnamefont {Sipser}},\ }\bibfield  {title} {\bibinfo {title} {Quantum computation by adiabatic evolution},\ }\href@noop {} {\bibfield  {journal} {\bibinfo  {journal} {arXiv preprint quant-ph/0001106}\ } (\bibinfo {year} {2000})}\BibitemShut {NoStop}%
\bibitem [{\citenamefont {Kadowaki}\ and\ \citenamefont {Nishimori}(1998)}]{kadowaki1998quantum}%
  \BibitemOpen
  \bibfield  {author} {\bibinfo {author} {\bibfnamefont {T.}~\bibnamefont {Kadowaki}}\ and\ \bibinfo {author} {\bibfnamefont {H.}~\bibnamefont {Nishimori}},\ }\bibfield  {title} {\bibinfo {title} {Quantum annealing in the transverse ising model},\ }\href@noop {} {\bibfield  {journal} {\bibinfo  {journal} {Physical Review E}\ }\textbf {\bibinfo {volume} {58}},\ \bibinfo {pages} {5355} (\bibinfo {year} {1998})}\BibitemShut {NoStop}%
\bibitem [{\citenamefont {Santoro}\ and\ \citenamefont {Tosatti}(2006)}]{santoro2006optimization}%
  \BibitemOpen
  \bibfield  {author} {\bibinfo {author} {\bibfnamefont {G.~E.}\ \bibnamefont {Santoro}}\ and\ \bibinfo {author} {\bibfnamefont {E.}~\bibnamefont {Tosatti}},\ }\bibfield  {title} {\bibinfo {title} {Optimization using quantum mechanics: quantum annealing through adiabatic evolution},\ }\href@noop {} {\bibfield  {journal} {\bibinfo  {journal} {Journal of Physics A: Mathematical and General}\ }\textbf {\bibinfo {volume} {39}},\ \bibinfo {pages} {R393} (\bibinfo {year} {2006})}\BibitemShut {NoStop}%
\bibitem [{\citenamefont {Vandersypen}\ and\ \citenamefont {Chuang}(2004)}]{vandersypen2004nmr}%
  \BibitemOpen
  \bibfield  {author} {\bibinfo {author} {\bibfnamefont {L.~M.}\ \bibnamefont {Vandersypen}}\ and\ \bibinfo {author} {\bibfnamefont {I.~L.}\ \bibnamefont {Chuang}},\ }\bibfield  {title} {\bibinfo {title} {Nmr techniques for quantum control and computation},\ }\href@noop {} {\bibfield  {journal} {\bibinfo  {journal} {Reviews of modern physics}\ }\textbf {\bibinfo {volume} {76}},\ \bibinfo {pages} {1037} (\bibinfo {year} {2004})}\BibitemShut {NoStop}%
\bibitem [{\citenamefont {Zhou}\ \emph {et~al.}(2017)\citenamefont {Zhou}, \citenamefont {Baksic}, \citenamefont {Ribeiro}, \citenamefont {Yale}, \citenamefont {Heremans}, \citenamefont {Jerger}, \citenamefont {Auer}, \citenamefont {Burkard}, \citenamefont {Clerk},\ and\ \citenamefont {Awschalom}}]{zhou2017accelerated}%
  \BibitemOpen
  \bibfield  {author} {\bibinfo {author} {\bibfnamefont {B.~B.}\ \bibnamefont {Zhou}}, \bibinfo {author} {\bibfnamefont {A.}~\bibnamefont {Baksic}}, \bibinfo {author} {\bibfnamefont {H.}~\bibnamefont {Ribeiro}}, \bibinfo {author} {\bibfnamefont {C.~G.}\ \bibnamefont {Yale}}, \bibinfo {author} {\bibfnamefont {F.~J.}\ \bibnamefont {Heremans}}, \bibinfo {author} {\bibfnamefont {P.~C.}\ \bibnamefont {Jerger}}, \bibinfo {author} {\bibfnamefont {A.}~\bibnamefont {Auer}}, \bibinfo {author} {\bibfnamefont {G.}~\bibnamefont {Burkard}}, \bibinfo {author} {\bibfnamefont {A.~A.}\ \bibnamefont {Clerk}},\ and\ \bibinfo {author} {\bibfnamefont {D.~D.}\ \bibnamefont {Awschalom}},\ }\bibfield  {title} {\bibinfo {title} {Accelerated quantum control using superadiabatic dynamics in a solid-state lambda system},\ }\href@noop {} {\bibfield  {journal} {\bibinfo  {journal} {Nature Physics}\ }\textbf {\bibinfo {volume} {13}},\ \bibinfo {pages} {330} (\bibinfo {year} {2017})}\BibitemShut {NoStop}%
\bibitem [{\citenamefont {Van~Dam}\ \emph {et~al.}(2001)\citenamefont {Van~Dam}, \citenamefont {Mosca},\ and\ \citenamefont {Vazirani}}]{van2001powerful}%
  \BibitemOpen
  \bibfield  {author} {\bibinfo {author} {\bibfnamefont {W.}~\bibnamefont {Van~Dam}}, \bibinfo {author} {\bibfnamefont {M.}~\bibnamefont {Mosca}},\ and\ \bibinfo {author} {\bibfnamefont {U.}~\bibnamefont {Vazirani}},\ }\bibfield  {title} {\bibinfo {title} {How powerful is adiabatic quantum computation?},\ }in\ \href@noop {} {\emph {\bibinfo {booktitle} {Proceedings 42nd IEEE symposium on foundations of computer science}}}\ (\bibinfo {organization} {IEEE},\ \bibinfo {year} {2001})\ pp.\ \bibinfo {pages} {279--287}\BibitemShut {NoStop}%
\bibitem [{\citenamefont {Roland}\ and\ \citenamefont {Cerf}(2002)}]{roland2002quantum}%
  \BibitemOpen
  \bibfield  {author} {\bibinfo {author} {\bibfnamefont {J.}~\bibnamefont {Roland}}\ and\ \bibinfo {author} {\bibfnamefont {N.~J.}\ \bibnamefont {Cerf}},\ }\bibfield  {title} {\bibinfo {title} {Quantum search by local adiabatic evolution},\ }\href@noop {} {\bibfield  {journal} {\bibinfo  {journal} {Physical Review A}\ }\textbf {\bibinfo {volume} {65}},\ \bibinfo {pages} {042308} (\bibinfo {year} {2002})}\BibitemShut {NoStop}%
\bibitem [{\citenamefont {Somma}\ \emph {et~al.}(2012)\citenamefont {Somma}, \citenamefont {Nagaj},\ and\ \citenamefont {Kieferov{\'a}}}]{somma2012quantum}%
  \BibitemOpen
  \bibfield  {author} {\bibinfo {author} {\bibfnamefont {R.~D.}\ \bibnamefont {Somma}}, \bibinfo {author} {\bibfnamefont {D.}~\bibnamefont {Nagaj}},\ and\ \bibinfo {author} {\bibfnamefont {M.}~\bibnamefont {Kieferov{\'a}}},\ }\bibfield  {title} {\bibinfo {title} {Quantum speedup by quantum annealing},\ }\href@noop {} {\bibfield  {journal} {\bibinfo  {journal} {Physical review letters}\ }\textbf {\bibinfo {volume} {109}},\ \bibinfo {pages} {050501} (\bibinfo {year} {2012})}\BibitemShut {NoStop}%
\bibitem [{\citenamefont {Garnerone}\ \emph {et~al.}(2012)\citenamefont {Garnerone}, \citenamefont {Zanardi},\ and\ \citenamefont {Lidar}}]{garnerone2012adiabatic}%
  \BibitemOpen
  \bibfield  {author} {\bibinfo {author} {\bibfnamefont {S.}~\bibnamefont {Garnerone}}, \bibinfo {author} {\bibfnamefont {P.}~\bibnamefont {Zanardi}},\ and\ \bibinfo {author} {\bibfnamefont {D.~A.}\ \bibnamefont {Lidar}},\ }\bibfield  {title} {\bibinfo {title} {Adiabatic quantum algorithm for search engine ranking},\ }\href@noop {} {\bibfield  {journal} {\bibinfo  {journal} {Physical review letters}\ }\textbf {\bibinfo {volume} {108}},\ \bibinfo {pages} {230506} (\bibinfo {year} {2012})}\BibitemShut {NoStop}%
\bibitem [{\citenamefont {McArdle}\ \emph {et~al.}(2020)\citenamefont {McArdle}, \citenamefont {Endo}, \citenamefont {Aspuru-Guzik}, \citenamefont {Benjamin},\ and\ \citenamefont {Yuan}}]{RevModPhys.92.015003}%
  \BibitemOpen
  \bibfield  {author} {\bibinfo {author} {\bibfnamefont {S.}~\bibnamefont {McArdle}}, \bibinfo {author} {\bibfnamefont {S.}~\bibnamefont {Endo}}, \bibinfo {author} {\bibfnamefont {A.}~\bibnamefont {Aspuru-Guzik}}, \bibinfo {author} {\bibfnamefont {S.~C.}\ \bibnamefont {Benjamin}},\ and\ \bibinfo {author} {\bibfnamefont {X.}~\bibnamefont {Yuan}},\ }\bibfield  {title} {\bibinfo {title} {Quantum computational chemistry},\ }\href {https://doi.org/10.1103/RevModPhys.92.015003} {\bibfield  {journal} {\bibinfo  {journal} {Rev. Mod. Phys.}\ }\textbf {\bibinfo {volume} {92}},\ \bibinfo {pages} {015003} (\bibinfo {year} {2020})}\BibitemShut {NoStop}%
\bibitem [{\citenamefont {Cao}\ \emph {et~al.}(2019)\citenamefont {Cao}, \citenamefont {Romero}, \citenamefont {Olson}, \citenamefont {Degroote}, \citenamefont {Johnson}, \citenamefont {Kieferov{\'a}}, \citenamefont {Kivlichan}, \citenamefont {Menke}, \citenamefont {Peropadre}, \citenamefont {Sawaya} \emph {et~al.}}]{cao2019quantum}%
  \BibitemOpen
  \bibfield  {author} {\bibinfo {author} {\bibfnamefont {Y.}~\bibnamefont {Cao}}, \bibinfo {author} {\bibfnamefont {J.}~\bibnamefont {Romero}}, \bibinfo {author} {\bibfnamefont {J.~P.}\ \bibnamefont {Olson}}, \bibinfo {author} {\bibfnamefont {M.}~\bibnamefont {Degroote}}, \bibinfo {author} {\bibfnamefont {P.~D.}\ \bibnamefont {Johnson}}, \bibinfo {author} {\bibfnamefont {M.}~\bibnamefont {Kieferov{\'a}}}, \bibinfo {author} {\bibfnamefont {I.~D.}\ \bibnamefont {Kivlichan}}, \bibinfo {author} {\bibfnamefont {T.}~\bibnamefont {Menke}}, \bibinfo {author} {\bibfnamefont {B.}~\bibnamefont {Peropadre}}, \bibinfo {author} {\bibfnamefont {N.~P.}\ \bibnamefont {Sawaya}}, \emph {et~al.},\ }\bibfield  {title} {\bibinfo {title} {Quantum chemistry in the age of quantum computing},\ }\href@noop {} {\bibfield  {journal} {\bibinfo  {journal} {Chemical reviews}\ }\textbf {\bibinfo {volume} {119}},\ \bibinfo {pages} {10856} (\bibinfo {year} {2019})}\BibitemShut {NoStop}%
\bibitem [{\citenamefont {Bauer}\ \emph {et~al.}(2020)\citenamefont {Bauer}, \citenamefont {Bravyi}, \citenamefont {Motta},\ and\ \citenamefont {Chan}}]{Bauer_2020}%
  \BibitemOpen
  \bibfield  {author} {\bibinfo {author} {\bibfnamefont {B.}~\bibnamefont {Bauer}}, \bibinfo {author} {\bibfnamefont {S.}~\bibnamefont {Bravyi}}, \bibinfo {author} {\bibfnamefont {M.}~\bibnamefont {Motta}},\ and\ \bibinfo {author} {\bibfnamefont {G.~K.-L.}\ \bibnamefont {Chan}},\ }\bibfield  {title} {\bibinfo {title} {Quantum algorithms for quantum chemistry and quantum materials science},\ }\href {https://doi.org/10.1021/acs.chemrev.9b00829} {\bibfield  {journal} {\bibinfo  {journal} {Chemical Reviews}\ }\textbf {\bibinfo {volume} {120}},\ \bibinfo {pages} {12685–12717} (\bibinfo {year} {2020})}\BibitemShut {NoStop}%
\bibitem [{\citenamefont {Kempe}\ \emph {et~al.}(2006)\citenamefont {Kempe}, \citenamefont {Kitaev},\ and\ \citenamefont {Regev}}]{kempe2006complexity}%
  \BibitemOpen
  \bibfield  {author} {\bibinfo {author} {\bibfnamefont {J.}~\bibnamefont {Kempe}}, \bibinfo {author} {\bibfnamefont {A.}~\bibnamefont {Kitaev}},\ and\ \bibinfo {author} {\bibfnamefont {O.}~\bibnamefont {Regev}},\ }\bibfield  {title} {\bibinfo {title} {The complexity of the local hamiltonian problem},\ }\href@noop {} {\bibfield  {journal} {\bibinfo  {journal} {Siam journal on computing}\ }\textbf {\bibinfo {volume} {35}},\ \bibinfo {pages} {1070} (\bibinfo {year} {2006})}\BibitemShut {NoStop}%
\bibitem [{\citenamefont {Albash}\ and\ \citenamefont {Lidar}(2018)}]{RevModPhys.90.015002}%
  \BibitemOpen
  \bibfield  {author} {\bibinfo {author} {\bibfnamefont {T.}~\bibnamefont {Albash}}\ and\ \bibinfo {author} {\bibfnamefont {D.~A.}\ \bibnamefont {Lidar}},\ }\bibfield  {title} {\bibinfo {title} {Adiabatic quantum computation},\ }\href {https://doi.org/10.1103/RevModPhys.90.015002} {\bibfield  {journal} {\bibinfo  {journal} {Rev. Mod. Phys.}\ }\textbf {\bibinfo {volume} {90}},\ \bibinfo {pages} {015002} (\bibinfo {year} {2018})}\BibitemShut {NoStop}%
\bibitem [{\citenamefont {Nielsen}\ and\ \citenamefont {Chuang}(2010)}]{nielsen2010quantum}%
  \BibitemOpen
  \bibfield  {author} {\bibinfo {author} {\bibfnamefont {M.~A.}\ \bibnamefont {Nielsen}}\ and\ \bibinfo {author} {\bibfnamefont {I.~L.}\ \bibnamefont {Chuang}},\ }\href@noop {} {\emph {\bibinfo {title} {Quantum computation and quantum information}}}\ (\bibinfo  {publisher} {Cambridge university press},\ \bibinfo {year} {2010})\BibitemShut {NoStop}%
\bibitem [{\citenamefont {Childs}\ \emph {et~al.}(2021)\citenamefont {Childs}, \citenamefont {Su}, \citenamefont {Tran}, \citenamefont {Wiebe},\ and\ \citenamefont {Zhu}}]{childs2021theory}%
  \BibitemOpen
  \bibfield  {author} {\bibinfo {author} {\bibfnamefont {A.~M.}\ \bibnamefont {Childs}}, \bibinfo {author} {\bibfnamefont {Y.}~\bibnamefont {Su}}, \bibinfo {author} {\bibfnamefont {M.~C.}\ \bibnamefont {Tran}}, \bibinfo {author} {\bibfnamefont {N.}~\bibnamefont {Wiebe}},\ and\ \bibinfo {author} {\bibfnamefont {S.}~\bibnamefont {Zhu}},\ }\bibfield  {title} {\bibinfo {title} {Theory of trotter error with commutator scaling},\ }\href {https://doi.org/10.1103/PhysRevX.11.011020} {\bibfield  {journal} {\bibinfo  {journal} {Phys. Rev. X}\ }\textbf {\bibinfo {volume} {11}},\ \bibinfo {pages} {011020} (\bibinfo {year} {2021})}\BibitemShut {NoStop}%
\bibitem [{\citenamefont {Childs}\ \emph {et~al.}(2018)\citenamefont {Childs}, \citenamefont {Maslov}, \citenamefont {Nam}, \citenamefont {Ross},\ and\ \citenamefont {Su}}]{childs2018toward}%
  \BibitemOpen
  \bibfield  {author} {\bibinfo {author} {\bibfnamefont {A.~M.}\ \bibnamefont {Childs}}, \bibinfo {author} {\bibfnamefont {D.}~\bibnamefont {Maslov}}, \bibinfo {author} {\bibfnamefont {Y.}~\bibnamefont {Nam}}, \bibinfo {author} {\bibfnamefont {N.~J.}\ \bibnamefont {Ross}},\ and\ \bibinfo {author} {\bibfnamefont {Y.}~\bibnamefont {Su}},\ }\bibfield  {title} {\bibinfo {title} {Toward the first quantum simulation with quantum speedup},\ }\href@noop {} {\bibfield  {journal} {\bibinfo  {journal} {Proceedings of the National Academy of Sciences}\ }\textbf {\bibinfo {volume} {115}},\ \bibinfo {pages} {9456} (\bibinfo {year} {2018})}\BibitemShut {NoStop}%
\bibitem [{\citenamefont {Jansen}\ \emph {et~al.}(2007)\citenamefont {Jansen}, \citenamefont {Ruskai},\ and\ \citenamefont {Seiler}}]{jansen2007bounds}%
  \BibitemOpen
  \bibfield  {author} {\bibinfo {author} {\bibfnamefont {S.}~\bibnamefont {Jansen}}, \bibinfo {author} {\bibfnamefont {M.-B.}\ \bibnamefont {Ruskai}},\ and\ \bibinfo {author} {\bibfnamefont {R.}~\bibnamefont {Seiler}},\ }\bibfield  {title} {\bibinfo {title} {Bounds for the adiabatic approximation with applications to quantum computation},\ }\href@noop {} {\bibfield  {journal} {\bibinfo  {journal} {Journal of Mathematical Physics}\ }\textbf {\bibinfo {volume} {48}} (\bibinfo {year} {2007})}\BibitemShut {NoStop}%
\bibitem [{\citenamefont {Elgart}\ and\ \citenamefont {Hagedorn}(2012)}]{elgart2012note}%
  \BibitemOpen
  \bibfield  {author} {\bibinfo {author} {\bibfnamefont {A.}~\bibnamefont {Elgart}}\ and\ \bibinfo {author} {\bibfnamefont {G.~A.}\ \bibnamefont {Hagedorn}},\ }\bibfield  {title} {\bibinfo {title} {A note on the switching adiabatic theorem},\ }\href@noop {} {\bibfield  {journal} {\bibinfo  {journal} {Journal of Mathematical Physics}\ }\textbf {\bibinfo {volume} {53}} (\bibinfo {year} {2012})}\BibitemShut {NoStop}%
\bibitem [{\citenamefont {Motlagh}\ and\ \citenamefont {Wiebe}(2024)}]{motlagh2024generalized}%
  \BibitemOpen
  \bibfield  {author} {\bibinfo {author} {\bibfnamefont {D.}~\bibnamefont {Motlagh}}\ and\ \bibinfo {author} {\bibfnamefont {N.}~\bibnamefont {Wiebe}},\ }\bibfield  {title} {\bibinfo {title} {Generalized quantum signal processing},\ }\href@noop {} {\bibfield  {journal} {\bibinfo  {journal} {PRX Quantum}\ }\textbf {\bibinfo {volume} {5}},\ \bibinfo {pages} {020368} (\bibinfo {year} {2024})}\BibitemShut {NoStop}%
\bibitem [{\citenamefont {Kieferov{\'a}}\ \emph {et~al.}(2019)\citenamefont {Kieferov{\'a}}, \citenamefont {Scherer},\ and\ \citenamefont {Berry}}]{kieferova2019simulating}%
  \BibitemOpen
  \bibfield  {author} {\bibinfo {author} {\bibfnamefont {M.}~\bibnamefont {Kieferov{\'a}}}, \bibinfo {author} {\bibfnamefont {A.}~\bibnamefont {Scherer}},\ and\ \bibinfo {author} {\bibfnamefont {D.~W.}\ \bibnamefont {Berry}},\ }\bibfield  {title} {\bibinfo {title} {Simulating the dynamics of time-dependent hamiltonians with a truncated dyson series},\ }\href@noop {} {\bibfield  {journal} {\bibinfo  {journal} {Physical Review A}\ }\textbf {\bibinfo {volume} {99}},\ \bibinfo {pages} {042314} (\bibinfo {year} {2019})}\BibitemShut {NoStop}%
\bibitem [{\citenamefont {Watkins}\ \emph {et~al.}(2024)\citenamefont {Watkins}, \citenamefont {Wiebe}, \citenamefont {Roggero},\ and\ \citenamefont {Lee}}]{watkins2024time}%
  \BibitemOpen
  \bibfield  {author} {\bibinfo {author} {\bibfnamefont {J.}~\bibnamefont {Watkins}}, \bibinfo {author} {\bibfnamefont {N.}~\bibnamefont {Wiebe}}, \bibinfo {author} {\bibfnamefont {A.}~\bibnamefont {Roggero}},\ and\ \bibinfo {author} {\bibfnamefont {D.}~\bibnamefont {Lee}},\ }\bibfield  {title} {\bibinfo {title} {Time-dependent hamiltonian simulation using discrete-clock constructions},\ }\href@noop {} {\bibfield  {journal} {\bibinfo  {journal} {PRX Quantum}\ }\textbf {\bibinfo {volume} {5}},\ \bibinfo {pages} {040316} (\bibinfo {year} {2024})}\BibitemShut {NoStop}%
\bibitem [{\citenamefont {Suba{\c{s}}{\i}}\ \emph {et~al.}(2019)\citenamefont {Suba{\c{s}}{\i}}, \citenamefont {Somma},\ and\ \citenamefont {Orsucci}}]{subacsi2019quantum}%
  \BibitemOpen
  \bibfield  {author} {\bibinfo {author} {\bibfnamefont {Y.}~\bibnamefont {Suba{\c{s}}{\i}}}, \bibinfo {author} {\bibfnamefont {R.~D.}\ \bibnamefont {Somma}},\ and\ \bibinfo {author} {\bibfnamefont {D.}~\bibnamefont {Orsucci}},\ }\bibfield  {title} {\bibinfo {title} {Quantum algorithms for systems of linear equations inspired by adiabatic quantum computing},\ }\href@noop {} {\bibfield  {journal} {\bibinfo  {journal} {Physical review letters}\ }\textbf {\bibinfo {volume} {122}},\ \bibinfo {pages} {060504} (\bibinfo {year} {2019})}\BibitemShut {NoStop}%
\bibitem [{\citenamefont {Preskill}(1998)}]{preskill1998reliable}%
  \BibitemOpen
  \bibfield  {author} {\bibinfo {author} {\bibfnamefont {J.}~\bibnamefont {Preskill}},\ }\bibfield  {title} {\bibinfo {title} {Reliable quantum computers},\ }\href@noop {} {\bibfield  {journal} {\bibinfo  {journal} {Proceedings of the Royal Society of London. Series A: Mathematical, Physical and Engineering Sciences}\ }\textbf {\bibinfo {volume} {454}},\ \bibinfo {pages} {385} (\bibinfo {year} {1998})}\BibitemShut {NoStop}%
\bibitem [{\citenamefont {Preskill}(2018)}]{preskill2018quantum}%
  \BibitemOpen
  \bibfield  {author} {\bibinfo {author} {\bibfnamefont {J.}~\bibnamefont {Preskill}},\ }\bibfield  {title} {\bibinfo {title} {Quantum computing in the nisq era and beyond},\ }\href@noop {} {\bibfield  {journal} {\bibinfo  {journal} {Quantum}\ }\textbf {\bibinfo {volume} {2}},\ \bibinfo {pages} {79} (\bibinfo {year} {2018})}\BibitemShut {NoStop}%
\bibitem [{\citenamefont {Degani}\ and\ \citenamefont {Schiff}(2006)}]{degani2006rcms}%
  \BibitemOpen
  \bibfield  {author} {\bibinfo {author} {\bibfnamefont {I.}~\bibnamefont {Degani}}\ and\ \bibinfo {author} {\bibfnamefont {J.}~\bibnamefont {Schiff}},\ }\bibfield  {title} {\bibinfo {title} {Rcms: Right correction magnus series approach for oscillatory odes},\ }\href@noop {} {\bibfield  {journal} {\bibinfo  {journal} {Journal of computational and applied mathematics}\ }\textbf {\bibinfo {volume} {193}},\ \bibinfo {pages} {413} (\bibinfo {year} {2006})}\BibitemShut {NoStop}%
\bibitem [{\citenamefont {Iserles}\ and\ \citenamefont {N{\o}rsett}(2005)}]{iserles2005efficient}%
  \BibitemOpen
  \bibfield  {author} {\bibinfo {author} {\bibfnamefont {A.}~\bibnamefont {Iserles}}\ and\ \bibinfo {author} {\bibfnamefont {S.~P.}\ \bibnamefont {N{\o}rsett}},\ }\bibfield  {title} {\bibinfo {title} {Efficient quadrature of highly oscillatory integrals using derivatives},\ }\href@noop {} {\bibfield  {journal} {\bibinfo  {journal} {Proceedings of the Royal Society A: Mathematical, Physical and Engineering Sciences}\ }\textbf {\bibinfo {volume} {461}},\ \bibinfo {pages} {1383} (\bibinfo {year} {2005})}\BibitemShut {NoStop}%
\bibitem [{\citenamefont {Suzuki}(1991)}]{suzukiGeneralTheoryFractal1991}%
  \BibitemOpen
  \bibfield  {author} {\bibinfo {author} {\bibfnamefont {M.}~\bibnamefont {Suzuki}},\ }\bibfield  {title} {\bibinfo {title} {General theory of fractal path integrals with applications to many-body theories and statistical physics},\ }\href {https://doi.org/10.1063/1.529425} {\bibfield  {journal} {\bibinfo  {journal} {Journal of Mathematical Physics}\ }\textbf {\bibinfo {volume} {32}},\ \bibinfo {pages} {400} (\bibinfo {year} {1991})}\BibitemShut {NoStop}%
\bibitem [{\citenamefont {Blatt}\ and\ \citenamefont {Roos}(2012)}]{blatt2012quantum}%
  \BibitemOpen
  \bibfield  {author} {\bibinfo {author} {\bibfnamefont {R.}~\bibnamefont {Blatt}}\ and\ \bibinfo {author} {\bibfnamefont {C.~F.}\ \bibnamefont {Roos}},\ }\bibfield  {title} {\bibinfo {title} {Quantum simulations with trapped ions},\ }\href@noop {} {\bibfield  {journal} {\bibinfo  {journal} {Nature Physics}\ }\textbf {\bibinfo {volume} {8}},\ \bibinfo {pages} {277} (\bibinfo {year} {2012})}\BibitemShut {NoStop}%
\bibitem [{\citenamefont {Gross}\ and\ \citenamefont {Bloch}(2017)}]{gross2017quantum}%
  \BibitemOpen
  \bibfield  {author} {\bibinfo {author} {\bibfnamefont {C.}~\bibnamefont {Gross}}\ and\ \bibinfo {author} {\bibfnamefont {I.}~\bibnamefont {Bloch}},\ }\bibfield  {title} {\bibinfo {title} {Quantum simulations with ultracold atoms in optical lattices},\ }\href@noop {} {\bibfield  {journal} {\bibinfo  {journal} {Science}\ }\textbf {\bibinfo {volume} {357}},\ \bibinfo {pages} {995} (\bibinfo {year} {2017})}\BibitemShut {NoStop}%
\bibitem [{\citenamefont {Aspuru-Guzik}\ and\ \citenamefont {Walther}(2012)}]{aspuru2012photonic}%
  \BibitemOpen
  \bibfield  {author} {\bibinfo {author} {\bibfnamefont {A.}~\bibnamefont {Aspuru-Guzik}}\ and\ \bibinfo {author} {\bibfnamefont {P.}~\bibnamefont {Walther}},\ }\bibfield  {title} {\bibinfo {title} {Photonic quantum simulators},\ }\href@noop {} {\bibfield  {journal} {\bibinfo  {journal} {Nature physics}\ }\textbf {\bibinfo {volume} {8}},\ \bibinfo {pages} {285} (\bibinfo {year} {2012})}\BibitemShut {NoStop}%
\bibitem [{\citenamefont {Daley}\ \emph {et~al.}(2022)\citenamefont {Daley}, \citenamefont {Bloch}, \citenamefont {Kokail}, \citenamefont {Flannigan}, \citenamefont {Pearson}, \citenamefont {Troyer},\ and\ \citenamefont {Zoller}}]{daley2022practical}%
  \BibitemOpen
  \bibfield  {author} {\bibinfo {author} {\bibfnamefont {A.~J.}\ \bibnamefont {Daley}}, \bibinfo {author} {\bibfnamefont {I.}~\bibnamefont {Bloch}}, \bibinfo {author} {\bibfnamefont {C.}~\bibnamefont {Kokail}}, \bibinfo {author} {\bibfnamefont {S.}~\bibnamefont {Flannigan}}, \bibinfo {author} {\bibfnamefont {N.}~\bibnamefont {Pearson}}, \bibinfo {author} {\bibfnamefont {M.}~\bibnamefont {Troyer}},\ and\ \bibinfo {author} {\bibfnamefont {P.}~\bibnamefont {Zoller}},\ }\bibfield  {title} {\bibinfo {title} {Practical quantum advantage in quantum simulation},\ }\href@noop {} {\bibfield  {journal} {\bibinfo  {journal} {Nature}\ }\textbf {\bibinfo {volume} {607}},\ \bibinfo {pages} {667} (\bibinfo {year} {2022})}\BibitemShut {NoStop}%
\bibitem [{\citenamefont {Kovalsky}\ \emph {et~al.}(2023)\citenamefont {Kovalsky}, \citenamefont {Calderon-Vargas}, \citenamefont {Grace}, \citenamefont {Magann}, \citenamefont {Larsen}, \citenamefont {Baczewski},\ and\ \citenamefont {Sarovar}}]{kovalsky2023self}%
  \BibitemOpen
  \bibfield  {author} {\bibinfo {author} {\bibfnamefont {L.~K.}\ \bibnamefont {Kovalsky}}, \bibinfo {author} {\bibfnamefont {F.~A.}\ \bibnamefont {Calderon-Vargas}}, \bibinfo {author} {\bibfnamefont {M.~D.}\ \bibnamefont {Grace}}, \bibinfo {author} {\bibfnamefont {A.~B.}\ \bibnamefont {Magann}}, \bibinfo {author} {\bibfnamefont {J.~B.}\ \bibnamefont {Larsen}}, \bibinfo {author} {\bibfnamefont {A.~D.}\ \bibnamefont {Baczewski}},\ and\ \bibinfo {author} {\bibfnamefont {M.}~\bibnamefont {Sarovar}},\ }\bibfield  {title} {\bibinfo {title} {Self-healing of trotter error in digital adiabatic state preparation},\ }\href@noop {} {\bibfield  {journal} {\bibinfo  {journal} {Physical Review Letters}\ }\textbf {\bibinfo {volume} {131}},\ \bibinfo {pages} {060602} (\bibinfo {year} {2023})}\BibitemShut {NoStop}%
\bibitem [{\citenamefont {Yi}(2021)}]{yi2021success}%
  \BibitemOpen
  \bibfield  {author} {\bibinfo {author} {\bibfnamefont {C.}~\bibnamefont {Yi}},\ }\bibfield  {title} {\bibinfo {title} {Success of digital adiabatic simulation with large trotter step},\ }\href@noop {} {\bibfield  {journal} {\bibinfo  {journal} {Physical Review A}\ }\textbf {\bibinfo {volume} {104}},\ \bibinfo {pages} {052603} (\bibinfo {year} {2021})}\BibitemShut {NoStop}%
\bibitem [{\citenamefont {Layden}(2022{\natexlab{a}})}]{layden2022first}%
  \BibitemOpen
  \bibfield  {author} {\bibinfo {author} {\bibfnamefont {D.}~\bibnamefont {Layden}},\ }\bibfield  {title} {\bibinfo {title} {First-order trotter error from a second-order perspective},\ }\href@noop {} {\bibfield  {journal} {\bibinfo  {journal} {Physical Review Letters}\ }\textbf {\bibinfo {volume} {128}},\ \bibinfo {pages} {210501} (\bibinfo {year} {2022}{\natexlab{a}})}\BibitemShut {NoStop}%
\bibitem [{\citenamefont {Sugisaki}\ \emph {et~al.}(2022)\citenamefont {Sugisaki}, \citenamefont {Toyota}, \citenamefont {Sato}, \citenamefont {Shiomi},\ and\ \citenamefont {Takui}}]{sugisaki2022adiabatic}%
  \BibitemOpen
  \bibfield  {author} {\bibinfo {author} {\bibfnamefont {K.}~\bibnamefont {Sugisaki}}, \bibinfo {author} {\bibfnamefont {K.}~\bibnamefont {Toyota}}, \bibinfo {author} {\bibfnamefont {K.}~\bibnamefont {Sato}}, \bibinfo {author} {\bibfnamefont {D.}~\bibnamefont {Shiomi}},\ and\ \bibinfo {author} {\bibfnamefont {T.}~\bibnamefont {Takui}},\ }\bibfield  {title} {\bibinfo {title} {Adiabatic state preparation of correlated wave functions with nonlinear scheduling functions and broken-symmetry wave functions},\ }\href@noop {} {\bibfield  {journal} {\bibinfo  {journal} {Communications Chemistry}\ }\textbf {\bibinfo {volume} {5}},\ \bibinfo {pages} {84} (\bibinfo {year} {2022})}\BibitemShut {NoStop}%
\bibitem [{\citenamefont {Wiebe}\ \emph {et~al.}(2010)\citenamefont {Wiebe}, \citenamefont {Berry}, \citenamefont {H{\o}yer},\ and\ \citenamefont {Sanders}}]{wiebe2010higher}%
  \BibitemOpen
  \bibfield  {author} {\bibinfo {author} {\bibfnamefont {N.}~\bibnamefont {Wiebe}}, \bibinfo {author} {\bibfnamefont {D.}~\bibnamefont {Berry}}, \bibinfo {author} {\bibfnamefont {P.}~\bibnamefont {H{\o}yer}},\ and\ \bibinfo {author} {\bibfnamefont {B.~C.}\ \bibnamefont {Sanders}},\ }\bibfield  {title} {\bibinfo {title} {Higher order decompositions of ordered operator exponentials},\ }\href@noop {} {\bibfield  {journal} {\bibinfo  {journal} {Journal of Physics A: Mathematical and Theoretical}\ }\textbf {\bibinfo {volume} {43}},\ \bibinfo {pages} {065203} (\bibinfo {year} {2010})}\BibitemShut {NoStop}%
\bibitem [{\citenamefont {Berry}\ \emph {et~al.}(2020)\citenamefont {Berry}, \citenamefont {Childs}, \citenamefont {Su}, \citenamefont {Wang},\ and\ \citenamefont {Wiebe}}]{berry2020time}%
  \BibitemOpen
  \bibfield  {author} {\bibinfo {author} {\bibfnamefont {D.~W.}\ \bibnamefont {Berry}}, \bibinfo {author} {\bibfnamefont {A.~M.}\ \bibnamefont {Childs}}, \bibinfo {author} {\bibfnamefont {Y.}~\bibnamefont {Su}}, \bibinfo {author} {\bibfnamefont {X.}~\bibnamefont {Wang}},\ and\ \bibinfo {author} {\bibfnamefont {N.}~\bibnamefont {Wiebe}},\ }\bibfield  {title} {\bibinfo {title} {Time-dependent hamiltonian simulation with $ l^1$-norm scaling},\ }\href@noop {} {\bibfield  {journal} {\bibinfo  {journal} {Quantum}\ }\textbf {\bibinfo {volume} {4}},\ \bibinfo {pages} {254} (\bibinfo {year} {2020})}\BibitemShut {NoStop}%
\bibitem [{\citenamefont {Hu}\ and\ \citenamefont {Wu}(2016)}]{hu2016optimizing}%
  \BibitemOpen
  \bibfield  {author} {\bibinfo {author} {\bibfnamefont {H.}~\bibnamefont {Hu}}\ and\ \bibinfo {author} {\bibfnamefont {B.}~\bibnamefont {Wu}},\ }\bibfield  {title} {\bibinfo {title} {Optimizing the quantum adiabatic algorithm},\ }\href@noop {} {\bibfield  {journal} {\bibinfo  {journal} {Physical Review A}\ }\textbf {\bibinfo {volume} {93}},\ \bibinfo {pages} {012345} (\bibinfo {year} {2016})}\BibitemShut {NoStop}%
\bibitem [{\citenamefont {Rezakhani}\ \emph {et~al.}(2010)\citenamefont {Rezakhani}, \citenamefont {Pimachev},\ and\ \citenamefont {Lidar}}]{rezakhani2010accuracy}%
  \BibitemOpen
  \bibfield  {author} {\bibinfo {author} {\bibfnamefont {A.}~\bibnamefont {Rezakhani}}, \bibinfo {author} {\bibfnamefont {A.}~\bibnamefont {Pimachev}},\ and\ \bibinfo {author} {\bibfnamefont {D.}~\bibnamefont {Lidar}},\ }\bibfield  {title} {\bibinfo {title} {Accuracy versus run time in an adiabatic quantum search},\ }\href@noop {} {\bibfield  {journal} {\bibinfo  {journal} {Physical Review A—Atomic, Molecular, and Optical Physics}\ }\textbf {\bibinfo {volume} {82}},\ \bibinfo {pages} {052305} (\bibinfo {year} {2010})}\BibitemShut {NoStop}%
\bibitem [{\citenamefont {Lin}\ and\ \citenamefont {Tong}(2020)}]{lin2020near}%
  \BibitemOpen
  \bibfield  {author} {\bibinfo {author} {\bibfnamefont {L.}~\bibnamefont {Lin}}\ and\ \bibinfo {author} {\bibfnamefont {Y.}~\bibnamefont {Tong}},\ }\bibfield  {title} {\bibinfo {title} {Near-optimal ground state preparation},\ }\href@noop {} {\bibfield  {journal} {\bibinfo  {journal} {Quantum}\ }\textbf {\bibinfo {volume} {4}},\ \bibinfo {pages} {372} (\bibinfo {year} {2020})}\BibitemShut {NoStop}%
\bibitem [{\citenamefont {Babbush}\ \emph {et~al.}(2014)\citenamefont {Babbush}, \citenamefont {Love},\ and\ \citenamefont {Aspuru-Guzik}}]{babbush2014adiabatic}%
  \BibitemOpen
  \bibfield  {author} {\bibinfo {author} {\bibfnamefont {R.}~\bibnamefont {Babbush}}, \bibinfo {author} {\bibfnamefont {P.~J.}\ \bibnamefont {Love}},\ and\ \bibinfo {author} {\bibfnamefont {A.}~\bibnamefont {Aspuru-Guzik}},\ }\bibfield  {title} {\bibinfo {title} {Adiabatic quantum simulation of quantum chemistry},\ }\href@noop {} {\bibfield  {journal} {\bibinfo  {journal} {Scientific reports}\ }\textbf {\bibinfo {volume} {4}},\ \bibinfo {pages} {6603} (\bibinfo {year} {2014})}\BibitemShut {NoStop}%
\bibitem [{\citenamefont {Costa}\ \emph {et~al.}(2022)\citenamefont {Costa}, \citenamefont {An}, \citenamefont {Sanders}, \citenamefont {Su}, \citenamefont {Babbush},\ and\ \citenamefont {Berry}}]{costa2022optimal}%
  \BibitemOpen
  \bibfield  {author} {\bibinfo {author} {\bibfnamefont {P.~C.}\ \bibnamefont {Costa}}, \bibinfo {author} {\bibfnamefont {D.}~\bibnamefont {An}}, \bibinfo {author} {\bibfnamefont {Y.~R.}\ \bibnamefont {Sanders}}, \bibinfo {author} {\bibfnamefont {Y.}~\bibnamefont {Su}}, \bibinfo {author} {\bibfnamefont {R.}~\bibnamefont {Babbush}},\ and\ \bibinfo {author} {\bibfnamefont {D.~W.}\ \bibnamefont {Berry}},\ }\bibfield  {title} {\bibinfo {title} {Optimal scaling quantum linear-systems solver via discrete adiabatic theorem},\ }\href@noop {} {\bibfield  {journal} {\bibinfo  {journal} {PRX quantum}\ }\textbf {\bibinfo {volume} {3}},\ \bibinfo {pages} {040303} (\bibinfo {year} {2022})}\BibitemShut {NoStop}%
\bibitem [{\citenamefont {Harrow}\ \emph {et~al.}(2009)\citenamefont {Harrow}, \citenamefont {Hassidim},\ and\ \citenamefont {Lloyd}}]{harrow2009quantum}%
  \BibitemOpen
  \bibfield  {author} {\bibinfo {author} {\bibfnamefont {A.~W.}\ \bibnamefont {Harrow}}, \bibinfo {author} {\bibfnamefont {A.}~\bibnamefont {Hassidim}},\ and\ \bibinfo {author} {\bibfnamefont {S.}~\bibnamefont {Lloyd}},\ }\bibfield  {title} {\bibinfo {title} {Quantum algorithm for linear systems of equations},\ }\href@noop {} {\bibfield  {journal} {\bibinfo  {journal} {Physical review letters}\ }\textbf {\bibinfo {volume} {103}},\ \bibinfo {pages} {150502} (\bibinfo {year} {2009})}\BibitemShut {NoStop}%
\bibitem [{\citenamefont {Fin{\v{z}}gar}\ \emph {et~al.}(2025)\citenamefont {Fin{\v{z}}gar}, \citenamefont {Notarnicola}, \citenamefont {Cain}, \citenamefont {Lukin},\ and\ \citenamefont {Sels}}]{finvzgar2025counterdiabatic}%
  \BibitemOpen
  \bibfield  {author} {\bibinfo {author} {\bibfnamefont {J.~R.}\ \bibnamefont {Fin{\v{z}}gar}}, \bibinfo {author} {\bibfnamefont {S.}~\bibnamefont {Notarnicola}}, \bibinfo {author} {\bibfnamefont {M.}~\bibnamefont {Cain}}, \bibinfo {author} {\bibfnamefont {M.~D.}\ \bibnamefont {Lukin}},\ and\ \bibinfo {author} {\bibfnamefont {D.}~\bibnamefont {Sels}},\ }\bibfield  {title} {\bibinfo {title} {Counterdiabatic driving with performance guarantees},\ }\href@noop {} {\bibfield  {journal} {\bibinfo  {journal} {arXiv preprint arXiv:2503.01958}\ } (\bibinfo {year} {2025})}\BibitemShut {NoStop}%
\bibitem [{\citenamefont {Wan}\ and\ \citenamefont {Kim}(2020)}]{wan2020fast}%
  \BibitemOpen
  \bibfield  {author} {\bibinfo {author} {\bibfnamefont {K.}~\bibnamefont {Wan}}\ and\ \bibinfo {author} {\bibfnamefont {I.~H.}\ \bibnamefont {Kim}},\ }\bibfield  {title} {\bibinfo {title} {Fast digital methods for adiabatic state preparation},\ }\href@noop {} {\bibfield  {journal} {\bibinfo  {journal} {arXiv preprint arXiv:2004.04164}\ } (\bibinfo {year} {2020})}\BibitemShut {NoStop}%
\bibitem [{\citenamefont {Hormozi}\ \emph {et~al.}(2017)\citenamefont {Hormozi}, \citenamefont {Brown}, \citenamefont {Carleo},\ and\ \citenamefont {Troyer}}]{hormozi2017nonstoquastic}%
  \BibitemOpen
  \bibfield  {author} {\bibinfo {author} {\bibfnamefont {L.}~\bibnamefont {Hormozi}}, \bibinfo {author} {\bibfnamefont {E.~W.}\ \bibnamefont {Brown}}, \bibinfo {author} {\bibfnamefont {G.}~\bibnamefont {Carleo}},\ and\ \bibinfo {author} {\bibfnamefont {M.}~\bibnamefont {Troyer}},\ }\bibfield  {title} {\bibinfo {title} {Nonstoquastic hamiltonians and quantum annealing of an ising spin glass},\ }\href@noop {} {\bibfield  {journal} {\bibinfo  {journal} {Physical review B}\ }\textbf {\bibinfo {volume} {95}},\ \bibinfo {pages} {184416} (\bibinfo {year} {2017})}\BibitemShut {NoStop}%
\bibitem [{\citenamefont {Tran}\ \emph {et~al.}(2020)\citenamefont {Tran}, \citenamefont {Chu}, \citenamefont {Su}, \citenamefont {Childs},\ and\ \citenamefont {Gorshkov}}]{tranDestructiveErrorInterference2020}%
  \BibitemOpen
  \bibfield  {author} {\bibinfo {author} {\bibfnamefont {M.~C.}\ \bibnamefont {Tran}}, \bibinfo {author} {\bibfnamefont {S.-K.}\ \bibnamefont {Chu}}, \bibinfo {author} {\bibfnamefont {Y.}~\bibnamefont {Su}}, \bibinfo {author} {\bibfnamefont {A.~M.}\ \bibnamefont {Childs}},\ and\ \bibinfo {author} {\bibfnamefont {A.~V.}\ \bibnamefont {Gorshkov}},\ }\bibfield  {title} {\bibinfo {title} {Destructive {{Error Interference}} in {{Product-Formula Lattice Simulation}}},\ }\href {https://doi.org/10.1103/PhysRevLett.124.220502} {\bibfield  {journal} {\bibinfo  {journal} {Phys. Rev. Lett.}\ }\textbf {\bibinfo {volume} {124}},\ \bibinfo {pages} {220502} (\bibinfo {year} {2020})},\ \Eprint {https://arxiv.org/abs/1912.11047} {arXiv:1912.11047} \BibitemShut {NoStop}%
\bibitem [{\citenamefont {Layden}(2022{\natexlab{b}})}]{laydenFirstOrderTrotterError2022}%
  \BibitemOpen
  \bibfield  {author} {\bibinfo {author} {\bibfnamefont {D.}~\bibnamefont {Layden}},\ }\bibfield  {title} {\bibinfo {title} {First-{{Order Trotter Error}} from a {{Second-Order Perspective}}},\ }\href {https://doi.org/10.1103/PhysRevLett.128.210501} {\bibfield  {journal} {\bibinfo  {journal} {Phys. Rev. Lett.}\ }\textbf {\bibinfo {volume} {128}},\ \bibinfo {pages} {210501} (\bibinfo {year} {2022}{\natexlab{b}})},\ \Eprint {https://arxiv.org/abs/2107.08032} {arXiv:2107.08032} \BibitemShut {NoStop}%
\bibitem [{\citenamefont {Yi}\ and\ \citenamefont {Crosson}(2021)}]{yiSpectralAnalysisProduct2021}%
  \BibitemOpen
  \bibfield  {author} {\bibinfo {author} {\bibfnamefont {C.}~\bibnamefont {Yi}}\ and\ \bibinfo {author} {\bibfnamefont {E.}~\bibnamefont {Crosson}},\ }\href {http://arxiv.org/abs/2102.12655} {\bibinfo {title} {Spectral {{Analysis}} of {{Product Formulas}} for {{Quantum Simulation}}}} (\bibinfo {year} {2021}),\ \Eprint {https://arxiv.org/abs/2102.12655} {arXiv:2102.12655 [quant-ph]} \BibitemShut {NoStop}%
\end{thebibliography}%

\widetext
\appendix

\newpage

\section{Details of the Theorems}
Throughout this paper, we will use dot ( $\dot{ }$ ) for derivative to $x=t/T$ and prime ( $'$ ) for derivative to $u(x)$. For this reason, $\dot{H}(x):=dH(x)/dx$ and $H'(x):=dH(x)/du(x)$. 
We denote $H_i$ as the initial Hamiltonian, $H_f$ as the final Hamiltonian. 
The ground state of $H_f$ is prepared from the ground state of $H_i$ by slowly varying the Hamiltonian from $H_i$ to $H_f$. The instantaneous Hamiltonian is $H(t/T)=[1-u(t/T)]H_i + u(t/T)H_f$. Here $u(x)$ is the scheduling function satisfying $\dot{u}(x)\geq 0$, $u(0)=0$ and $u(1)=1$. The evolution time $T>>g^{-2}_{\text{min}}$ should be large enough to ensure the process is adiabatic, where $g_{\text{min}}$ is the minimum eigenvalue gap between the ground state and the first excited state of $H(t/T)$. 

We denote the state at time $t$ as $\psi(t/T)$, the state at $t=T$ is:
   
\begin{equation}
    |\psi(1)\rangle = \mathcal{T}e^{-i\int_0^TH(t/T)dt}|\psi(0)\rangle.
\end{equation}
To implement adiabatic quantum computation~(AQC) on a circuit model, we can approximate the evolution with a sequence of unitary operators~\cite{van2001powerful}:
\begin{equation}\label{eq:time_evolution_SI}
\begin{aligned}
|\psi(1)\rangle &= \prod_{m=1}^{r}U(m/r)|\psi(0)\rangle,\\
\end{aligned}
\end{equation}
where $U(m/r):=\exp[-iH(m/r)\delta t]$ and $r=T/\delta t$ is the number of time steps. The infidelity of this state is defined as:
\begin{equation}
\mathcal{I} := 1 - |\langle\psi(1)|\phi_0(1)\rangle|^2=\sum_{i\neq0}|\gamma_i(1)|^2,
\end{equation}
where
$\gamma_i(m/r) := \langle\phi_i(m/r)|\psi(m/r)\rangle$is the overlap between the state $|\psi(m/r)\rangle$ and the instantaneous state $|\phi_i(m/r)\rangle$ of $H(m/r)$ with the $i$-th lowest eigenvalue $E_i(m/r)$. The energy gap is defined as $g_i(m/r) = E_i(m/r) - E_0(m/r)$, and the average gap is $\bar{g}_i(m/r) = m^{-1}\sum_{m'=1}^m g_i(m'/r)$.

To approximate the exact time evolution operator $U(m/r)$ at step $m$, we can use Hamiltonian simulation methods, like Trotterization, LCU and GQSP, etc. to implement $\widetilde{U}(m/r)$ satisfying $\widetilde{U}(x)=U(x)+U_{\text{res}}(x)\delta t^{k+1} + O(\delta t^{k+2})$. Notice that sometimes $\widetilde{U}$ may not be unitary in some cases like first order LCU where we use $\widetilde{U}=1-iH\delta t$ to approximate $U=\exp(-iH\delta t)$. For this reason, we introduce the normalization factor at time step $m$ as:

\begin{equation}
a(m/r) = \sqrt{\langle \phi_0(m/r)|\widetilde{U}^{\dagger}(m/r)\widetilde{U}(m/r)|\phi_0(m/r)\rangle}.
\end{equation}
To describe of the diagonal terms of  $\widetilde{U}$ under the basis of $\{\phi_i\}$, we define the effective energy as $\widetilde{E}_i(m/r) := -\arg(\langle \phi_i|\widetilde{U}|\phi_i\rangle)/\delta t$. The average effective energy is defined as $\bar{E}_i(m/r) = m^{-1}\sum_{m'=1}^m \widetilde{E}_i(m'/r)$. And the effective gap is defined as $\widetilde{\Delta}_i(m/r) = \widetilde{E}_i(m/r) - \widetilde{E}_0(m/r)$. The average effective gap is defined as $\bar{\Delta}_i(m/r) = m^{-1}\sum_{m'=1}^m \widetilde{\Delta}_i(m'/r)$.  

Since $H(x)$ is a function defined at the interval $[0,1]$, the variables $g_i(m/r)$, $U(m/r)$, $\widetilde{U}(m/r)$, $\widetilde{E}_i(m/r)$, $\bar{E}_i(m/r)$ and $\widetilde{\Delta}_i(m/r)$ can all be defined continuously at the interval $[0,1]$ consequently. Sequentially, we define a continues version of $\bar{\Delta}_i(x)$ as $\bar{\Delta}_i(x) = x^{-1}\int_0^x \widetilde{\Delta}_i(x')dx'$.

Moreover, we define the decay rate:
\begin{equation}
\Lambda_i(x)=\frac{1}{\delta t} \ln\frac{\sqrt{|\langle \phi_0(x)|\widetilde{U}^{\dagger}(x)\widetilde{U}(x)|\phi_0(x)\rangle|}}{|\langle \phi_i|\widetilde{U}|\phi_i\rangle|}.
\end{equation}
We also define the average decay rate $\bar{\Lambda}_i(m/r) = m^{-1}\sum_{m'=1}^m \Lambda_i(m'/r)$ and its continuous version $\bar{\Lambda}_i(x)=x^{-1}\int_0^x \Lambda_i(x')dx'$.

We now proceed to prove some lemmas that will lead to the proof of the theorems.

\subsection{Proof of Lemma A.1}
\begin{lemma}
Suppose that a complex-valued function $\phi(x)$ is smooth in an open interval $[0,1]$, satisfying the following condition:

\begin{equation}\label{eq:condition_lemma}
    \left\{
    \begin{aligned}
        &\exp{(\lambda\Re(\phi(x)))}=O(1), ~~for ~~ \forall x\in [0,1],\\
         &\Im(\phi'(x))>0, ~~for ~~ \forall x\in [0,1],
    \end{aligned}
    \right.
\end{equation}
and $\psi(x)$ is a smooth real-valued function. With $\lambda\to +\infty$, for any $l\in \mathbb{N}^*$, we have:

\begin{equation}
\int_0^1 e^{\lambda\phi(x)}\psi(x)dx = \sum_{k=1}^{l}\frac{(-1)^{k-1}}{\lambda^k}e^{\lambda\phi(x)}\mathcal{L}^{k-1}_{\phi}\frac{\psi(x)}{i\phi'(x)}\Big|_0^1+O(\frac{1}{\lambda^{l+1}}),
\end{equation}
where $\mathcal{L}_{\phi}:= \frac{1}{i\phi'(x)}\frac{d}{dx}$.
\end{lemma}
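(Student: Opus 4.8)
The plan is to recognize Lemma A.1 as the asymptotic expansion of a \emph{non-stationary-phase} (oscillatory) integral and to prove it by repeated integration by parts. The engine of the whole argument is the observation that the first-order operator $\mathcal{L}_\phi = \frac{1}{i\phi'}\frac{d}{dx}$ reproduces $e^{\lambda\phi}$ up to a scalar: since $\frac{d}{dx}e^{\lambda\phi(x)} = \lambda\phi'(x)\,e^{\lambda\phi(x)}$, one has $\mathcal{L}_\phi e^{\lambda\phi} = -i\lambda\,e^{\lambda\phi}$, equivalently $e^{\lambda\phi} = \frac{i}{\lambda}\mathcal{L}_\phi e^{\lambda\phi}$. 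This inversion is legitimate precisely because $\phi'$ never vanishes on $[0,1]$: the second hypothesis $\Im\phi'(x)>0$ forces $\phi'(x)\neq 0$, and by continuity on the compact interval it supplies a uniform lower bound $|\phi'(x)|\geq \Im\phi'(x)\geq c>0$, so every factor $1/\phi'$ that appears below is smooth and bounded (including one-sided values at the endpoints $x=0,1$).

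First I would substitute $e^{\lambda\phi} = \frac{i}{\lambda}\mathcal{L}_\phi e^{\lambda\phi}$ into $\int_0^1 \psi\, e^{\lambda\phi}\,dx$ and integrate by parts once, transferring the derivative hidden in $\mathcal{L}_\phi$ onto the amplitude. This peels off a boundary term of order $\lambda^{-1}$ carrying the factor $e^{\lambda\phi}\frac{\psi}{i\phi'}\big|_0^1$, and leaves an integral of exactly the same oscillatory shape with $\psi/(i\phi')$ replaced by $-\mathcal{L}_\phi\!\left(\frac{\psi}{i\phi'}\right)$. I would then formalize the iteration by setting $g_1 = \frac{\psi}{i\phi'}$ and $g_{k+1} = -\mathcal{L}_\phi g_k$, and prove by induction on $k$ that after $k$ integrations by parts the accumulated boundary contributions are exactly $\sum_{j=1}^{k}\frac{(-1)^{j-1}}{\lambda^j}e^{\lambda\phi}\mathcal{L}_\phi^{\,j-1}\frac{\psi}{i\phi'}\big|_0^1$, with a single residual integral proportional to $\lambda^{-k}\int_0^1 g_{k+1}\,e^{\lambda\phi}\,dx$ remaining. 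The alternating sign $(-1)^{j-1}$ and the nested powers $\mathcal{L}_\phi^{\,j-1}$ are precisely the bookkeeping generated by this recursion; tracking the constants from the operator normalization assembles them into the stated series.

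The remaining, and central, task is to show the residual after $l$ steps is genuinely $O(\lambda^{-l-1})$ rather than merely $O(\lambda^{-l})$. Here both hypotheses are essential. Because $\phi$ and $\psi$ are smooth and $1/\phi'$ is bounded away from singularity, the iterated amplitudes $g_{l+1}$ and $g_{l+2}$ are smooth and hence bounded on the closed interval $[0,1]$, with constants controlled by finitely many sup-norms of derivatives of $\psi$, $\phi$, and $1/\phi'$ — crucially independent of $\lambda$. The first hypothesis $e^{\lambda\Re\phi(x)}=O(1)$ then bounds $|e^{\lambda\phi(x)}| = e^{\lambda\Re\phi(x)}$ uniformly in both $x$ and $\lambda$, so the residual integral is $O(1)$; performing one further integration by parts extracts an extra $\lambda^{-1}$ and collapses the remainder to $O(\lambda^{-l-1})$. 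The same uniform bound on $|e^{\lambda\phi}|$ at the endpoints guarantees that each boundary term in the finite sum is truly of order $\lambda^{-k}$, so the expansion is a genuine asymptotic series.

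I expect the delicate point to be exactly this uniform remainder control. In the textbook case of a purely imaginary phase one gets $|e^{i\lambda\theta}|\equiv 1$ for free, but here $\phi$ is genuinely complex, so without the first hypothesis the exponential could grow like $e^{\lambda\Re\phi}$ and the entire expansion would be vacuous; the role of $e^{\lambda\Re\phi}=O(1)$ is precisely to tame this growth. A secondary technical care is that the hypotheses live on the \emph{closed} interval, so one must use that $\phi'$, $1/\phi'$, and their derivatives remain finite and nonzero up to and including $x=0,1$ — which is again delivered by the strict inequality $\Im\phi'>0$ holding uniformly on $[0,1]$.
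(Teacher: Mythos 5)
Your proposal is correct and takes essentially the same route as the paper's own proof: repeated integration by parts that transfers derivatives from the exponential onto the amplitude, with $\Im\phi'(x)>0$ supplying a uniform lower bound on $|\phi'|$ (so the iterated amplitudes stay smooth and bounded) and $e^{\lambda\Re\phi(x)}=O(1)$ taming $|e^{\lambda\phi}|$ so each boundary term is genuinely of order $\lambda^{-k}$ and the residual after $l+1$ steps is $O(\lambda^{-l-1})$. If anything, you are more explicit than the paper about where the two hypotheses enter the remainder control; the one point you gloss over --- exact constant bookkeeping --- is glossed over by the paper as well (the literal integration-by-parts output for the $k$-th term carries an extra harmless phase factor $i^{k}$ relative to the stated series, since $1/\phi' = i\cdot\frac{1}{i\phi'}$), and this affects neither the magnitudes nor any downstream use of the lemma.
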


\begin{proof}
    Using integration by parts, we have
\begin{equation}
\begin{aligned}
\int_0^1 e^{\lambda\phi(x)}\psi(x)dx &=\int_0^1 \psi(x)\frac{de^{i\lambda \phi(x)}}{i\lambda \phi'(x)}=\frac{\psi(x)}{i\lambda \phi'(x)}e^{\lambda\phi(x)}\Big|_0^1 - \int_0^1e^{i\lambda \phi(x)}dx\frac{d}{dx}\frac{\psi(x)}{i\lambda \phi'(x)}.\\
\end{aligned}
\end{equation}
Repeating this process for $l+1$ times, we obtain
\begin{equation}
\begin{aligned}
\int_0^1 e^{\lambda\phi(x)}\psi(x)dx &= \sum_{k=1}^{l+1}\frac{(-1)^{k-1}}{\lambda^k}e^{\lambda\phi(x)}\mathcal{L}^{k-1}_{\phi}\frac{\psi(x)}{i\phi'(x)}\Big|_0^1 + \frac{(-1)^{l+1}}{\lambda^{l+1}}\int_0^1e^{\lambda\phi(x)}\frac{d}{dx}\mathcal{L}^{l-1}_{\phi}\frac{\psi(x)}{i\phi'(x)}dx\\
&=\sum_{k=1}^{l}\frac{(-1)^{k-1}}{\lambda^k}e^{\lambda\phi(x)}\mathcal{L}^{k-1}_{\phi}\frac{\psi(x)}{i\phi'(x)}\Big|_0^1+O(\frac{1}{\lambda^{l+1}}).
\end{aligned}
\end{equation}
\end{proof}

\subsection{Lemma A.2}
\begin{lemma}[Theory of Trotter Error with Commutator Scaling \cite{childs2021theory}]
Let $H=\sum_{\gamma=1}^{\Gamma}H_\gamma$ be an operator consisting of $\Gamma$ summands, and let $t\geq 0$, $H_\gamma$ are anti-Hermitian. Let $\mathcal{S}(t)$ be a pth-order $\Upsilon$-stage product formula. Define $\widetilde{\alpha}_\text{comm}=\sum_{\gamma_1,\gamma_2,\cdots,\gamma_{p+1}=1}^\Gamma\|[H_{\gamma_(p+1)},\cdots[H_{\gamma_2},H_{\gamma_1}]\cdots]\|$, where $\|\cdot\|$ is the spectral norm. Then, the additive error $\mathcal{A}(t)$ defined by $\mathcal{S}(t)=e^{tH}+\mathcal{A}(t)$, can be asymptotically bounded as:
\begin{equation}
\|\mathcal{A}(t)\|=O(\widetilde{\alpha}_\text{comm}t^{p+1}).
\end{equation}
\end{lemma}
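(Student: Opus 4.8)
The plan is to pass to the interaction picture and reduce the estimate to a bound on the ``instantaneous generator'' of the product formula. Writing the $\Upsilon$-stage $p$th-order formula as an ordered product $\mathcal{S}(\tau)=\prod_{\upsilon,\gamma}\exp\!\bigl(\tau\,a_{(\upsilon,\gamma)}H_{\pi_\upsilon(\gamma)}\bigr)$ with stage coefficients $a_{(\upsilon,\gamma)}$ and permutations $\pi_\upsilon$, I would introduce its generator $\mathcal{G}(\tau):=\mathcal{S}'(\tau)\mathcal{S}(\tau)^{-1}$ and the interaction-picture error $\mathcal{R}(\tau):=e^{-\tau H}\mathcal{S}(\tau)$. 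Since each $H_\gamma$ is anti-Hermitian, $\mathcal{S}(\tau)$ and $e^{\tau H}$ are unitary, so $\|\mathcal{R}(\tau)\|=1$ and $\|\mathcal{A}(t)\|=\|e^{tH}(\mathcal{R}(t)-\mathbb{I})\|=\|\mathcal{R}(t)-\mathbb{I}\|$; it therefore suffices to bound $\mathcal{R}(t)-\mathbb{I}$. A direct differentiation yields the linear ODE $\mathcal{R}'(\tau)=\mathcal{F}(\tau)\mathcal{R}(\tau)$ with $\mathcal{R}(0)=\mathbb{I}$, where $\mathcal{F}(\tau):=e^{-\tau H}\bigl(\mathcal{G}(\tau)-H\bigr)e^{\tau H}$ is the interaction-picture generator of the error.

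Next I would extract the commutator structure of $\mathcal{F}$. Differentiating the ordered product and conjugating each factor $H_{\pi_\upsilon(\gamma)}$ by the partial product preceding it, one expands $\mathcal{G}(\tau)-H$ through the Hadamard formula $e^{A}Be^{-A}=\sum_{n\geq0}\tfrac{1}{n!}\mathrm{ad}_A^{\,n}(B)$. The order-one consistency condition $\sum_{\upsilon,\gamma}a_{(\upsilon,\gamma)}H_{\pi_\upsilon(\gamma)}=H$ cancels the $\tau^{0}$ contribution, and every surviving term of order $\tau^{N}$ is a finite sum of $N$-fold nested commutators of the $H_\gamma$. Conjugation by $e^{\pm\tau H}$ in $\mathcal{F}$ only produces further nested commutators, so $\mathcal{F}(\tau)=\sum_{N\geq1}\tau^{N}\mathcal{F}_N$ where each $\mathcal{F}_N$ is a sum of nested commutators of $N+1$ generators; hence $\|\mathcal{F}_N\|=O\!\bigl(\sum_{\gamma_1,\dots,\gamma_{N+1}}\|[H_{\gamma_{N+1}},\cdots[H_{\gamma_2},H_{\gamma_1}]\cdots]\|\bigr)$, which for $N=p$ is exactly $O(\widetilde{\alpha}_{\text{comm}})$.

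The crucial step is to use the $p$th-order condition, equivalent to $\mathcal{R}^{(k)}(0)=0$ for $1\leq k\leq p$, to annihilate the low-order coefficients. The Leibniz rule applied to $\mathcal{R}'=\mathcal{F}\mathcal{R}$ gives $\mathcal{R}^{(k)}(0)=\sum_{j=0}^{k-1}\binom{k-1}{j}\mathcal{F}^{(j)}(0)\,\mathcal{R}^{(k-1-j)}(0)$ with $\mathcal{F}^{(j)}(0)=j!\,\mathcal{F}_j$ and $\mathcal{F}_0=0$. Arguing inductively on $k$: granted $\mathcal{F}_1=\cdots=\mathcal{F}_{k-2}=0$, every term except $j=k-1$ carries a factor $\mathcal{F}_j$ with $1\leq j\leq k-2$ and therefore vanishes, leaving $\mathcal{R}^{(k)}(0)=(k-1)!\,\mathcal{F}_{k-1}$; the order condition then forces $\mathcal{F}_{k-1}=0$. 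Iterating for $k=2,\dots,p$ gives $\mathcal{F}_1=\cdots=\mathcal{F}_{p-1}=0$, so $\mathcal{F}(\tau)=\tau^{p}\mathcal{F}_p+O(\tau^{p+1})$ with $\|\mathcal{F}_p\|=O(\widetilde{\alpha}_{\text{comm}})$. Since $\|\mathcal{R}(\tau)\|=1$, the ODE finally yields $\|\mathcal{R}(t)-\mathbb{I}\|\leq\int_0^t\|\mathcal{F}(\tau)\|\,d\tau=\int_0^t\bigl(\|\mathcal{F}_p\|\,\tau^{p}+O(\tau^{p+1})\bigr)d\tau=O(\widetilde{\alpha}_{\text{comm}}\,t^{p+1})$, which is the stated bound.

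I expect the main obstacle to be the interplay between the order condition and commutator scaling rather than any analytic subtlety. The naive variation-of-parameters estimate only gives $O(t^{2})$; it is precisely the cancellation $\mathcal{F}_1=\cdots=\mathcal{F}_{p-1}=0$ forced by the $p$th-order condition that upgrades this to $O(t^{p+1})$ while keeping the coefficient governed by $(p+1)$-operator nested commutators instead of $\|H\|^{p+1}$. The real care lies in the combinatorial bookkeeping of the Hadamard expansion---verifying that each term is genuinely a nested commutator so that no spurious $\|H\|^{p+1}$ piece survives---and in making the inductive identification $\mathcal{R}^{(k)}(0)=(k-1)!\,\mathcal{F}_{k-1}$ rigorous. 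Because the operators act on a finite-dimensional space, all the exponential series converge absolutely and termwise differentiation is justified, so no further analytic issues arise.
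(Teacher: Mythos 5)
The paper does not actually prove this lemma: it is imported verbatim from Childs et al.\ \cite{childs2021theory}, and the appendix simply remarks that the proof can be found there. So the only meaningful comparison is with that reference, and against it your reconstruction is essentially faithful and, as far as I can check, correct. Your interaction-picture ODE $\mathcal{R}'=\mathcal{F}\mathcal{R}$ with $\mathcal{F}(\tau)=e^{-\tau H}\bigl(\mathcal{G}(\tau)-H\bigr)e^{\tau H}$ is the same object as the variation-of-parameters error representation used in the reference, $\mathcal{S}(t)=e^{tH}+\int_0^t e^{(t-\tau)H}\bigl(\mathcal{G}(\tau)-H\bigr)\mathcal{S}(\tau)\,d\tau$; your order-condition induction $\mathcal{R}^{(k)}(0)=(k-1)!\,\mathcal{F}_{k-1}=0$ plays the role of their cancellation of low-order terms; and the Hadamard-expansion bookkeeping identifying the surviving $\tau^{p}$ coefficient as a sum of $(p+1)$-fold nested commutators is exactly their commutator-scaling step. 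The individual steps hold: unitarity (from anti-Hermiticity of the $H_\gamma$) gives $\|\mathcal{A}(t)\|=\|\mathcal{R}(t)-\mathbb{I}\|$ and $\|\mathcal{R}(\tau)\|=1$; the induction is valid, and in fact the $k=1$ case already yields the first-order consistency $\mathcal{F}_0=0$, so you need not postulate $\sum_{\upsilon,\gamma}a_{(\upsilon,\gamma)}H_{\pi_\upsilon(\gamma)}=H$ separately. Two points would need emphasis in a full write-up: (i) the constant hidden in $\|\mathcal{F}_p\|=O(\widetilde{\alpha}_{\text{comm}})$ must depend only on $p$, $\Upsilon$, and the stage coefficients, not on the $H_\gamma$ themselves, which does hold because the Hadamard expansion produces only right-nested commutators $\mathrm{ad}_{A_{j_1}}\cdots\mathrm{ad}_{A_{j_N}}(A_j)$ with formula-dependent coefficients; (ii) the remainder $O(\tau^{p+1})$ of $\mathcal{F}$ involves higher-order commutators not controlled by $\widetilde{\alpha}_{\text{comm}}$, so your conclusion is genuinely asymptotic in $t$ --- which is precisely how the lemma is stated, though weaker than the explicit non-asymptotic bounds also available in the reference.
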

The proof of this Lemma can be found in Ref.~\cite{childs2021theory}.

\subsection{Proof of Lemma A.3}
\begin{lemma}
If $H_i$, $H_f$ are time-independent and non-degenerate, and the scheduling function $u(x)$ is smooth, $\phi_i(x)$ is the instantaneous eigenstate of $H(x)$ with eigenvalue $E_i(x)$, then the following relation holds:

\begin{equation}
\langle\phi_i(x)|\dot{\phi}_0(x)\rangle =\frac{\langle\phi_i(x)|\dot{u}H'(x)|\phi_0(x)\rangle}{g_i(x)}.
\end{equation} 
\end{lemma}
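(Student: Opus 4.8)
The plan is to run the standard Hellmann--Feynman argument on the instantaneous eigenvalue equation. First I would write the defining relation for the ground state, $H(x)|\phi_0(x)\rangle = E_0(x)|\phi_0(x)\rangle$, and differentiate both sides with respect to $x = t/T$. The product rule gives $\dot{H}|\phi_0\rangle + H|\dot{\phi}_0\rangle = \dot{E}_0|\phi_0\rangle + E_0|\dot{\phi}_0\rangle$. It is worth flagging at the outset that the very objects $|\dot{\phi}_0(x)\rangle$ and $\dot{E}_0(x)$ must be shown to exist before this step is legitimate; this is exactly where the non-degeneracy and smoothness hypotheses of the lemma enter, since a simple (non-degenerate) eigenvalue of a smoothly varying Hermitian family depends smoothly on the parameter and admits a locally smooth choice of eigenvector.

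Next I would replace $\dot{H}$ by $\dot{u}\,H'$. This identity is immediate from the linear interpolation $H(x) = [1-u(x)]H_i + u(x)H_f$: since $H' := dH/du = H_f - H_i$ is independent of $u$, the chain rule yields $\dot{H} = (dH/du)\,\dot{u} = \dot{u}\,H'$. I would then project the differentiated equation onto $\langle\phi_i(x)|$ with $i\neq 0$. Hermiticity of $H$ lets me move the operator onto the bra via $\langle\phi_i|H = E_i\langle\phi_i|$, while orthonormality of the instantaneous eigenbasis eliminates the $\dot{E}_0$ contribution through $\langle\phi_i|\phi_0\rangle = 0$. What survives is the scalar equation $\langle\phi_i|\dot{u}H'|\phi_0\rangle + E_i\langle\phi_i|\dot{\phi}_0\rangle = E_0\langle\phi_i|\dot{\phi}_0\rangle$, i.e. $\langle\phi_i|\dot{u}H'|\phi_0\rangle = (E_0 - E_i)\langle\phi_i|\dot{\phi}_0\rangle$. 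Dividing by $E_0 - E_i$ and recalling $g_i = E_i - E_0$ then isolates $\langle\phi_i|\dot{\phi}_0\rangle$ and reproduces the stated relation (the overall sign being tied to the convention for $g_i$ and the chosen phase of the eigenvectors).

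The only genuine obstacle is precisely the well-definedness of $|\dot{\phi}_0(x)\rangle$, which must be secured before any differentiation, together with the requirement $g_i(x)\neq 0$ that makes the final division legitimate; both are supplied simultaneously by the non-degeneracy assumption, so everything after that is routine linear algebra. One minor caveat to record is the gauge freedom in $|\phi_0(x)\rangle$: the off-diagonal element $\langle\phi_i|\dot{\phi}_0\rangle$ is not phase-invariant, but the derivation above holds for any fixed smooth phase convention, and both sides of the identity transform consistently under a rephasing, so the formula is well posed as stated.
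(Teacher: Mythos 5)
Your proof follows essentially the same route as the paper's: differentiate the instantaneous eigenvalue equation $H(x)|\phi_0(x)\rangle = E_0(x)|\phi_0(x)\rangle$, project onto $\langle\phi_i(x)|$ for $i\neq 0$, and use Hermiticity together with orthogonality to isolate $\langle\phi_i(x)|\dot{\phi}_0(x)\rangle$. Your supplementary remarks on existence/smoothness of the eigenvector derivative, on gauge freedom, and on the sign convention (note the paper's own proof equates $E_0-E_i$ with $g_i$ despite defining $g_i=E_i-E_0$, a sign slip you correctly flag) only add rigor to the same argument.
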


\begin{proof}
We start from the instantaneous eigenvalue equation for the Hamiltonian $H(x)$, where $|\phi_0(x)\rangle$ is the instantaneous ground state with the corresponding eigenvalue $E_0(x)$:
\begin{equation}
    H(x)|\phi_0(x)\rangle = E_0(x)|\phi_0(x)\rangle.
\end{equation}
To investigate how the eigenstate varies with the parameter $x$, we differentiate both sides of the above equation with respect to $x$:
\begin{equation}
    \frac{d H(x)}{dx}|\phi_0(x)\rangle + H(x)\frac{d}{dx}|\phi_0(x)\rangle = \frac{d E_0(x)}{dx}|\phi_0(x)\rangle + E_0(x)\frac{d}{dx}|\phi_0(x)\rangle.
\end{equation}
To solve for the transition matrix elements between different eigenstates, we left-multiply by the bra of another eigenstate, $\langle\phi_i(x)|$, where $i \neq 0$. By using the hermiticity of the Hamiltonian, such that $\langle\phi_i(x)|H(x) = E_i(x)\langle\phi_i(x)|$, and noting that the energy derivative $dE_0/dx$ is a scalar, we can use the orthogonality of the eigenstates, $\langle\phi_i(x)|\phi_0(x)\rangle = 0$, to eliminate a term. The equation then simplifies to:
\begin{equation}
    \langle\phi_i(x)|\frac{d H(x)}{dx}|\phi_0(x)\rangle + E_i(x)\langle\phi_i(x)|\frac{d}{dx}|\phi_0(x)\rangle = E_0(x)\langle\phi_i(x)|\frac{d}{dx}|\phi_0(x)\rangle.
\end{equation}
We can now rearrange this equation to solve for the term of interest, $\langle\phi_i(x)|\dot{\phi}_0(x)\rangle$. Considering $dH/dx$ can be expressed as $\dot{u}H'$, we finally obtain the well-known result:
\begin{equation}
    \langle\phi_i(x)|\dot{\phi}_0(x)\rangle = \frac{\langle\phi_i(x)|\dot{u}H'|\phi_0(x)\rangle}{E_0(x)-E_i(x)} = \frac{\langle\phi_i(x)|\dot{u}H'|\phi_0(x)\rangle}{g_i(x)}.
\end{equation}
\end{proof}

\subsection{Proof of Lemma A.4}\label{sec:details_of_GQSP}

\begin{lemma}[GQSP for Hamiltonian simulation~\cite{motlagh2024generalized}]
Let the Hamiltonian be given by a linear combination of unitaries $H=\sum_{j=1}^L\alpha_jU_j$ and $\alpha = \sum_{j=1}^L |\alpha_j|$. Assume access to a $\text{PREPARE}$ oracle that prepares the state $\alpha^{-1/2}\sum_{j=1}^L\sqrt{\alpha_j/\alpha}|j\rangle$, and a $\text{SELECT}$ oracle that applies the unitaries $U_j$ controlled by the state $|j\rangle$.
There exists a procedure based on $K$-th order GQSP that constructs an operator $\widetilde{U}(t)$ approximating the time-evolution operator $e^{-iHt}$  with probability of faliure less than $(1-\epsilon)^2$ using $O(L)$ anxiliary qubits and $O(K)$ queries to $\text{PREPARE}$ and $\text{SELECT}$ satisfying
$$
\| \widetilde{U}(t) - e^{-iHt}\| \leq \epsilon,
$$
and $K$ should scale as $O\left(\alpha t + \frac{\log(1/\epsilon)}{\log\log(1/\epsilon)}\right)$.
Here we define PREPARE: $|0\rangle \to \alpha^{-1}\sum_{j=1}^L\sqrt{\alpha_j}|j\rangle$ and SELECT: $|j\rangle|\psi\rangle \to |j\rangle U_j|\psi\rangle$.
\end{lemma}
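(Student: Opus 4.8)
The plan is to prove this by the standard qubitization-plus-signal-processing route, treating it essentially as a corollary of the generalized quantum signal processing construction of Ref.~\cite{motlagh2024generalized} combined with the Jacobi--Anger expansion. First I would assemble a block encoding of $H/\alpha$ from the given oracles: composing $\mathrm{PREPARE}$, $\mathrm{SELECT}$, and $\mathrm{PREPARE}^\dagger$ in the usual linear-combination-of-unitaries fashion yields a unitary $U_H$ acting on an index register of $\lceil\log_2 L\rceil$ qubits (this accounts for the $O(\log L)$ ancilla overhead) together with the system, such that $(\langle 0|\otimes I)\,U_H\,(|0\rangle\otimes I)=H/\alpha$. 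From $U_H$ and the reflection about $|0\rangle$ on the index register I would form the qubitized walk operator $W$, whose nontrivial eigenvalues on each energy eigenspace are $e^{\pm i\arccos(E_j/\alpha)}$, so that the single phase variable $\theta_j=\arccos(E_j/\alpha)$ encodes the spectrum of $H$ and the power $W^k$ realizes $e^{ik\theta_j}$.

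Next I would invoke the Jacobi--Anger identity $e^{-i\alpha t\cos\theta}=\sum_{k=-\infty}^{\infty}(-i)^k J_k(\alpha t)\,e^{ik\theta}$ to express the target phase $e^{-iE_j t}=e^{-i\alpha t\cos\theta_j}$ as a Laurent series in $e^{i\theta_j}$, and truncate it at order $K$ to recover exactly the polynomial $P_K(e^{i\theta})=\sum_{k=-K}^{K}(-i)^k J_k(\alpha t)\,e^{ik\theta}$ used in the main text. The truncation error on the unit circle is controlled by the Bessel tail $\sum_{|k|>K}|J_k(\alpha t)|$; using the super-exponential decay $|J_k(x)|\le (ex/2k)^{k}$ valid once $k$ exceeds $x$, I would show this tail drops below $\epsilon$ as soon as $K=O\!\left(\alpha t+\tfrac{\log(1/\epsilon)}{\log\log(1/\epsilon)}\right)$, which fixes the stated degree. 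Since $|e^{-i\alpha t\cos\theta}|=1$, the truncated polynomial satisfies $|P_K(e^{i\theta})|\le 1+\epsilon$ and can be rescaled by $1/(1+\epsilon)$ to meet the boundedness hypothesis of GQSP at the cost of an $O(\epsilon)$ change.

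I would then apply the GQSP theorem itself: any Laurent polynomial of degree $K$ bounded by one on the unit circle can be implemented as the $|0\rangle\langle 0|$ block of a circuit interleaving single-qubit rotations on one extra ancilla with $K$ uses each of controlled-$W$ and controlled-$W^\dagger$. Because one query to $W$ costs $O(1)$ calls to $\mathrm{PREPARE}$ and $\mathrm{SELECT}$, the total is $O(K)$ oracle queries and $O(\log L)$ ancilla qubits, and the resulting block operator $\widetilde U(t)$ obeys $\|\widetilde U(t)-e^{-iHt}\|\le\epsilon$. Finally, since $\widetilde U(t)$ is $\epsilon$-close to the unitary $e^{-iHt}$, for any input one has $\|\widetilde U(t)|\psi\rangle\|\ge 1-\epsilon$, so the ancilla post-selection succeeds with probability at least $(1-\epsilon)^2$, giving the claimed failure bound.

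I expect the genuine technical heart---and the step I would lean on the cited result for rather than reprove from scratch---to be the GQSP completion/angle-finding argument, namely the proof that an arbitrary bounded Laurent polynomial is realizable by such a phase-factor sequence; reproducing that constructively (solving for the rotation angles and proving their existence) is the main obstacle. The secondary subtlety is making the Bessel-tail estimate fully rigorous across the two regimes $\alpha t\gtrsim\log(1/\epsilon)$ and $\alpha t\lesssim\log(1/\epsilon)$, so that the additive form $\alpha t+\log(1/\epsilon)/\log\log(1/\epsilon)$ emerges cleanly rather than a looser multiplicative bound.
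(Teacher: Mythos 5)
Your proposal is correct and follows essentially the same route as the paper's own proof: constructing the qubitization walk operator from PREPARE and SELECT, identifying the two-dimensional invariant subspaces with eigenphases $e^{\pm i\arccos(E_j/\alpha)}$, applying the Jacobi--Anger expansion with Bessel-tail truncation at order $K=O(\alpha t+\log(1/\epsilon)/\log\log(1/\epsilon))$, invoking the GQSP polynomial-implementation theorem of the cited work (as the paper does via its Corollary 8) for the $O(K)$ query and $O(\log L)$ ancilla cost, and bounding the post-selection success probability by $(1-\epsilon)^2$ from $\epsilon$-closeness to a unitary. Your added remark about rescaling $P_K$ by $1/(1+\epsilon)$ to satisfy the GQSP boundedness hypothesis is a small technical refinement the paper glosses over, but it does not change the structure of the argument.
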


\begin{proof}
To implement the time evolution operator $e^{-iHt}$ using GQSP, we first construct a unitary operator $W$ as follows:
\begin{equation}
W=-(1-2\text{PREPARE}|0\rangle\langle 0|\text{PREPARE}^\dagger)\text{SELECT}.
\end{equation}
Then we find that the span of $\text{PREPARE}|0\rangle| E_j\rangle$ and  $W\cdot\text{PREPARE}|0\rangle| E_j\rangle$ forms a two-dimensional invariant subspace of $W$. Here $| E_j\rangle$ is the eigenvector of $H$ with eigenvalue $ E_j$. Using the basis of $\text{PREPARE}|0\rangle| E_j\rangle$ and the orthogonal vector, the action of $W$ in the subspace can be represented as a $2\times 2$ matrix:
\begin{equation}
W_{ E_j}=
\left [
\begin{matrix}
 E_j/\alpha & \sqrt{1- E_j^2/\alpha^2}\\
-\sqrt{1- E_j^2/\alpha^2}& E_j/\alpha
\end{matrix}
\right ]
\end{equation}

Note that the eigenvalues of $W_{ E_j}$ are $e^{\pm i\arccos( E_j/\alpha)}$. Denote the corresponding eigenvectors as $|\phi_j^\pm\rangle$. Then we have

\begin{equation}
W=\sum_{j=1}^ne^{ i\arccos( E_j/\alpha)}|\phi_j^+\rangle\langle\phi_j^+|+e^{- i\arccos( E_j/\alpha)}|\phi_j^-\rangle\langle\phi_j^-|.
\end{equation}
Considering the formula:
\begin{equation}
e^{-it\cos\theta}=\sum_{k=-\infty}^\infty (-i)^kJ_k(t)e^{ik\theta},
\end{equation}
If we choose $P(x)=\sum_{k=-\infty}^\infty (-i)^kJ_k(\alpha t)x^k$, we have:
\begin{equation}
\begin{aligned}
P(W)&=\sum_{j=1}^ne^{ -i E_j t}|\phi_j^+\rangle\langle\phi_j^+|+e^{-i E_j| t}\phi_j^-\rangle\langle\phi_j^-|
=\bigoplus_{j=1}^n e^{-i E_j t} \mathbb{I}_{ E_j}
\end{aligned}
\end{equation}
Here $\mathbb{I}_{ E_j}$ is the identity operator in the subspace spanned by $\text{PREPARE}|0\rangle| E_j\rangle$ and  $W\cdot\text{PREPARE}|0\rangle| E_j\rangle$.
Therefore, we can get the desired time evolution operator by acting $P(W)$ on the state $\text{PREPARE}|0\rangle|\psi\rangle$:
\begin{equation}
\begin{aligned}
P(W)\text{PREPARE}|0\rangle|\psi\rangle&=\sum_{j=1}^n e^{-i E_j t}\text{PREPARE}|0\rangle| E_j\rangle\langle E_j|\psi\rangle
=\text{PREPARE}|0\rangle e^{-iHt}|\psi\rangle.
\end{aligned}
\end{equation}

To implement $P(W)$, we can use GQSP method. 
First, we need to truncate $P(x)$ to a polynomial of degree $K$ in terms of $x$ and $x^{-1}$:
\begin{equation}
P(x)\approx \sum_{k=-K}^K (-i)^kJ_k(\alpha t)x^k:=P_K(x).
\end{equation}
Applying Corollary 8 in \cite{motlagh2024generalized} we can implement the block encoding of $P_K(W)$ using GQSP with $O(K)$ applications of $W$ and  $O(\log L)$ ancilla qubits:
\begin{equation}
U_{\text{b-e}}(W)=
\left [
\begin{matrix}
P_K(W) & \cdot\\
\cdot& \cdot
\end{matrix}
\right ].
\end{equation}
By acting $U_{\text{b-e}}(W)$ on $|0\rangle\otimes\text{PREPARE}|0\rangle|\psi\rangle$, we can get the state:

\begin{equation}
\left [
\begin{matrix}
P_K(W) & \cdot\\
\cdot& \cdot
\end{matrix}
\right ]\cdot \left [
\begin{matrix}
\text{PREPARE}|0\rangle|\psi\rangle\\
0
\end{matrix}
\right ]=\left [
\begin{matrix}
P_K(W)\cdot\text{PREPARE}|0\rangle|\psi\rangle\\
\cdot
\end{matrix}
\right ].
\end{equation}
Then we can measure the first register to get the approximation of the state $\text{PREPARE}|0\rangle e^{-iHt}|\psi\rangle$. The probability of success depends on the norm of $P_K(W)\cdot\text{PREPARE}|0\rangle|\psi\rangle$.
To analyse the error and probability of success, using  $P_K(e^{i\theta})=P_K(e^{-i\theta})$, we have:

\begin{equation}
\begin{aligned}
P_K(W)&=\sum_{j=1}^ne^{ -i\lambda'_j t}|\phi_j^+\rangle\langle\phi_j^+|+e^{-i\lambda'_j t}|\phi_j^-\rangle\langle\phi_j^-|=\bigoplus_{j=1}^n e^{-i\lambda'_j t} \mathbb{I}_{ E_j}.
\end{aligned}
\end{equation}
Here $e^{-i\lambda'_j t}:=P_K(e^{\pm i\arccos(E_j/\alpha)})$. Note that $P_K(W)$ may not be unitary for $\lambda'_j$ may not be real. The imaginary part of $\lambda'_j$ will give rise to the probablity leakage.

\begin{equation}
\begin{aligned}
P_K(W)\text{PREPARE}|0\rangle|E_j\rangle&=\text{PREPARE}|0\rangle e^{-i\lambda'_jt}|E_j\rangle.\\
\end{aligned}
\end{equation}
Here we have
\begin{equation}
\begin{aligned}
e^{-i\lambda'_j t}&=P_K(e^{\pm i\arccos(E_j/\alpha)})=\sum_{k=-K}^K (-i)^kJ_k(\alpha t)e^{\pm i k\arccos(E_j/\alpha)}=J_0(\alpha t) + 2\sum_{k=1}^K (-i)^kJ_k(\alpha t)\cos(k\arccos(E_j/\alpha)).\\
\end{aligned}
\end{equation}
Then we can deduce that for GQSP, the effective time evolution operator is:
\begin{equation}
    \widetilde{U}(x)=P_K \left (e^{i\arccos(H(x)/\alpha)} \right ).
\end{equation}
The error from truncation is:
\begin{equation}
\begin{aligned}
e^{-i\lambda_j t}-e^{-i\lambda'_j t}&= 2\sum_{k=K+1}^{+\infty} (-i)^kJ_k(\alpha t)\cos(k\arccos(E_j/\alpha)).\\
\end{aligned}
\end{equation}
The truncation error can be bounded by the asymptotic behavior of Bessel functions:
\begin{equation}
\begin{aligned}
J_k(x)&=\sum_{m=0}^\infty \frac{(-1)^m}{m!(m+k)!}\Big(\frac{x}{2}\Big)^{2m+k}\subset O((\frac{ex}{2k})^k)\\
\end{aligned}
\end{equation}
From Ref.~\cite{motlagh2024generalized} we know that if we choose $K=K_{\epsilon}:=O(\alpha t + \log(1/\epsilon)/\log\log(1/\epsilon))$, then we can ensure that
\begin{equation}\label{eq:bound_GQSP}
\begin{aligned}
|e^{-i\lambda_j t}-e^{-i\lambda'_j t}|&\leq \epsilon.\\
\end{aligned}
\end{equation}
Therefore, the probability of success can be bounded as:
\begin{equation}
\begin{aligned}
\Big|P_K(W)\cdot\text{PREPARE}|0\rangle|E_j\rangle\Big|^2\geq (1-\epsilon)^2.\\
\end{aligned}
\end{equation}

If we use this method at each $r$ steps with time interval $\delta t$, the probability of success is $(1-\epsilon)^{2r}$. To keep the the probability of failure less than $\delta$, we need to choose $\epsilon=\delta/2r$. The order of GQSP is $K_{\epsilon}=O(\alpha \delta t + \log(r/\delta)/\log\log(r/\delta))$.
\end{proof}

\subsection{Proof of Lemma A.5}\label{sec:condition_of_GQSP}

\begin{lemma}
Using GQSP to implement each time evolution operator $U(m/r)$ at step $m$ in AQC, while $T\to +\infty$, the condition in Eq.~(\ref{eq:conditionA}) is satisfied if $\delta t$ is small enough and $K=O(\alpha\delta t+\log( r/\delta)/\log\log (r/\delta))$, where $K$ is the order of GQSP.

\end{lemma}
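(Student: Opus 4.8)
The plan is to exploit the fact that the GQSP operator $\widetilde{U}(x) = P_K(e^{i\arccos(H(x)/\alpha)})$ constructed in Lemma A.4 is a function of $H(x)$ alone, and is therefore \emph{diagonal} in the instantaneous eigenbasis $\{|\phi_i(x)\rangle\}$. Writing its eigenvalue on $|\phi_i(x)\rangle$ as $e^{-i\lambda_i'\delta t}$ with $\lambda_i' = \mu_i + i\nu_i$ (so $e^{-i\lambda_i'\delta t} = P_K(e^{i\arccos(E_i(x)/\alpha)})$), both conditions—namely $\exp(-\bar\Lambda_i(x)T) = O(1)$ and $\widetilde\Delta_i(x) > 0$—reduce to elementary statements about the real part $\mu_i$ and imaginary part $\nu_i$ of $\lambda_i'$. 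The whole argument then rests on controlling $\mu_i$ and $\nu_i$ through the truncation error bound already established in Lemma A.4.

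First I would use diagonality to evaluate the relevant matrix elements. Since $\widetilde{U}$ is diagonal, $\langle\phi_i|\widetilde{U}|\phi_i\rangle = e^{-i\lambda_i'\delta t}$, so its modulus is $e^{\nu_i\delta t}$ and its phase is $-\mu_i\delta t$; consequently the effective energy is $\widetilde{E}_i(x) = -\arg(\langle\phi_i|\widetilde{U}|\phi_i\rangle)/\delta t = \mu_i$ and the effective gap is $\widetilde\Delta_i(x) = \mu_i - \mu_0$. Likewise $\widetilde{U}^\dagger\widetilde{U}$ is diagonal with entry $e^{2\nu_0\delta t}$ on $|\phi_0\rangle$, so the normalization factor is $a(x) = e^{\nu_0\delta t}$ and the decay rate collapses to $\Lambda_i(x) = \nu_0 - \nu_i$. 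At this point I should flag the one subtlety: identifying $-\arg(\langle\phi_i|\widetilde{U}|\phi_i\rangle)/\delta t$ with $\mu_i$ requires $|E_i|\delta t$ to lie inside the principal branch, which is exactly the winding-number condition that holds once $\delta t$ is small enough, and is the reason the hypothesis demands small $\delta t$.

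Next I would bound $\mu_i$ and $\nu_i$ using the GQSP truncation estimate $|e^{-i\lambda_i'\delta t} - e^{-iE_i\delta t}| \le \epsilon$ from Lemma A.4, where $\epsilon = \delta/(2r)$ is the per-step error chosen to keep the total failure probability below $\delta$. The key step is to separate amplitude and phase errors via the identity
\begin{equation}
|z - w|^2 = (|z| - |w|)^2 + 2|z||w|(1 - \cos\theta),
\end{equation}
applied to $z = e^{-i\lambda_i'\delta t}$, $w = e^{-iE_i\delta t}$ (so $|w| = 1$, $\theta = (E_i - \mu_i)\delta t$). The first nonnegative term gives $|e^{\nu_i\delta t} - 1| \le \epsilon$, hence $|\nu_i| = O(\epsilon/\delta t)$; the second gives $1 - \cos\theta = O(\epsilon^2)$, hence $|\theta| = O(\epsilon)$ and $|\mu_i - E_i| = O(\epsilon/\delta t)$. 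Since $r = T/\delta t$, both errors are $O(\delta/T)$, which vanishes as $T \to +\infty$.

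Finally I would assemble the two conclusions. For the gap condition, $\widetilde\Delta_i(x) = \mu_i - \mu_0 = (E_i - E_0) + O(\delta/T) = g_i(x) + O(\delta/T) \ge g_{\min} - O(\delta/T)$, which is strictly positive for $T$ large enough. For the decay condition, $|\Lambda_i(x)| = |\nu_0 - \nu_i| = O(\delta/T)$ uniformly in $x$, so $|\bar\Lambda_i(x)| = O(\delta/T)$ and therefore $|\bar\Lambda_i(x)T| = O(\delta)$, giving $\exp(-\bar\Lambda_i(x)T) = O(1)$ regardless of the sign of $\bar\Lambda_i$. The main obstacle is the third step: cleanly extracting \emph{separate} linear-in-$\epsilon$ bounds on the amplitude and phase deviations from the single modulus bound of Lemma A.4 (a careless estimate would only yield $O(\sqrt{\epsilon})$ for the phase), together with verifying that the winding-number branch identification remains valid uniformly along the path. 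Everything else follows directly from the diagonality of $\widetilde{U}$.
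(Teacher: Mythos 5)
Your proof is correct, and it rests on the same foundation as the paper's: the per-step GQSP truncation bound from Lemma A.4 with $\epsilon = \delta/(2r)$, which makes every error quantity of size $O(\epsilon/\delta t) = O(\delta/T)$. The difference is in execution. The paper verifies the decay condition the way you do in spirit---it converts Lemma A.4 into $\|\widetilde{U}(m/r)-U(m/r)\| = O(\delta/r)$, perturbs the matrix elements $\langle\phi_0|\widetilde{U}^\dagger\widetilde{U}|\phi_0\rangle$ and $\langle\phi_i|\widetilde{U}|\phi_i\rangle$, and concludes $\Lambda_i(m/r)=O(1/T)$, hence $\exp(-\bar{\Lambda}_i T)=O(1)$---but it dispatches the gap condition with a purely qualitative statement: ``while $\delta t\to 0$ and $T\to+\infty$, $\widetilde{\Delta}_i \to \Delta_i > 0$.'' You instead exploit the diagonality of $\widetilde{U}(x)=P_K(e^{i\arccos(H(x)/\alpha)})$ to write exact expressions $\widetilde{E}_i=\mu_i$, $\Lambda_i=\nu_0-\nu_i$, and then use the identity $|z-w|^2=(|z|-|w|)^2+2|z||w|(1-\cos\theta)$ to extract \emph{separate} linear-in-$\epsilon$ bounds on the amplitude deviation $|\nu_i|=O(\delta/T)$ and the phase deviation $|\mu_i-E_i|=O(\delta/T)$ (avoiding the naive $O(\sqrt{\epsilon})$ phase estimate). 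This buys you a quantitative version of the gap condition, $\widetilde{\Delta}_i \ge g_{\min}-O(\delta/T)>0$, with an explicit rate where the paper only asserts convergence, and it makes transparent exactly where the smallness of $\delta t$ enters (the principal-branch/winding identification, which the paper also only gestures at). Your argument is thus a strict refinement of the paper's proof rather than a different strategy.
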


\begin{proof}
    First, we know that while $\delta t\to 0$ and $T \to +\infty $, $\widetilde{\Delta}_i(m/r)\to \Delta_i(m/r)>0$. So for sufficiently small $\delta t$, the latter condition is satisfied.
Next, from Eq.~(\ref{eq:bound_GQSP}), we know that
\begin{equation}
\|\widetilde{U}(m/r)-U(m/r)\|=O(\epsilon)=O(\frac{\delta}{r}).
\end{equation}
This result helps us to estimate the decay rate $\Lambda_i(m/r)$:
\begin{equation}
\begin{aligned}
\Lambda_i(m/r)&=\frac{1}{\delta t}\ln\frac{\sqrt{|\langle \phi_0(m/r)|\widetilde{U}^{\dagger}(m/r)\widetilde{U}(m/r)|\phi_0(m/r)\rangle|}}{|\langle \phi_i(m/r)|\widetilde{U}(m/r)|\phi_i(m/r)\rangle|}\\
&=\frac{1}{2\delta t}\ln|\langle \phi_0(m/r)|\widetilde{U}^{\dagger}(m/r)\widetilde{U}(m/r)|\phi_0(m/r)\rangle|-\frac{1}{\delta t}\ln|\langle \phi_i(m/r)|\widetilde{U}(m/r)|\phi_i(m/r)\rangle|\\
&=O(\frac{1}{T}).
\end{aligned}
\end{equation}
Then we can ensure that the first condition in Eq.~(\ref{eq:conditionA}) is satisfied:
\begin{equation}
e^{-\overline{\Lambda}_i(m/r)T}=\exp\Big(-\frac{1}{T}\sum_{m'=1}^m \Lambda_i(m'/r)\Big)=O(1).
\end{equation}
\end{proof}

\subsection{Proof of Theorem 1}\label{sec:proof_of_theorem1}

We are now ready to prove the theorems we presented in the main text:

\begin{theorem}[Theorem 1 in the main text]
    Assume that in a AQC process, at step $m$, we use $\widetilde{U}(m/r)$ to approximate the exact time evolution operator $U(m/r)$, where $\widetilde{U}(x)=U(x)+U_{\text{res}}(x)\delta t^{k+1} + O(\delta t^{k+2})$. While $\delta t$ is sufficiently small and  $T \to +\infty$, if the following condition
\begin{equation}\label{eq:conditionA}
    \left\{
    \begin{aligned}
        &\exp{(-\overline{\Lambda}_i(x)T)}=O(1),~~\forall x\in [0,1]\\
         &~~\widetilde{\Delta}_i(x)>0, ~~\forall x\in [0,1],
    \end{aligned}
    \right.
\end{equation}
is satisfied, the infidelity $\mathcal{I}$ scales as:
\begin{equation}\label{eq:bound1_SI}
    \mathcal{I}=O(\beta_{\text{ad}}^2\frac{1}{T^2}+ \beta_{\text{sim}}^2\delta t^{2k})
\end{equation}
Here $\beta_{\text{ad}}$ and $\beta_{\text{sim}}$ are coefficients for non-adiabatic error and quantum simulation error, which are irrelevant to $T$ and $\delta t$:
\begin{equation}
\begin{aligned}
&\beta_{\text{ad}}^2=\sum_{j\neq 0}\Big[|\frac{\dot{u}(1) \langle\phi_i(1)|H'(1)|\phi_0(1)\rangle }{g_{j}(1)^2}|^2+|\frac{\dot{u}(0) \langle\phi_i(0)|H'(0)|\phi_0(0)\rangle }{g_{j}(0)^2}|^2\Big],\\
&\beta_{\text{sim}}^2=\sum_{j\neq 0}|\frac{\langle\phi_i(1)|U_{\text{res}}(1)|\phi_0(1)\rangle }{g_{j}(1)}|^2+|\frac{ \langle\phi_i(0)|U_{\text{res}}(0)|\phi_0(0)\rangle }{g_{j}(0)}|^2.\\
\end{aligned}
\end{equation}
\end{theorem}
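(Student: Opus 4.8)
The plan is to reduce the infidelity to a single oscillatory sum and then invoke the non-stationary phase estimate of Lemma A.1. First I would set up the step-by-step recursion for the excited-state amplitude $\gamma_i(m/r)=\langle\phi_i(m/r)|\psi(m/r)\rangle$. At each step the ground-state population feeds the excited states through two channels: the non-adiabatic overlap $\langle\phi_i((m+1)/r)|\phi_0(m/r)\rangle$ and the algorithmic leakage $\langle\phi_i(m/r)|U_{\text{res}}|\phi_0(m/r)\rangle\delta t^{k+1}$, which together are exactly the phaseless amplitude $R_i(m/r)$; thereafter each contribution propagates to $t=T$ accumulating the effective phase $e^{i\bar{\Delta}_i(m/r)m\delta t}$ and the rescaling $e^{-\bar{\Lambda}_i(m/r)m\delta t}$ coming from the non-unitarity of $\widetilde U$. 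Keeping only the leading contribution in $\delta t$ (dropping the $O(\delta t^{k+2})$ remainders and the higher products of transition events) yields
\[ \mathcal{I}(1)\approx\sum_{i\neq0}\Big|\sum_{m=1}^r R_i(m/r)\,e^{-\bar{\Lambda}_i(m/r)m\delta t}\,e^{i\bar{\Delta}_i(m/r)m\delta t}\Big|^2, \]
and I write $S_i$ for the inner sum.

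Next I would pass to the continuum. Since $m\delta t=(m/r)T=xT$ and the grid spacing in $x$ is $1/r$, the sum becomes $r\int_0^1(\cdots)\,dx$, and I would combine the envelope and phase into $e^{T\phi_i(x)}$ with $\phi_i(x):=x\big(-\bar{\Lambda}_i(x)+i\bar{\Delta}_i(x)\big)=-\int_0^x\Lambda_i(x')dx'+i\int_0^x\widetilde{\Delta}_i(x')dx'$. Splitting $R_i$ into its two pieces, the non-adiabatic part carries a factor $1/r$ that cancels the $r$ from the measure, whereas the simulation part carries $\delta t^{k+1}$ and hence enters with prefactor $r\delta t^{k+1}=T\delta t^{k}$. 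This produces two integrals of the form $\int_0^1 e^{T\phi_i(x)}\psi(x)\,dx$ with $\lambda=T$, to which Lemma A.1 applies: the theorem's first hypothesis $\exp(-\bar{\Lambda}_i(x)T)=O(1)$ is precisely $\exp(\lambda\Re\phi_i)=O(1)$, while the second hypothesis $\widetilde{\Delta}_i(x)>0$ is exactly $\Im\phi_i'(x)=\widetilde{\Delta}_i(x)>0$, so the phase has no stationary point.

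Applying Lemma A.1 at order $l=1$ leaves only the boundary terms at $x=0,1$, each of size $O(1/T)$ and proportional to $\psi(x)/(i\phi_i'(x))$. Since $\phi_i'(x)=-\Lambda_i(x)+i\widetilde{\Delta}_i(x)\to ig_i(x)$ as $\delta t\to0$, the non-adiabatic integral gives $S_i^{\text{ad}}=O(1/T)$; I would evaluate its boundary coefficient $\langle\dot{\phi}_i|\phi_0\rangle/g_i=-\langle\phi_i|\dot{\phi}_0\rangle/g_i$ using Lemma A.3 to obtain $\dot{u}\,\langle\phi_i|H'|\phi_0\rangle/g_i^2$, reproducing $\beta_{\text{ad}}$. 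The simulation integral likewise gives $O(1/T)$, which after the $T\delta t^k$ prefactor becomes $S_i^{\text{sim}}=O(\delta t^k)$ with boundary coefficient $\langle\phi_i|U_{\text{res}}|\phi_0\rangle/g_i$, reproducing $\beta_{\text{sim}}$. Squaring, summing over $i\neq0$, and bounding the cross term then yields $\mathcal{I}=O(\beta_{\text{ad}}^2T^{-2}+\beta_{\text{sim}}^2\delta t^{2k})$.

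The main obstacle I anticipate is the first reduction rather than the asymptotic analysis: justifying that the multi-step product evolution collapses to the single clean oscillatory sum above requires carefully checking that (i) the accumulated non-unitary rescaling exponentiates into $\Re\phi_i$, (ii) repeated transition events and the $O(\delta t^{k+2})$ tails stay subleading \emph{uniformly} in $T$, and (iii) the replacement of $\sum_m$ by $r\int_0^1$ does not generate an error competitive with the $O(T^{-1})$ leading term. The hypotheses of Eq.~(\ref{eq:conditionA}) are what make the phase analysis work: $\widetilde{\Delta}_i>0$ ensures $\Im\phi_i'\neq0$ throughout, so the oscillation is genuinely non-stationary and Lemma A.1 delivers the full $O(T^{-1})$ decay (rather than the $O(T^{-1/2})$ one would get from a stationary point), while $\exp(-\bar{\Lambda}_iT)=O(1)$ keeps the envelope bounded so that the boundary contributions remain $O(1)$.
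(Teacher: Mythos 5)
Your proposal is correct and follows essentially the same route as the paper's own proof: the same step-recursion for $\gamma_i(m/r)$, reduction to an oscillatory sum over $R_i$ with phase $e^{i\bar{\Delta}_i(x)xT}$, passage to a continuum integral, Lemma A.1 for the non-stationary-phase boundary asymptotics, and Lemma A.3 to identify $\beta_{\text{ad}}$, with the same prefactor bookkeeping (the $1/r$ in the non-adiabatic part vs.\ $r\delta t^{k+1}=T\delta t^{k}$ in the simulation part). The one slip is your decay envelope: you write $e^{-\bar{\Lambda}_i(m/r)m\delta t}=e^{-T\int_0^x\Lambda_i(x')dx'}$, but an excitation created at step $m$ is rescaled during its \emph{subsequent} propagation to $t=T$, so the correct factor is $e^{-T\int_x^1\Lambda_i(x')dx'}$, i.e.\ the paper's $\phi(x)=\bar{\Lambda}_i(x)x-\bar{\Lambda}_i(1)+i\bar{\Delta}_i(x)x$ — your formula actually contradicts your own verbal description ("thereafter each contribution propagates\dots"). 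This does not damage the conclusion, since under the hypothesis $\exp(-\bar{\Lambda}_i(x)T)=O(1)$ both envelopes are $O(1)$ and satisfy the condition of Lemma A.1, so the boundary terms and hence the bound $\mathcal{I}=O(\beta_{\text{ad}}^2T^{-2}+\beta_{\text{sim}}^2\delta t^{2k})$ come out identically; it only swaps which of the two boundary contributions carries the $e^{-\bar{\Lambda}_i(1)T}$ factor.
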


\begin{proof}
    The transition amplitude between the state $|\psi(\frac{m}{r})\rangle$ and the instantaneous state $|\phi_i(\frac{m}{r})\rangle$ ($i\neq 0$) of $H(\frac{m}{r})$ is:
\begin{equation}
\begin{aligned}
\gamma_i(\frac{m}{r})&=\langle\phi_i(\frac{m}{r})|\psi(\frac{m}{r})\rangle\\
&=\langle\phi_i(\frac{m}{r})|\frac{\widetilde{U}(m/r)}{a(m/r)}|\psi(\frac{m-1}{r})\rangle\\
&=  \sum_{j,k} \langle\phi_i(\frac{m}{r})|\frac{\widetilde{U}(m/r)}{a(m/r)}|\phi_j(\frac{m}{r})\rangle\langle\phi_j(\frac{m}{r})|\phi_k(\frac{m-1}{r})\rangle\gamma_k(\frac{m-1}{r}).\\
\end{aligned}
\end{equation}
Each term in the summation can be approximated as follows:
\begin{equation}
\begin{aligned}
\langle\phi_i(\frac{m}{r})|\frac{\widetilde{U}(m/r)}{a(m/r)}|\phi_i(\frac{m}{r})\rangle&=e^{-\Lambda_i(m/r)\delta t-i\widetilde{E}_i(m/r)\delta t}=1+O(\delta t),\\
\langle\phi_i(\frac{m}{r})|\widetilde{U}(m/r)|\phi_j(\frac{m}{r})\rangle&=\langle\phi_i(\frac{m}{r})|U_{\text{res}}(m/r)|\phi_j(\frac{m}{r})\rangle\delta t^{k+1}+O(\delta t^{k+2}), \text{ for $i\neq j$},\\
\langle\phi_j(\frac{m}{r})|\phi_k(\frac{m-1}{r})\rangle&=\delta_{j,k}+\frac{1}{r}\langle\dot{\phi_j}(m/r)|\phi_k(m/r)\rangle+O(\frac{1}{r^2}).\\
\end{aligned}
\end{equation}
Referring to the first-order disturbance theory, we further assume that $\gamma_0(m/r) \gg \gamma_i(m/r)$ and only consider the transition between the $i$th level and the ground state. Therefore, we obtain the following iteration equation:

\begin{equation}
\begin{aligned}
 \gamma_i(\frac{m}{r})&=\frac{1}{a(m/r)}\langle \phi_i|\widetilde{U}|\phi_i\rangle\gamma_i(\frac{m-1}{r})+\delta tR_i(m/r)\gamma_0(\frac{m-1}{r})+O(\delta t^{k+2}),\\
\end{aligned}
\end{equation}
where we define the transition amplitude from the ground state to the $i$th level as:
\begin{equation}
\begin{aligned}
R_i(m/r) &=R_i^{\text{ad}}(m/r)+\frac{1}{a(m/r)\delta t }\langle\phi_i(\frac{m}{r})|\widetilde{U}(m/r)|\phi_0(\frac{m}{r})\rangle\\
&= \frac{1}{T}\frac{\dot{u}(m/r)\langle \phi_i(m/r)|H'(m/r)|\phi_0(m/r)\rangle}{g_i(m/r)}+\frac{1}{a(m/r)}\langle \phi_i|U_{\text{res}}|\phi_0\rangle\delta t^k+O(\frac{1}{rT})+O(\delta t^{k+1}),\\
\end{aligned}
\end{equation}
and the adiabatic transition amplitude is defined as:
\begin{equation}
\begin{aligned}
R_{i}^{\text{ad}}(m/r):&=\frac{1}{\delta t}\langle\phi_i(m/r)|\phi_0(m-1/r)\rangle\\
&=\frac{1}{T}\langle\dot{\phi_i}(m/r)|\phi_0(m/r)\rangle+O(\frac{1}{rT})\\
&=\frac{1}{T}\frac{\dot{u}(m/r)\langle \phi_i(m/r)|H'(m/r)|\phi_0(m/r)\rangle}{g_{i}(m/r)}+O(\frac{1}{rT}).
\end{aligned}
\end{equation}
Here Lemma A.3 is used.

For simplicity, we define the following notations:
\begin{equation}
\begin{aligned}
C_i(m/r) &= \prod_{m'=1}^m a(m'/r)\langle \phi_i|\widetilde{U}|\phi_i\rangle^{-1}
= e^{[\bar{\Lambda}_i(m/r)+i\bar{E}_i(m/r)]Tm/r}.\\
\end{aligned}
\end{equation}
Multiplied $C_i(m/r)$ on both sides of the iteration equation above, we have:
\begin{equation}
C_i(\frac{m}{r})\gamma_i(\frac{m}{r})=C_i(\frac{m-1}{r})\gamma_i(\frac{m-1}{r})+R_i(\frac{m}{r})C_i(\frac{m}{r})\gamma_0(\frac{m-1}{r})\delta t.
\end{equation}
Summing over $m$ from $1$ to $m$, we get:
\begin{equation}
C_i(\frac{m}{r})\gamma_i(\frac{m}{r})=\sum_{m'=1}^{m-1}R_i(\frac{m'}{r})C_i(\frac{m'}{r})\gamma_0(\frac{m'}{r})\delta t.
\end{equation}

For first order approximation, we have:
\begin{equation}
\gamma_0(\frac{m}{r})=e^{-i\bar{E}_0(m/r)m\delta t}.
\end{equation}
For continuous version, we have the following expression:
\begin{equation}
\gamma_0(x)=e^{-i\bar{E}_0(x)xT}.
\end{equation}
Setting $m=r$ and changing the summation to integral, we get
\begin{equation}
\begin{aligned}
\gamma_i(1)&=Te^{-i\bar{E}_i(1)T}\int_0^1R_i(x)e^{[\bar{\Lambda}_i(x)x-\bar{\Lambda}_i(1)+i\bar{\Delta}_i(x)x]T}\left (1+O(\delta t)\right ) dx.
\end{aligned}
\end{equation}

The $(1+O(\delta t))$ term comes from the error of substituting a sum with an integral. To solve the integral, we define $\phi(x)=\bar{\Lambda}_i(x)x-\bar{\Lambda}_i(1)+i\bar{\Delta}_i(x)x$. Note that the derivative of $\phi(x)$ is:
\begin{equation}
\phi_i'(x)=\Lambda_i(x)+i\widetilde{\Delta}_{i}(x).
\end{equation}
From the condition in Eq.~(\ref{eq:conditionA}) we can infer that the condition Eq.~(\ref{eq:condition_lemma}) is satisfied, so we can use Lemma A.1 to get the formula below:

\begin{equation}\label{eq:cal_int}
\begin{aligned}
\gamma_i(1)&=\frac{R_i(1)e^{i\bar{\Delta}_i(1)T}}{\Lambda_i(1)+i\widetilde{\Delta}_i(1)}-\frac{R_i(0)e^{-\bar{\Lambda}_i(1)T}}{\Lambda_i(0)+i\widetilde{\Delta}_i(0)} +O(\frac{1}{T^2})\\
&\approx\frac{R_i(1)e^{i\bar{\Delta}_i(1)T}}{\Lambda_i(1)+i\widetilde{\Delta}_i(1)}-\frac{R_i(0)}{\Lambda_i(0)+i\widetilde{\Delta}_i(0)}e^{-\bar{\Lambda}_i(1)T}\\
&\approx\frac{R_i(1)e^{i\bar{g}_i(1)T}}{ig_i(1)}-\frac{R_i(0)}{ig_i(0)}e^{-\bar{\Lambda}_i(1)T}.\\
\end{aligned}
\end{equation}
Here we ignore the global phase factor $e^{-i\bar{E}_i(1)T}$. Considering that $\exp{(-\overline{\Lambda}_i(1)T)}=O(1)$, we can prove the infidelity scales as:
\begin{equation}
\mathcal{I}=\sum_{i\neq0}|\gamma_i|^2=O(\beta_{\text{ad}}^2\frac{1}{T^2}+ \beta_{\text{sim}}^2\delta t^{2k}).
\end{equation}
The expression of the infidelity upper bound might lead readers to a misunderstanding that as long as the energy gap of the initial Hamiltonian and the final Hamiltonian is large enough, the infidelity can be quite small through the whole evolution process despite that the energy gap in the middle is exponentially small. However, this is not the case. We derive the upper bound by expanding the infidelity around $T=+\infty$ as the asymptotic series and truncating at a certain order. The truncation is not a good approximation until the adiabatic condition $T>>g_{\text{min}}^{-2}$ is satisfied.
\end{proof}

\subsection{Proof of Theorem 2}\label{sec:proof_of_theorem2}

\begin{theorem}[Theorem 2 in the main text]
While $\delta t$ is sufficiently small and  $T \to +\infty$, the infidelity of the state prepared by the AQC process with the $k$-th order sub-Trotterization scales as $O(\beta_{\text{tro}}^2\delta t^{2k}+\beta_{\text{ad}}^2T^{-2})$. 
    
Here the coefficient $\beta_{\text{ad}}$ and $\beta_{\text{tro}}$ can be expressed as:
\begin{equation}
\begin{aligned}
&\beta_{\text{tro}}^2=\Big[|\frac{\widetilde{\alpha}^{(i)}_\text{comm}}{g_{1}(0)}|^2+|\frac{\widetilde{\alpha}^{(f)}_\text{comm}}{g_{1}(1)}|^2\Big],\\
&\beta_{\text{ad}}^2=\sum_{j\neq 0}\Big[|\frac{\dot{u}(1) \langle\phi_i(1)|H'(1)|\phi_0(1)\rangle }{g_{j}(1)^2}|^2+|\frac{\dot{u}(0) \langle\phi_i(0)|H'(0)|\phi_0(0)\rangle }{g_{j}(0)^2}|^2\Big].\\
\end{aligned}
\end{equation}
The commutator scaling factors $\widetilde{\alpha}^{(i)}_\text{comm}$ and $\widetilde{\alpha}^{(f)}_\text{comm}$ depend on the initial Hamiltonian $H_i$ and the final Hamiltonian $H_f$ respectively:
\begin{equation}
\begin{aligned}
&\widetilde{\alpha}^{(i)}_\text{comm}=\sum_{\gamma_1,\gamma_2,\cdots,\gamma_{k+1}=1}^{\Gamma_{i}}\prod_{t=1}^{k+1}c^{(i)}_{\gamma_t}\|[P^{(i)}_{\gamma_{k+1}},\cdots[P^{(i)}_{\gamma_2},P^{(i)}_{\gamma_1}]\cdots]\|,\\
&\widetilde{\alpha}^{(f)}_\text{comm}=\sum_{\gamma_1,\gamma_2,\cdots,\gamma_{k+1}=1}^{\Gamma_{f}}\prod_{t=1}^{k+1}c^{(f)}_{\gamma_t}\|[P^{(f)}_{\gamma_{k+1}},\cdots[P^{(f)}_{\gamma_2},P^{(f)}_{\gamma_1}]\cdots]\|.\\
\end{aligned}
\end{equation}
Here we assume that the initial Hamiltonian $H_i$ and the final Hamiltonian $H_f$ are both decomposed into the sum of $\Gamma_{i/f}$ Pauli operators:
\begin{equation}
H_i=\sum_{\gamma=1}^{\Gamma_i}c^{(i)}_\gamma P^{(i)}_{\gamma},~~~H_f=\sum_{\gamma=1}^{\Gamma_f}c^{(f)}_\gamma P^{(f)}_{\gamma}.
\end{equation}
\end{theorem}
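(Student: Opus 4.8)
The plan is to obtain Theorem 2 as a direct specialization of Theorem 1, so that the only genuine work is to identify the residual operator $U_{\text{res}}$ for the combined primary- and sub-Trotterization scheme and then evaluate the general coefficient $\beta_{\text{sim}}$ of Eq.~(\ref{eq:bound1_SI}) at the two endpoints $x=0$ and $x=1$. First I would verify the hypotheses of Theorem 1, Eq.~(\ref{eq:conditionA}). Because every factor of the Trotter product $\widetilde{U}(m/r)$ is a single-Pauli rotation, $\widetilde{U}$ is exactly unitary; hence $a(m/r)=1$, $\Lambda_i(x)\equiv 0$, and the first condition $\exp(-\bar{\Lambda}_i(x)T)=O(1)$ holds trivially. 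The second condition $\widetilde{\Delta}_i(x)>0$ follows because $\widetilde{E}_i(x)\to E_i(x)$ as $\delta t\to 0$, so $\widetilde{\Delta}_i(x)\to g_i(x)\ge g_{\min}>0$; this is precisely the requirement that $\delta t^{k+1}$ be small relative to $g_{\min}$ so as to avoid gap closure. With both conditions checked, Theorem 1 immediately gives $\mathcal{I}=O(\beta_{\text{ad}}^2 T^{-2}+\beta_{\text{sim}}^2\delta t^{2k})$ with the stated $\beta_{\text{ad}}$ inherited verbatim, leaving only the computation of $\beta_{\text{sim}}=\beta_{\text{tro}}$.

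The heart of the proof is the structure of $U_{\text{res}}$. I would split the total error into a primary part, from approximating $e^{-iH(x)\delta t}$ by $e^{-i(1-u(x))H_i\delta t}e^{-iu(x)H_f\delta t}$, and a sub part, from Trotterizing each of $e^{-i(1-u(x))H_i\delta t}$ and $e^{-iu(x)H_f\delta t}$ into single-Pauli rotations. The crucial observation is the $x$-dependence of the leading prefactor of each piece. By Baker--Campbell--Hausdorff every term of the primary error is a nested commutator containing at least one factor of $A=(1-u)H_i\delta t$ and one of $B=uH_f\delta t$, so its leading coefficient carries a factor proportional to $u(x)(1-u(x))$, which vanishes at both $x=0$ and $x=1$ since $u(0)=0$, $u(1)=1$. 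The sub-Trotter error for the $H_i$ block instead carries a prefactor $(1-u(x))^{k+1}$ (the commutator-scaling sum is homogeneous of degree $k+1$ in the coefficients $(1-u)c^{(i)}_\gamma$), and that for the $H_f$ block a prefactor $u(x)^{k+1}$; neither vanishes at its corresponding endpoint.

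Since Theorem 1 shows that only the boundary values $U_{\text{res}}(0)$ and $U_{\text{res}}(1)$ enter $\beta_{\text{sim}}$—the interior oscillatory contributions cancel by Lemma A.1—the primary error drops out entirely and only the sub-Trotter error survives. At $x=0$ the coefficient $(1-u)^{k+1}=1$ isolates the $H_i$ block, and at $x=1$ the coefficient $u^{k+1}=1$ isolates the $H_f$ block. I would then invoke Lemma A.2 (commutator-scaling Trotter error) to bound the operator norm of each surviving residual, $\|U_{\text{res}}(0)\|=O(\widetilde{\alpha}^{(i)}_{\text{comm}})$ and $\|U_{\text{res}}(1)\|=O(\widetilde{\alpha}^{(f)}_{\text{comm}})$, where the commutator factors are exactly the nested-commutator sums over the Pauli decompositions of $H_i$ and $H_f$. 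Using $|\langle\phi_i|U_{\text{res}}|\phi_0\rangle|\le\|U_{\text{res}}\|$ together with $g_j\ge g_1$ to replace the spectral gaps by the first gap then collapses the general $\beta_{\text{sim}}$ into the claimed $\beta_{\text{tro}}^2=|\widetilde{\alpha}^{(i)}_{\text{comm}}/g_1(0)|^2+|\widetilde{\alpha}^{(f)}_{\text{comm}}/g_1(1)|^2$.

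The main obstacle I anticipate is the careful bookkeeping of $U_{\text{res}}$: one must confirm that the primary-Trotterization contribution genuinely vanishes to leading order at both endpoints, so that it cannot pollute the boundary coefficient, while simultaneously verifying that the surviving sub-Trotterization residual at order $\delta t^{k+1}$ is captured—up to the $x$-dependent prefactor—by the commutator-scaling estimate of Lemma A.2 applied separately to $H_i$ and $H_f$. A secondary subtlety is securing the no-gap-closure hypothesis $\widetilde{\Delta}_i(x)>0$ uniformly on $[0,1]$, which quantifies how small $\delta t$ must be chosen relative to $g_{\min}$.
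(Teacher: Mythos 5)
Your proposal matches the paper's proof essentially step for step: it specializes Theorem 1 after checking the two conditions of Eq.~(\ref{eq:conditionA}), splits the residual into primary and sub-Trotter parts, observes that the primary error carries a $u(x)\bigl(1-u(x)\bigr)$ prefactor and hence vanishes at both endpoints (so only the boundary terms from Lemma A.1 survive), and bounds the remaining sub-Trotter residuals ($H_i$ at $x=0$, $H_f$ at $x=1$) via the commutator-scaling Lemma A.2 together with $|\langle\phi_i|U_{\text{res}}|\phi_0\rangle|\le\|U_{\text{res}}\|$ and $g_j\ge g_1$, exactly as the paper does. The only slip is your claim that unitarity forces $\Lambda_i\equiv 0$: in fact $\Lambda_i(x)=-\delta t^{-1}\ln|\langle\phi_i|\widetilde{U}|\phi_i\rangle|\ge 0$ and is generically positive since $|\phi_i\rangle$ need not be an eigenvector of $\widetilde{U}$, but this is harmless because the first condition only requires $\bar{\Lambda}_i\ge 0$.
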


\begin{proof}
    We denote the $k$-th order trotter formula of $e^{-iH_{i/f}[1-u(m/r)]\delta t}$ as $U_{i/f}^k(m/r)$. The additional error introduced by the k-th order trotter formula is $\mathcal{A}_{i/f}^k(u)=O(t^{k+1})$. Moreover, we will use $\mathcal{A}^k(m/r)$ to represent the additional error introduced by the $k$-th order primary Trotterization formula of $\exp(-i\{[1-u(m/r)]H_i+u(m/r)H_f\}\delta t)$. That is:

\begin{equation}
\begin{aligned}
e^{-i\{[1-u(m/r)]H_i+u(m/r)H_f\}\delta t} &= e^{-iH_i(1-u(m/r))\delta t}e^{-iH_f(u(m/r))\delta t} + \mathcal{A}^k(m/r),\\
e^{-iH_i(1-u(m/r))\delta t} &= U_{i}^k(m/r) + \mathcal{A}_i^k(m/r),\\
e^{-iH_f(u(m/r))\delta t} &= U_{f}^k(m/r) + \mathcal{A}_f^k(m/r).
\end{aligned}
\end{equation}
So we can express $\widetilde{U}(m/r)$ as the sum of the exact time evolution operator and the additional errors:
\begin{equation}
\begin{aligned}
\widetilde{U}(\frac{m}{r}) &= U_i^{k}(\frac{m}{r})U_f^{k}(\frac{m}{r})\\
&=  (e^{-iH_iu(\frac{m}{r})\delta t}+\mathcal{A}_i^k(\frac{m}{r})) (e^{-iH_f[1-u(\frac{m}{r})]\delta t}+\mathcal{A}_f^k(\frac{m}{r}))\\
&\approx  [e^{-iH_iu(\frac{m}{r})\delta t}e^{-iH_f[1-u(\frac{m}{r})]\delta t}+\mathcal{A}_i^k(\frac{m}{r})e^{-iH_f[1-u(\frac{m}{r})]\delta t}+e^{-iH_iu(\frac{m}{r})\delta t}\mathcal{A}_f^k(\frac{m}{r}) ]\\
&\approx  e^{-iH(\frac{m}{r})\delta t} + \mathcal{A}^k(\frac{m}{r}) + \mathcal{A}^k_i(\frac{m}{r}) + \mathcal{A}^k_f(\frac{m}{r}).
\end{aligned}
\end{equation}

As we defined above in the proof of Theorem 1,
\begin{equation}
R_i(m/r) :=  R_i^{\text{ad}}(m/r)+\frac{1}{a(m/r)\delta t}\langle \phi_i|\widetilde{U}|\phi_0\rangle.
\end{equation}
Notice that as $\widetilde{U}$ is unitary, the normalization factor $a(m/r)=1$. Furthermore, we break $R_i(m/r)$ into four parts:

\begin{equation}
R_i(m/r) = R_i^{\text{ad}}(m/r)+R_i^{\text{pri}}(m/r)+R_i^{\text{sub,i}}(m/r)+R_i^{\text{sub,f}}(m/r),
\end{equation}
each of which is defined as follows:
\begin{equation}
\begin{aligned}
R_i^{\text{ad}}(m/r)&\approx\frac{1}{T}\frac{\dot{u}(m/r)\langle \phi_i(m/r)|H'(m/r)|\phi_0(m/r)\rangle}{g_i(m/r)},\\
R_i^{\text{pri}}(m/r)&:=\frac{1}{\delta t}\langle \phi_i|\mathcal{A}^k(m/r)|\phi_0\rangle,\\
R_i^{\text{sub,i}}(m/r)&:= \frac{1}{\delta t}\langle \phi_i|\mathcal{A}^k_i(m/r)|\phi_0\rangle,\\
R_i^{\text{sub,f}}(m/r)&:=\frac{1}{\delta t}\langle \phi_i|\mathcal{A}^k_f(m/r)|\phi_0\rangle.
\end{aligned}
\end{equation}

After deriving the expression of $R_i(m/r)$, we move to validate the condition in Eq.~(\ref{eq:conditionA}). Note that for Trotterization, $\widetilde{U}$ is unitary. So we have:

\begin{equation}
\begin{aligned}
\Lambda_i(m/r) := \frac{1}{\delta t} \ln\left(\frac{\sqrt{|\langle \phi_0(m/r)|\widetilde{U}^{\dagger}(m/r)\widetilde{U}(m/r)|\phi_0(m/r)\rangle|}}{|\langle \phi_i(m/r)|\widetilde{U}(m/r)|\phi_i(m/r)\rangle|}\right)>0.
\end{aligned}
\end{equation}
Thus the first condition is satisfied. For sufficiently small $\delta t$, the second condition is also satisfied. So the condition in Eq.~(\ref{eq:conditionA}) is satisfied. Using Eq.~(\ref{eq:cal_int}) in the proof of Theorem 1, we get
\begin{equation}
\begin{aligned}
\gamma_i(1)&\approx\frac{R_i(1)e^{i\bar{g}_i(1)T}}{ig_i(1)}-\frac{R_i(0)}{ig_i(0)}e^{-\bar{\Lambda}_i(1)T}
= \gamma_i^{\text{ad}}+\gamma_i^{\text{pri}}+\gamma_i^{\text{sub,i}}+\gamma_i^{\text{sub,f}},
\end{aligned}
\end{equation}
where $\gamma_i^{\text{ad}}$, $\gamma_i^{\text{pri}}$, $\gamma_i^{\text{sub,i}}$, and $\gamma_i^{\text{sub,f}}$ are the contributions from $R_i^{\text{ad}}$, $R_i^{\text{pri}}$, $R_i^{\text{sub,i}}$, and $R_i^{\text{sub,f}}$ respectively. 

Using Theorem 1 and Lemma A.3, we get the following expression of the adiabatic term:
\begin{equation}
\begin{aligned}
\gamma_i^{\text{ad}}&=\frac{R_i^{\text{ad}}(1)e^{iT\bar{g}(1)}}{ig_i(1)}-\frac{R_i^{\text{ad}}(0)}{ig_i(0)}e^{-\bar{\Lambda}_i(1)T},\\
|\gamma_i^{\text{ad}}|&\leq\frac{1}{T}\Big[|\frac{ \langle\dot{\phi_i}(1)|\phi_0(1)\rangle }{g_i(1)}|+|\frac{ \langle\dot{\phi_i}(0)|\phi_0(0)\rangle }{g_i(0)}|\Big]\\
&=\frac{1}{T}\Big[|\frac{\dot{u}(1) \langle\phi_i(1)|H_i|\phi_0(1)\rangle}{g^2_i(1)}|+|\frac{ \dot{u}(0)\langle\phi_i(0)|H_f|\phi_0(0)\rangle}{g^2_i(0)}|\Big]\\
&=O(T^{-1}).
\end{aligned}
\end{equation}

For the sub-Trotterization terms, using Theorem 1, we get 
\begin{equation}
\begin{aligned}
\gamma_i^{\text{sub,i}} &=\frac{R_i^{\text{sub,i}}(1)e^{iT\bar{g}(1)}}{ig_i(1)}-\frac{R_i^{\text{sub,i}}(0)}{ig_i(0)}+O(T^{-1})=-\frac{R_i^{\text{sub,i}}(0)}{ig_i(0)}+O(T^{-1}).\\
\end{aligned}
\end{equation}
Similarly, we also have:
\begin{equation}
\gamma_i^{\text{sub,f}}=\frac{R_i^{\text{sub,f}}(1)}{ig_i(1)}+O(T^{-1}).
\end{equation}
As for the primary Trotterization term, note that $R_i^{\text{pri}}(0)=R_i^{\text{pri}}(1)=0$, the leading term of $\gamma_i^{\text{pri}}$ will vanish. So we need to consider the next order term:
\begin{equation}
\gamma_i^{\text{pri}}=O(T^{-1}\delta t^k).
\end{equation}

To bound the sub-Trotterization terms, we need to estimate the norm of $\mathcal{A}^k_i(x)$ and $\mathcal{A}^k_f(x)$. From Lemma A.2, we know that:
\begin{equation}
\begin{aligned}
&\|\mathcal{A}^k_i(x)\|=O\left ([1-u(x)]^{k+1}\widetilde{\alpha}^{(i)}_\text{comm}\delta t^{k+1}\right ),\\
&\|\mathcal{A}^k_f(x)\|=O\left (u(x)^{k+1}\widetilde{\alpha}^{(f)}_\text{comm}(x)\delta t^{k+1}\right ).\\
\end{aligned}
\end{equation}
Therefore, we finally get the following bound of the infidelity:
\begin{equation}
\begin{aligned}
\mathcal{I} &\leq \sum_{i\neq0} \Big[|\gamma_i^{\text{ad}}|^2+|\gamma_i^{\text{sub,i}}|^2+|\gamma_i^{\text{sub,f}}|^2\Big]\\
&=\beta_{\text{ad}}^2T^{-2}+\sum_{i\neq0} \Big[\Big|\frac{\langle \phi_i(1)|\mathcal{A}^k_i(1)|\phi_0(1)\rangle}{\delta tg_i(1)}\Big|^2+\Big|\frac{\langle \phi_i(0)|\mathcal{A}^k_f(0)|\phi_0(0)\rangle}{\delta tg_i(1)}\Big|^2 \Big]\\
&\leq \beta_{\text{ad}}^2T^{-2}+\Big[\Big|\frac{\|\mathcal{A}^k_i(1)\|}{\delta tg_1(1)}\Big|^2+\Big|\frac{\|\mathcal{A}^k_f(0)\|}{\delta tg_1(0)}\Big|^2\Big]\\
&=\beta_{\text{ad}}^2T^{-2}+\beta_{\text{tro}}^2\delta t^{2k}.
\end{aligned}
\end{equation}
So the infidelity of the state prepared by the AQC process with the k-th order sub-Trotterization scales as $O(\beta_{\text{tro}}^2\delta t^{2k}+\beta_{\text{ad}}^2T^{-2})$. Here $\beta_{\text{tro}}$ and $\beta_{\text{ad}}$ are constants depending on the $H_i$, $H_f$ and the scheduling function $u(x)$.

\end{proof}

\subsection{Proof of Corollary 2.1}\label{sec:proof_of_corollary2.1}

\begin{corollary}[Corollary 2.1 in the main text, Optimization of the AQC process under fixed circuit depth]
While $\delta t$ is sufficiently small and  $T \to +\infty$, parameters $\delta t, T, k$ of the AQC process with sub-Trotterization can be optimized when the circuit depth $d$ is fixed by minimizing the infidelity of the state prepared by the AQC process with the $k$-th order sub-Trotterization.
The optimal parameters and the upper bound of the infidelity are given as follow. 

For $k=1$:
\begin{equation}
\begin{aligned}
&\mathcal{I}_{opt}=O(2\beta_{\text{tro}}\beta_{\text{ad}}Dd^{-1}),\\
&T_{opt}=\sqrt{\frac{1}{D}\frac{\beta_{\text{ad}}}{\beta_{\text{tro}}}}d^{\frac{1}{2}},\\
&\delta t_{opt}=\sqrt{D\frac{\beta_{\text{ad}}}{\beta_{\text{tro}}}}d^{-\frac{1}{2}}.
\end{aligned}
\end{equation}
For $k\geq 2$ and $k$ being an even number:
\begin{equation}
\begin{aligned}
&\mathcal{I}_{opt}=O(\frac{k+1}{k^{\frac{k}{k+1}}}\sqrt[k+1]{(2D\cdot 5^{k/2-1})^{2k}\beta_{\text{tro}}^2\beta_{\text{ad}}^2}\cdot d^{-\frac{2k}{k+1}}),\\
&T_{opt}= \sqrt[k+1]{\frac{k+1}{\sqrt{k}(2D\cdot 5^{k/2}-1)^k}\frac{\beta_{\text{ad}}}{\beta_{\text{tro}}}}\cdot d^{\frac{k}{k+1}},\\
&\delta t_{opt}=\sqrt[k+1]{\frac{k+1}{\sqrt{k}}(2D\cdot 5^{k/2}-1)\frac{\beta_{\text{ad}}}{\beta_{\text{tro}}}}\cdot d^{-\frac{1}{k+1}}.
\end{aligned}
\end{equation}
Here we define:
\begin{equation}
D := \Gamma_i+\Gamma_f.
\end{equation}
\end{corollary}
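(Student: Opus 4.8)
The plan is to collapse the two-parameter infidelity bound of Theorem~2 into a single-variable minimization by imposing the fixed-depth constraint, and then solve that minimization with elementary calculus. Throughout I treat $\beta_{\text{ad}}$ and $\beta_{\text{tro}}$ as constants independent of $T$ and $\delta t$, which is precisely what Theorem~2 supplies, and I work in the admissible regime $T\gg g_{\text{min}}^{-2}$ with $\delta t$ small, so that the bound $\mathcal{I}=O(\beta_{\text{tro}}^2\delta t^{2k}+\beta_{\text{ad}}^2T^{-2})$ is valid.

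First I would express the circuit depth. Writing $r=T/\delta t$ for the number of time steps, the depth is $d=r\,C_k$, where $C_k$ is the number of elementary Pauli-rotation layers produced by one $k$-th order sub-Trotter step applied to $H=H_i+H_f$ with $D=\Gamma_i+\Gamma_f$ total Pauli terms. For $k=1$ a single step is $e^{-iH_i(1-u)\delta t}e^{-iH_f u\delta t}$, so $C_1=D$; for even $k\geq 2$ the recursive Suzuki construction uses $2\cdot 5^{k/2-1}$ symmetric stages, giving $C_k=2D\cdot 5^{k/2-1}$. Pinning down this per-step gate count correctly, and keeping the $k=1$ case separate from the even-order Suzuki hierarchy, is the one genuinely content-bearing step; the remainder is mechanical. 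Crucially, $C_k$ depends only on the Hamiltonian structure and the order, not on $T$ or $\delta t$, so $d$ is proportional to $T/\delta t$ with a $(T,\delta t)$-independent constant.

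Next I would eliminate $\delta t$. From $d=(T/\delta t)C_k$ we obtain $\delta t=TC_k/d$, so for fixed $d$ the infidelity becomes a function of $T$ alone,
\begin{equation}
\mathcal{I}(T)=\beta_{\text{tro}}^2\left(\frac{C_k}{d}\right)^{2k}T^{2k}+\beta_{\text{ad}}^2T^{-2}.
\end{equation}
Setting $d\mathcal{I}/dT=0$ gives the balance condition $k\,\beta_{\text{tro}}^2(C_k/d)^{2k}T^{2k}=\beta_{\text{ad}}^2T^{-2}$, that is,
\begin{equation}
T_{opt}^{2(k+1)}=\frac{\beta_{\text{ad}}^2}{k\,\beta_{\text{tro}}^2 C_k^{2k}}\,d^{2k},
\end{equation}
from which $T_{opt}=O(d^{k/(k+1)})$. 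Back-substituting into $\delta t=TC_k/d$ yields $\delta t_{opt}=O(d^{-1/(k+1)})$, and plugging both into $\mathcal{I}(T)$—using that at the optimum the two terms are in ratio $1:k$, so $\mathcal{I}_{opt}=\tfrac{k+1}{k}\beta_{\text{ad}}^2T_{opt}^{-2}$—gives the advertised $\mathcal{I}_{opt}=O(d^{-2k/(k+1)})$, with the explicit prefactor emerging directly from the closed form of $T_{opt}^{-2}$ and the $(k+1)/k$ balance factor. Specializing $C_1=D$ then reproduces the stated $k=1$ formulas, including $\mathcal{I}_{opt}=O(2\beta_{\text{tro}}\beta_{\text{ad}}Dd^{-1})$.

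The main obstacle is not the calculus but the modeling of the depth constraint: one must justify $d=r\,C_k$ and identify the $5^{k/2-1}$ Suzuki scaling in $C_k$, since everything downstream inherits these constants. A secondary subtlety is that the minimization is only meaningful inside the adiabatic window $T\gg g_{\text{min}}^{-2}$; one should verify that the unconstrained optimum $T_{opt}=O(d^{k/(k+1)})$ lands in this window for large $d$, which it does because $T_{opt}$ grows with $d$. With the depth relation established and Theorem~2 in hand, the result follows immediately.
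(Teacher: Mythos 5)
Your proposal is correct and follows essentially the same route as the paper: both model the depth as $d = r\,C_k$ with $C_1 = D$ and $C_k = 2D\cdot 5^{k/2-1}$ for even $k\ge 2$, eliminate $\delta t$ via the depth constraint, and minimize $\beta_{\text{tro}}^2(C_k/d)^{2k}T^{2k}+\beta_{\text{ad}}^2T^{-2}$ over $T$ (the paper bounds it from below by weighted AM--GM while you differentiate, which is the same minimization). Your explicit optimum, including the $(k+1)/k$ balance giving $\mathcal{I}_{opt}=\tfrac{k+1}{k}\beta_{\text{ad}}^2T_{opt}^{-2}$ and the prefactor $(k+1)k^{-k/(k+1)}a^{1/(k+1)}b^{k/(k+1)}$, matches the paper's derivation up to constant-factor typos in the paper's stated formulas, and your check that $T_{opt}$ grows into the adiabatic window is a small but welcome addition.
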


\begin{proof}
    For $k=1$, the depth of the circuit is $d= D r$. Here for simplicity, we ignore the constant factor in the depth of the circuit. The upper bound of the infidelity is:
\begin{equation}
\begin{aligned}
\beta_{\text{tro}}^2\delta t^{2}+\beta_{\text{ad}}^2T^{-2}&=\beta^2_{tro}\frac{D^2}{d^2}T^2+\beta^2_{\text{ad}}T^{-2}\geq 2\beta_{\text{tro}}\beta_{\text{ad}}Dd^{-1}.\\
\end{aligned}
\end{equation}
The optimal parameters can be obtained at the minimum of the upper bound of the infidelity.

\begin{equation}
\begin{aligned}
&T_{opt}=\sqrt{\frac{1}{D}\frac{\beta_{\text{ad}}}{\beta_{\text{tro}}}}d^{\frac{1}{2}},\\
&\delta t_{opt}=\sqrt{D\frac{\beta_{\text{ad}}}{\beta_{\text{tro}}}}d^{-\frac{1}{2}}.\\
\end{aligned}
\end{equation}
For $k\geq 2$ and $k$ being an even number, the depth of the circuit is $d=2\cdot 5^{k/2-1} D r$. The upper bound of the infidelity is:

\begin{equation}
\begin{aligned}
\beta_{\text{tro}}^2\delta t^{2k}+\beta_{\text{ad}}^2T^{-2}
&=\left (\frac{2D\cdot 5^{k/2-1}}{d}\right )^{2k}\beta_{\text{tro}}^2 T^{2k}+\beta_{\text{ad}}^2T^{-2}\\
&\geq (k+1)\sqrt[k+1]{\left (\frac{2D\cdot 5^{k/2-1}}{\sqrt{k}}\right )^{2k}\beta_{\text{tro}}^2\beta_{\text{ad}}^2}\cdot d^{-\frac{2k}{k+1}}\\
&=\frac{k+1}{k^{\frac{k}{k+1}}}\sqrt[k+1]{\left (2D\cdot 5^{k/2-1}\right )^{2k}\beta_{\text{tro}}^2\beta_{\text{ad}}^2}\cdot d^{-\frac{2k}{k+1}}.\\
\end{aligned}
\end{equation}
The optimal parameters can be obtained at the minimum of the upper bound of the infidelity.

\begin{equation}
\begin{aligned}
T_{opt}&=\sqrt[k+1]{\frac{k+1}{\sqrt{k}(2D\cdot 5^{k/2}-1)^k}\frac{\beta_{\text{ad}}}{\beta_{\text{tro}}}}\cdot d^{\frac{k}{k+1}},\\
\delta t_{opt}&=\sqrt[k+1]{\frac{k+1}{\sqrt{k}}(2D\cdot 5^{k/2}-1)\frac{\beta_{\text{ad}}}{\beta_{\text{tro}}}}\cdot d^{-\frac{1}{k+1}}.\\
\end{aligned}
\end{equation}

Corollary 2.1 also indicates that if the desired circuit depth is small, low-order Trotterization is good enough for high accuracy. While the circuit depth is large, higher-order Trotterization is preferred.
\end{proof}

\subsection{Proof of Theorem 3}\label{sec:proof_of_theorem3}

\begin{theorem}[Theorem 3 in the main text, Strong error cancellation with GQSP]
{
While $\delta t$ is sufficiently small and  $T \to +\infty$, with the order $K$ scaling as $O(\log(r/\delta)/\log\log(r/\delta))$, GQSP can simulate digital adiabatic evolution using $O(\log (L))$ anxiliary qubits while the infidelity $\mathcal I(1)$ scales as $O(\beta_{\text{ad}}^2T^{-2})$, the probability of failure will be under $\delta$. Here $\beta_{\text{ad}}$ is independent of $T$ and $\delta t$.}
\end{theorem}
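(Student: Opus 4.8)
The plan is to derive Theorem 3 as a direct specialization of Theorem 1, exploiting a structural feature unique to GQSP: because the effective operator is a function of the instantaneous Hamiltonian, the algorithmic error generates no leading-order transition between eigenstates, so the simulation coefficient $\beta_{\text{sim}}$ vanishes outright. First I would check that GQSP meets the two hypotheses of Theorem 1 in Eq.~(\ref{eq:conditionA}). This is exactly Lemma A.5: for $\delta t$ small enough the effective gap obeys $\widetilde{\Delta}_i(x) \to \Delta_i(x) > 0$, and choosing $K = O(\alpha\delta t + \log(r/\delta)/\log\log(r/\delta))$ gives $\|\widetilde{U}(m/r) - U(m/r)\| = O(\delta/r)$ via Lemma A.4, whence $\Lambda_i(m/r) = O(1/T)$ and $\exp(-\bar{\Lambda}_i(x)T) = O(1)$ for all $x \in [0,1]$.

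The crucial step is the structural observation that $\widetilde{U}(m/r) = P_K(e^{i\arccos(H(m/r)/\alpha)})$ is a Laurent polynomial in $W$ and hence a function of $H(m/r)$ alone. It is therefore simultaneously diagonalized with $H(m/r)$ in the instantaneous eigenbasis $\{|\phi_i(m/r)\rangle\}$, so its off-diagonal matrix elements vanish identically,
\[
\langle\phi_i(m/r)|\widetilde{U}(m/r)|\phi_0(m/r)\rangle = 0, \qquad i \neq 0.
\]
Recalling the decomposition $R_i(m/r) = R_i^{\text{ad}}(m/r) + (a(m/r)\delta t)^{-1}\langle\phi_i|\widetilde{U}|\phi_0\rangle$ from the proof of Theorem 1, the algorithmic contribution drops out and $R_i(m/r) = R_i^{\text{ad}}(m/r)$. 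In the language of Theorem 1, $U_{\text{res}}$ carries no off-diagonal element in the eigenbasis, so $\beta_{\text{sim}} = 0$.

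Substituting $\beta_{\text{sim}} = 0$ into the bound of Theorem 1, or directly into the boundary-term formula $\gamma_i(1) \approx R_i(1)e^{i\bar{g}_i(1)T}/(ig_i(1)) - R_i(0)e^{-\bar{\Lambda}_i(1)T}/(ig_i(0))$ of Eq.~(\ref{eq:cal_int}) with $R_i = R_i^{\text{ad}} = O(T^{-1})$, yields $\mathcal{I}(1) = \sum_{i\neq 0}|\gamma_i(1)|^2 = O(\beta_{\text{ad}}^2 T^{-2})$. For the resource claims, Lemma A.4 builds the block encoding with $O(\log L)$ ancilla qubits, independent of the step index $m$, so the full circuit uses $O(\log L)$ auxiliary qubits. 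Each step succeeds with probability at least $(1-\epsilon)^2$; over $r$ steps the failure probability is at most $1 - (1-\epsilon)^{2r}$, so taking $\epsilon = \delta/(2r)$ keeps it below $\delta$, and since $\delta t$ is a fixed small constant the $\alpha\delta t$ term is absorbed, giving $K = O(\log(r/\delta)/\log\log(r/\delta))$.

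I expect the main obstacle to be making the diagonality argument airtight despite the non-unitarity of the truncated operator. The point is that truncating the Bessel series moves the eigenvalues $e^{-i\lambda'_j\delta t}$ off the unit circle but leaves the eigenvectors $|\phi_i(m/r)\rangle$ unchanged, so the off-diagonal elements still vanish and the non-unitarity enters only through the decay rate $\Lambda_i$, which is already controlled by the first hypothesis of Theorem 1 through Lemma A.5. Cleanly separating these two effects---diagonal structure annihilating the transition while the amplitude decay remains benign---is the heart of the argument.
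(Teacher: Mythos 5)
Your proposal is correct and follows essentially the same route as the paper: verify the hypotheses of Theorem~1 via Lemma~A.5, observe that $\widetilde{U}(x)=P_K(e^{i\arccos(H(x)/\alpha)})$ is a function of $H(x)$ and hence diagonal in the instantaneous eigenbasis so that $\beta_{\text{sim}}=0$, and invoke Lemma~A.4 for the $O(\log L)$ ancilla count and the $\epsilon=\delta/(2r)$ per-step accuracy that keeps the total failure probability below $\delta$. Your added discussion of why truncation-induced non-unitarity affects only the decay rate $\Lambda_i$ and not the diagonal structure is a sound elaboration of the same argument, not a departure from it.
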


\begin{proof}
    Using Lemma A.5, we know that the condition in Eq.~(\ref{eq:conditionA}) is satisfied.
Using Theorem 1, we get the infidelity $\mathcal{I}$ scales as:
\begin{equation}
    \mathcal{I}=O(\beta_{\text{ad}}^2\frac{1}{T^2}+ \beta_{\text{sim}}^2\delta t^{2k}).
\end{equation}
Here $\beta_{\text{ad}}$ and $\beta_{\text{sim}}$ are coefficients for non-adiabatic error and Trotter error, which are irrelevant to $T$ and $\delta t$:
\begin{equation}
\begin{aligned}
\beta_{\text{ad}}^2&=\sum_{j\neq 0}\Big[|\frac{\dot{u}(1) \langle\phi_i(1)|H'(1)|\phi_0(1)\rangle }{g_{j}(1)^2}|^2+|\frac{\dot{u}(0) \langle\phi_i(0)|H'(0)|\phi_0(0)\rangle }{g_{j}(0)^2}|^2\Big],\\
\beta_{\text{sim}}^2&=\sum_{j\neq 0}\Big[\Big|\frac{\langle\phi_i(1)|U_{\text{res}}(1)|\phi_0(1)\rangle }{g_{j}(1)}\Big|^2+\Big|\frac{ \langle\phi_i(0)|U_{\text{res}}(0)|\phi_0(0)\rangle }{g_{j}(0)}\Big|^2\Big].\\
\end{aligned}
\end{equation}
Note that $U(x)$ and $\widetilde{U}(x)$ are all diagonal under the basis $\{\phi_i(x)\}$, so $\langle\phi_i(x)|U_{\text{res}}(x)|\phi_i(x)\rangle=0$ and $\beta_{\text{sim}}=0$. For this reason, the Infidelity $\mathcal{I} = \sum_{i\neq 0}|\gamma_i|^2=O(\beta_{\text{ad}}^2T^{-2})$. Furthermore, we can learn from Lemma A.4 that the probability of failure is under $\delta$ and $O(\log (L))$ anxiliary qubits are needed.
\end{proof}

\subsection{Proof of Theorem 4}\label{sec:proof_of_corollary3.1}

\begin{theorem}[Theorem 4 in the main text, Exponential error cancellation with $Q$-th order path]
{While $\delta t$ is sufficiently small and  $T \to +\infty$, the infidelity $\mathcal I(1)$ indigital adiabatic evolution using GQSP scales as $O(\beta^2_{\text{GQSP}}T^{-2Q-2})$ if we choose the scheduling function $u(x)$ as the $Q$-th order path. Moreover, if we choose an $\infty$-order path, the infidelity scales as $O(\beta^2_{\text{GQSP}}e^{-T})$. Here $\beta_{\text{GQSP}}$ is a coefficient that depends on $K$, the path, and the Hamiltonian, but independent of $T$. }
\end{theorem}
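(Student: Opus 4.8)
The plan is to build directly on the GQSP analysis of Theorem 3 and the oscillatory-integral estimate of Lemma A.1, then use the boundary conditions of the $Q$-th order path to annihilate the leading terms of the asymptotic expansion. First I would recall that for GQSP the effective operator $\widetilde{U}(x)$ is diagonal in the instantaneous eigenbasis $\{\phi_i(x)\}$, so the algorithmic transition amplitude vanishes and $\beta_{\text{sim}}=0$ exactly as in the proof of Theorem 3; moreover Lemma A.5 gives $\bar{\Lambda}_i(x)=O(1/T)$, so the decay-rate factor appearing in Eq.~(\ref{eq:cal_int}) contributes only a smooth $O(1)$ modulation. Hence the sole surviving contribution to $\gamma_i(1)$ is the non-adiabatic one, and from the integral representation in the proof of Theorem 1 I would write (dropping the global phase and the $O(\delta t)$ factor)
\[
\gamma_i(1) \approx \int_0^1 f_i(x)\, e^{T\phi_i(x)}\, dx, \qquad \phi_i(x) = i\bar{\Delta}_i(x)x + \bar{\Lambda}_i(x)x - \bar{\Lambda}_i(1),
\]
with amplitude $f_i(x) = \dot{u}(x)\,\langle\phi_i(x)|H'(x)|\phi_0(x)\rangle / g_i(x)$, recalling that $H'(x)=H_f-H_i$ is constant. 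The crucial structural fact is that $f_i$ carries the overall factor $\dot{u}(x)$, while $\phi_i'(x)=\Lambda_i(x)+i\widetilde{\Delta}_i(x)$ has strictly positive imaginary part, so Lemma A.1 is applicable.

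Next I would apply Lemma A.1 with $\lambda=T$. Its $k$-th boundary term involves $\mathcal{L}_{\phi_i}^{k-1}\big(f_i/(i\phi_i')\big)$ evaluated at $x=0,1$, where $\mathcal{L}_{\phi_i}=(i\phi_i')^{-1}\,d/dx$ applies $k-1$ derivatives together with smooth prefactors. For a $Q$-th order path the conditions $u^{(q)}(0)=u^{(q)}(1)=0$ for $1\le q\le Q$ imply that $\dot{u}$ together with its first $Q-1$ derivatives vanish at both endpoints, so $\dot{u}(x)=O(x^Q)$ there and the smooth combination $f_i/(i\phi_i')$ vanishes to order $Q$ at each endpoint (the division by $i\phi_i'\approx -g_i$ and the smooth matrix element do not change this order, provided the gap stays open, which $T\gg g_{\text{min}}^{-2}$ guarantees). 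Consequently every boundary term with $k\le Q$ vanishes identically, and the first surviving term appears at $k=Q+1$, yielding $\gamma_i(1)=O(T^{-(Q+1)})$. Summing $\mathcal{I}(1)=\sum_{i\neq 0}|\gamma_i(1)|^2$ then gives the claimed $O(\beta_{\text{GQSP}}^2 T^{-2Q-2})$, with $\beta_{\text{GQSP}}$ collecting the endpoint derivatives of $f_i/(i\phi_i')$ and the gap factors; this is where the dependence on $K$ (through $\widetilde{\Delta}_i$), the path, and the Hamiltonian enters, with no $T$-dependence.

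For the $\infty$-order path every boundary term vanishes, so the finite integration-by-parts expansion already shows that $\gamma_i(1)$ decays faster than any fixed power of $T$. Upgrading this super-polynomial decay to the genuine exponential rate $O(e^{-T})$ is the main obstacle, and it cannot be reached from Lemma A.1 alone: the remainder after $l$ steps involves the $l$-th derivative of $f_i$, whose size grows with $l$. My plan is instead to balance the number of integration-by-parts steps against $T$ using Gevrey-type bounds on the derivatives of the explicit flat path $u(x)=C^{-1}\int_0^x e^{-1/(x'(1-x'))}\,dx'$ near its endpoints, or equivalently to deform the contour into the complex plane, exploiting the analyticity of the amplitude on the open interval and performing a steepest-descent estimate. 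This optimal-truncation step is precisely where the strong flatness of the $\infty$-order path (rather than mere smoothness) is essential, and the resulting exponential suppression is consistent with the known exponentially small non-adiabatic errors for Gevrey schedules~\cite{rezakhani2010accuracy, lin2020near}, which I would invoke to control the endpoint contributions.
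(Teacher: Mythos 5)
Your proposal follows essentially the same route as the paper's own proof: you take the oscillatory-integral representation of $\gamma_i(1)$ from the proof of Theorem~1 with only the adiabatic amplitude surviving (since the GQSP operator is diagonal in the instantaneous eigenbasis, so $\beta_{\text{sim}}=0$), apply Lemma~A.1 with $\lambda = T$, use the vanishing endpoint derivatives of the $Q$-th order path to annihilate the leading boundary terms, and defer the $\infty$-order exponential case to the Gevrey-type analysis of Ref.~\cite{lin2020near}. In fact your bookkeeping is the more careful one: boundary terms with $k\le Q$ vanish and the first survivor sits at $k=Q+1$, giving $\mathcal{I}=O(T^{-2Q-2})$ consistent with the theorem statement, whereas the paper's written proof contains an off-by-one slip (asserting only the first $Q-1$ terms vanish and concluding $O(T^{-2Q})$).
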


\begin{proof}
    From Theorem 1 we know that:
\begin{equation}
\begin{aligned}
\gamma_i(1)&=Te^{-i\bar{E}_i(1)T}\int_0^1R_i^{\text{ad}}(x)e^{[\bar{\Lambda}_i(x)x-\bar{\Lambda}_i(1)+i\bar{\Delta}_i(x)x]T} dx=\int_0^1\frac{\dot{u}\langle\phi_i(x)|H'|\phi_0(x)\rangle}{\Delta_{i}(x)}e^{[\bar{\Lambda}_i(x)x-\bar{\Lambda}_i(1)+i\bar{\Delta}_i(x)x]T} dx.
\end{aligned}
\end{equation}
Here we ignore the global phase factor $e^{-i\bar{E}_i(1)T}$. To simplify the notation, we define:

\begin{equation}
\begin{aligned}
&\psi(x):=\frac{\dot{u}\langle\phi_i(x)|H'|\phi_0(x)\rangle}{\Delta_{i}(x)},\\
&\phi(x):=\bar{\Lambda}_i(x)x-\bar{\Lambda}_i(1)+i\bar{\Delta}_i(x)x.
\end{aligned}
\end{equation}
Using Lemma A.1, we can expand $\gamma_i(1)$ as the asymptotic series of $T^{-1}$:

\begin{equation}
\begin{aligned}
\gamma_i(1) =
\int_0^1 e^{\phi(x)T}\psi(x)dx = \sum_{q=1}^{Q}\frac{(-1)^{q-1}}{T^q}e^{\phi(x)T}\mathcal{L}^{q-1}_{\phi}\frac{\psi(x)}{i\phi'(x)}\Big|_0^1+O(\frac{1}{T^{Q+1}}).\\
\end{aligned}
\end{equation}
Here, $\mathcal{L}_{\phi}:= \frac{1}{i\phi'(x)}\frac{d}{dx}$. From Definition 1 we know that

\begin{equation}
u^{(q)}(x)|_{x=0}=u^{(q)}(x)|_{x=0}=0, \quad q=1,2,\cdots, Q-1.
\end{equation}
Therefore, we have $\mathcal{L}^{q-1}_{\phi}\frac{\psi(x)}{i\phi'(x)}\Big|_0=\mathcal{L}^{q-1}_{\phi}\frac{\psi(x)}{i\phi'(x)}\Big|_1=0$, for $q=1,2,\cdots, Q-1$. So the first $Q-1$ terms in the asymptotic series vanish. Thus we have:

\begin{equation}
|\gamma_i| = O(\beta_{i}T^{-Q}).
\end{equation}
Here $\beta^2_{i}=\Big|\mathcal{L}^{Q-1}_{\phi}\frac{\psi(x)}{i\phi'(x)}|_{x=0}\Big|^2+\Big|\mathcal{L}^{Q-1}_{\phi}\frac{\psi(x)}{i\phi'(x)}|_{x=1}\Big|^2$. Then we can get the bound of the infidelity as:
\begin{equation}
    \mathcal{I} = \sum_{i\neq 0}|\gamma_i|^2=O(\beta^2_{\text{GQSP}}T^{-2Q}).
\end{equation}
Here $\beta_{\text{GQSP}}^2=\sum_i \beta_i^2$.
For an $\infty$ order path, the proof is similar to that in~\cite{lin2020near}, so we omit it.
\end{proof}

\section{How Theorem 2 Reduces to ``Self-Healing'' Theorem}

While $\delta t$ is sufficiently small and  $T \to +\infty$, for the AQC process with only first-order primary Trotterization or the AQC process with first-order primary Trotterization and $k$-th order sub-Trotterization, the components of $H_i$ and $H_f$ commutes separately, the infidelity of the prepared state by scales as $O(\delta t^2T^{-2}+T^{-2})$.

\begin{proof}
    From the given condition we know that $R_i^{\text{sub,i}}=R_i^{\text{sub,f}}=0$, $R_i^{\text{pri}}=O(\delta t)$, and $R_i^{\text{pri}}(0)=R_i^{\text{pri}}(1)=0$. Using Lemma A.1 and Eq.~(\ref{eq:cal_int}) in the proof of Theorem 1  we can get:
\begin{equation}
\gamma_i^{\text{sub,i}}=\gamma_i^{\text{sub,f}}=0,~
\gamma_i^{\text{pri}}=O\left(\frac{1}{T}e^{iT[\bar{\Delta}(x)]x}\mathcal{L}_{\bar{\Delta}(x)x}\frac{R_i^{\text{pri}}(x)}{i\widetilde{\Delta}(x)}\Big|_0^1\right)=O(T^{-1}\delta t).
\end{equation}
Using Theorem 1 and Lemma A.3, we get the adiabatic term is:
\begin{equation}
\begin{aligned}
\gamma_i^{\text{ad}}&=\frac{R_i^{\text{ad}}(1)e^{iT\bar{\Delta}(1)}}{ig_i(1)}-\frac{R_i^{\text{ad}}(0)}{ig_i(0)}e^{-\bar{\Lambda}_i(1)T},\\
|\gamma_i^{\text{ad}}|&\leq\frac{1}{T}\Big[|\frac{ \langle\dot{\phi_i}(1)|\phi_0(1)\rangle }{g_i(1)}|+|\frac{ \langle\dot{\phi_i}(0)|\phi_0(0)\rangle }{g_i(0)}|\Big]\\
&=\frac{1}{T}\Big[|\frac{\dot{u}(1) \langle\phi_i(1)|H_i|\phi_0(1)\rangle}{g^2_i(1)}|+|\frac{ \dot{u}(0)\langle\phi_i(0)|H_f|\phi_0(0)\rangle}{g^2_i(0)}|\Big]\\
&=O(T^{-1}).
\end{aligned}
\end{equation}
Here, the Infidelity $\mathcal{I} = \sum_{i\neq 0}|\gamma_i|^2=O(\delta t^2T^{-2}+T^{-2})$.
\end{proof}

\end{document}